\DeclareMathOperator*{\ori}{ori}
\DeclareMathOperator*{\ter}{ter}
\DeclareMathOperator*{\rev}{rev}
\DeclareMathOperator*{\sgn}{sgn}
\DeclareMathOperator*{\TSP}{TSP}
\DeclareMathOperator*{\Diam}{Diam}
\DeclareMathOperator*{\pos}{pos}
\DeclareMathOperator{\mt}{mt}
\DeclareMathOperator*{\cost}{cost}
\theoremstyle{plain}
\newtheorem{theo}{Theorem}[section]
\newtheorem{lemma}[theo]{Lemma}
\newtheorem{claim}[theo]{Claim}
\newtheorem{fact}[theo]{Fact}
\theoremstyle{definition}
\newtheorem{remark}[theo]{Remark}
\title{Online Deterministic Minimum Cost Bipartite Matching with Delays on a Line}
\author{Tung-Wei Kuo}
\date{
Department of Computer Science, National Chengchi University, Taiwan\\
twkuo@cs.nccu.edu.tw 
}
\begin{document}
\maketitle

\begin{abstract}
We study the online minimum cost bipartite perfect matching with delays problem. 
In this problem, $m$ servers and $m$ requests arrive over time, and an online algorithm can delay the matching 
between servers and requests by paying the delay cost. The objective is to minimize the total distance and 
delay cost.
When servers and requests lie in a known metric space, there is a randomized $O(\log n)$-competitive algorithm, 
where $n$ is the size of the metric space. 
When the metric space is unknown a priori, Azar and Jacob-Fanani 
proposed a deterministic 
$O\left(\frac{1}{\epsilon}m^{\log\left(\frac{3+\epsilon}{2}\right)}\right)$-competitive 
algorithm for any fixed $\epsilon > 0$. 
This competitive ratio is tight when $n = 1$ 
and becomes $O(m^{0.59})$ for sufficiently small $\epsilon$.

In this paper, we improve upon the result of Azar and Jacob-Fanani for the case where servers and requests 
are on the real line, providing a deterministic $\tilde{O}(m^{0.5})$-competitive algorithm. 
Our algorithm is based on the Robust Matching (RM) algorithm proposed by Raghvendra for the 
minimum cost bipartite perfect matching problem.  
In this problem, delay is not allowed, and all servers arrive in the beginning. 
When a request arrives, the RM algorithm immediately 
matches the request to a free server based on the request's minimum 
$t$-net-cost augmenting path,  
where $t > 1$ is a constant.  
In our algorithm, we delay the matching of a 
request until its waiting time  
exceeds its minimum $t$-net-cost divided by $t$. 
\end{abstract}
\clearpage

\section{Introduction}
Consider an online gaming platform where players are paired for gameplay. 
To improve the gaming experience, players with similar skill ratings should be matched together. 
However, when a new player joins, a suitable matching may not be available immediately. 
In this case, it is essential to delay the matching process, in the hope 
of finding a better matching in the near future. 
Clearly, the waiting time and the similarity between matched players should be considered 
jointly, and a natural approach is to minimize the sum of both terms.

The above problem is captured by the Minimum cost Perfect Matching with Delays (MPMD) 
problem~\cite{emek2016online}. 
In the MPMD problem, demands arrive over time, and their similarities are modeled by 
a metric space. When a demand arrives, an online algorithm has the option to postpone 
the matching process by incurring a delay cost. 
The objective is to minimize the sum of the total delay time (i.e., delay cost) 
and the total distance between matched demands in the metric space (i.e., distance cost).   

In numerous matching applications, entities can only be matched if they belong to different types 
(e.g., teacher-student, donor-donee, buyer-seller, and driver-passenger). 
These binary classifications motivate the Minimum cost Bipartite Perfect 
Matching with Delays (MBPMD) problem~\cite{azar2017polylogarithmic, ashlagi2017min}. 
In the MBPMD problem, there are two types of demands, servers and requests. 
An online algorithm has to match each request to a server. 
Like the MPMD problem, the objective is to minimize the total distance and delay cost. 

In this paper, we study the MBPMD problem on a line, where requests and servers are positioned 
on the real line. For example, skiers (requests) should be matched to skis (servers) of approximately 
their height~\cite{antoniadis2014competitive}. Another example is matching buyers (requests) and sellers (servers) based on their stated 
prices. In these examples, requests and servers are represented as numbers on the real line, corresponding to 
heights or prices. 
We analyze our algorithm using the standard notion of competitive ratio. In particular, 
an online algorithm is said to be $c$-competitive ($c \geq 1$) 
if for any input, the cost of the algorithm is at most $c$ times the cost of the optimal offline algorithm.

\paragraph*{Background.}
For the MPMD problem, Emek et al. proposed a randomized
$O(\log^2 n + \log \Delta)$-competitive algorithm~\cite{emek2016online}, 
where $n$ is the number of points in the metric space and $\Delta$ is 
the aspect ratio of the metric space.  
Azar et al. then proposed a randomized $O(\log n)$-competitive algorithm, 
and proved that the competitive ratio for any randomized algorithm is 
$\Omega(\sqrt{\log n})$~\cite{azar2017polylogarithmic}. 
Ashlagi et al. further improved this lower bound to
$\Omega\left(\frac{\log n}{\log \log n}\right)$~\cite{ashlagi2017min}. 
All the above randomized algorithms used the celebrated result of 
Fakcharoenphol et al.~\cite{fakcharoenphol2003tight} to transform the original 
metric space into a distribution over Hierarchically Separated Trees (HSTs). 
As a result, these algorithms need to know the metric space in advance. 

The algorithms in~\cite{emek2016online} and \cite{azar2017polylogarithmic} 
are randomized. For offline problems, we can repeatedly execute a randomized 
algorithm until we find a satisfactory solution. However, for online problems, 
we can only execute an algorithm once, and the output cannot be changed. 
Thus, a more robust approach for online problems is to design deterministic algorithms. 

For the MPMD problem, Bienkowski et al. first proposed a deterministic 
$O(m^{2.46})$-competitive algorithm, where $m$ is the number of demands to 
be matched~\cite{bienkowski2017match}. Bienkowski et al. then proposed 
a deterministic $O(m)$-competitive algorithm~\cite{bienkowski2018primal}. 
Finally, Azar and Jacob-Fanani proposed a deterministic 
$O(\frac{1}{\epsilon}m^{\log(\frac{3+\epsilon}{2})})$-competitive 
algorithm for any fixed $\epsilon > 0$~\cite{azar2020deterministic}. 
For small enough $\epsilon$, the competitive ratio becomes $O(m^{0.59})$.  
Unlike the previous randomized algorithms, the above three deterministic 
algorithms do not need to know the metric space in advance. 

When the metric space is a tree, Azar et al. also proposed a deterministic 
$O(n)$-competitive algorithm in~\cite{azar2017polylogarithmic}. 
Moreover, when $n = 2$, Emek et al. proposed a deterministic 3-competitive 
algorithm, and proved that 3 is the best possible 
competitive ratio~\cite{emek2019minimum}.

For the MBPMD problem, Azar et al. first proposed a randomized  
$O(\log n)$-competitive algorithm, and proved that any randomized algorithm 
has a competitive ratio of 
$\Omega(\log^{1/3} n)$~\cite{azar2017polylogarithmic}.
This lower bound is further improved to 
$\Omega\left(\sqrt{\frac{\log n}{\log \log n}}\right)$~\cite{ashlagi2017min}. 
For deterministic algorithms, Bienkowski et al. first proposed an 
$O(m)$-competitive algorithm~\cite{bienkowski2018primal} for the MBPMD problem.  
Azar and Jacob-Fanani then proposed an 
$O(\frac{1}{\epsilon}m^{\log(\frac{3+\epsilon}{2})})$-competitive 
algorithm for any fixed $\epsilon > 0$~\cite{azar2020deterministic}. 
For small enough $\epsilon$, the competitive ratio becomes $O(m^{0.59})$. 
All the above algorithms are based on the algorithms for the 
MPMD problem. Moreover, the competitive ratios in~\cite{bienkowski2018primal} 
and \cite{azar2020deterministic} are tight when the metric space is a 
line.\footnote{In fact, 
the competitive ratios in \cite{bienkowski2018primal} 
and \cite{azar2020deterministic} are tight when $n = 2$ and 
$n = 1$, respectively.} In summary, prior to our work,  
the best known competitive ratio for the deterministic MBPMD problem on a line was $O(m^{0.59})$.

\paragraph*{Our Contribution and Techniques.}
In this paper, we introduce a deterministic $\tilde{O}(m^{0.5})$-competitive 
algorithm for the MBPMD problem on a line, improving upon the $O(m^{0.59})$-competitive 
algorithm of~\cite{azar2020deterministic}. 
Specifically, we have the following result.
\begin{theo}\label{thrm: main}
There is a deterministic $O(\sqrt{m}\log^2 m)$-competitive algorithm 
for the MBPMD problem on a line.
\end{theo}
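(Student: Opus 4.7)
The plan is to fix a parameter $t > 1$ and split the algorithm's cost as $\mathrm{ALG} = \mathrm{ALG}_{\mathrm{dist}} + \mathrm{ALG}_{\mathrm{delay}}$, where the two summands are the total distance between matched pairs and the total waiting time, respectively. Writing $\mathrm{OPT}$ for the optimum MBPMD cost, I will prove two separate bounds, $\mathrm{ALG}_{\mathrm{dist}} \leq O(t \cdot \mathrm{polylog}(m)) \cdot \mathrm{OPT}$ and $\mathrm{ALG}_{\mathrm{delay}} \leq O((m/t) \cdot \mathrm{polylog}(m)) \cdot \mathrm{OPT}$, and then choose $t = \Theta(\sqrt{m})$ so that both terms balance to $O(\sqrt{m}\log^2 m) \cdot \mathrm{OPT}$.

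The delay bound should follow almost directly from the matching rule. When request $r$ is matched, its waiting time $w_r$ equals the then-current minimum $t$-net-cost of an augmenting path from $r$ divided by $t$; summing over all $m$ requests yields $\mathrm{ALG}_{\mathrm{delay}} = (1/t) \sum_r \mathrm{tnc}_r$, where $\mathrm{tnc}_r$ denotes the $t$-net-cost of the path used to match $r$. I would bound $\sum_r \mathrm{tnc}_r$ by a telescoping argument: added edges contribute positively, swapped-out edges contribute negatively and can be amortized against later paths, and the remaining positive mass is tied to the algorithm's distance cost and, via Raghvendra's invariants, to $\mathrm{OPT}$. A path on a line has at most $m$ edges, which is the source of the factor $m$ in this bound.

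The distance bound is the heart of the argument and, I expect, the main obstacle. The plan is to lift Raghvendra's dual-based analysis of the offline RM algorithm to the delayed, online-server setting on a line. Three adaptations are needed: (i) servers themselves arrive online, so the dual potential must be time-indexed and grown as each server appears; (ii) because of the delay rule, the order in which requests are matched differs from their order of arrival, so the ``matched edges are tight'' invariant has to be re-established after every deferred match; (iii) the adversary can also delay, so the comparison must be to $\mathrm{OPT}$ rather than to an optimal offline perfect matching on the observed point set. On a line, augmenting paths have a nested-interval structure, and I expect to use this one-dimensional geometry to bound the cascading depth of each augmentation by $O(\log m)$, which supplies the $O(t \cdot \mathrm{polylog}(m))$ factor in the distance bound.

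The delicate step will be to verify that the delayed matching rule does not destabilize the dual analysis. I will need to show that the minimum $t$-net-cost of a free augmenting path from $r$ is monotone non-increasing during $r$'s waiting interval, so that the matching trigger is well-defined, and that the augmenting path actually used by the algorithm enjoys essentially the same structural guarantees as one selected by an offline RM run. With these in hand, the balance $t = \Theta(\sqrt{m})$ completes the proof of Theorem~\ref{thrm: main}.
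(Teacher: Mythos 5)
Your plan diverges from the paper in a way that is not merely stylistic but breaks the argument. The crucial observation you are missing is that once delay is introduced, the relevant metric is not the line but the \emph{two-dimensional Time-Augmented plane}: OPT's cost is exactly the total Manhattan distance $\sum_{(r,s)\in M^{OPT}}(|\pos(r)-\pos(s)| + |a(r)-a(s)|)$, so the algorithm is effectively solving a bipartite matching problem in $\mathbb{R}^2$. The paper keeps the Raghvendra parameter (your $t$, the paper's $\gamma$) fixed at the constant $3$ and obtains $\sqrt{m}$ entirely from the Nayyar--Raghvendra bound $O(m^{1-1/d}\log^2 m)$ specialized to $d=2$; the ball/cluster/Vitali covering argument in Appendix B is run in the TA plane, and the $\sqrt{m}$ is $|C|^{1-1/2}$ from a TSP-vs-diameter estimate for a point set in the plane. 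Your plan to bound $\mathrm{ALG}_{\mathrm{dist}}$ by $O(\log m)\cdot\mathrm{OPT}$ via one-dimensional nested-interval structure does not apply here, because the intervals live in two dimensions once arrival times are part of the metric; and your plan to scale $t=\Theta(\sqrt{m})$ to balance two bounds $O(t\cdot\mathrm{polylog})$ and $O((m/t)\cdot\mathrm{polylog})$ assumes a linear-in-$t$ dependence that the RM analysis does not provide. In the paper, both distance and delay are bounded by the same quantity $\sum_i\varphi(P_i^*)$ (Lemma~\ref{lemma: DUB} for distance, and the ready rule gives delay $=\sum_i\varphi(P_i^*)/\gamma$), and a single bound $\sum_i\varphi(P_i^*)=O(\sqrt{m}\log^2 m)\,D(M^{OPT})$ is proved; there is no trade-off to optimize.

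There is also a concrete sign error: you claim the minimum $t$-net-cost from a waiting request should be \emph{non-increasing} during its waiting interval so the trigger is well-defined. In fact the paper proves the opposite and needs the opposite. The whole point of Lemma~\ref{lemma: AU} (and of Case~2 handling server arrivals via a substitute MV-server path) is that the real minimum $\gamma$-net-cost \emph{cannot decrease} past the current waiting time. If it could drop, a request could find itself with waiting time strictly larger than $\varphi(P_i)/\gamma$ at the moment it is matched, and the identity $\mt^{VRM}(r_i,s_j)-a(r_i)=\varphi(P_i^*)/\gamma$ that drives the delay bound (Eq.~\eqref{eq: costUB}) would fail. Establishing this non-decrease under augmentations is the technical heart of the paper, done by partitioning the augmenting path at its intersections with $P^*$ and lower-bounding the $\gamma$-net-cost of each subpath (Appendix~\ref{appendix: lemma: AU}); your proposal does not contain any analogue of this and, as stated, would push the monotonicity in the wrong direction.
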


Our algorithm is based on the Robust Matching (RM) algorithm proposed by 
Raghvendra for the online Minimum cost Bipartite Perfect Matching (MBPM) 
problem~\cite{raghvendra2016robust}. In the MBPM problem, all servers arrive in the beginning, 
and an online algorithm must match a request immediately after it arrives. 
The objective is to minimize the total distance cost of the matching. 
Nayyar and Raghvendra~\cite{nayyar2017input} proved that the competitive ratio of the RM algorithm 
for any $d$-dimensional Euclidean metric space is $O(n^{1-1/d}\log^2 n)$. 
Raghvendra further proved that for one-dimensional Euclidean metric space, RM algorithm is 
$O(\log n)$-competitive~\cite{raghvendra2018optimal}.
From a bird's eye view, RM algorithm maintains an offline matching 
$M^{OFF}$ and an online matching $M^{RM}$, which is the real output matching. 
When request $r_i$ arrives, RM algorithm computes an $M^{OFF}$-augmenting path $P_i$ from  
$r_i$ to some free server $s_j$. RM algorithm then adds $(r_i, s_j)$ to $M^{RM}$ and 
augments $M^{OFF}$ by $P_i$. 

Specifically, $P_i$ is such that minimizes the 
\textit{$\gamma$-net-cost}\footnote{In~\cite{raghvendra2016robust}, this cost is referred to as the $t$-net-cost. 
However, because $t$ denotes time in this paper, we change $t$-net-cost to $\gamma$-net-cost.} 
among all $M^{OFF}$-augmenting paths from $r_i$ to a free server.
When $M^{OFF}$ is augmented by a path $P$, edges shared by $P$ and $M^{OFF}$ are removed from $M^{OFF}$, 
and other edges in $P$ are added to $M^{OFF}$. 
The $\gamma$-net-cost of $P$ is the total distance of the edges added to $M^{OFF}$ 
(multiplied by $\gamma$) minus the total distance of the edges removed from $M^{OFF}$. 

There are two major differences between the MBPM problem and the MBPMD problem. 
Specifically, in the MBPM problem considered by the RM algorithm: 
\begin{enumerate}
\item All servers arrive in the beginning.
\item Once a request arrives, the request must be matched immediately. Thus, 
      the objective function does not consider delay cost. 
\end{enumerate}

To address the above differences, we first introduce a Moving Virtual (MV) server $\widetilde{s}_i$ 
for every request $r_i$.  
Specifically, we consider the Time-Augmented (TA) plane~\cite{bienkowski2017match,azar2020deterministic} 
that adds the time axis to the original one-dimensional space. 
Thus, the TA plane is two-dimensional. 
When $r_i$ arrives, $\widetilde{s}_i$ and $r_i$ are at the same point in the TA plane, 
with the time-coordinate being $r_i$'s arrival time.  
The time-coordinate of $r_i$ is fixed, while 
the time-coordinate of $\widetilde{s}_i$ is always the current time. 
Thus, the distance between $r_i$ and $\widetilde{s}_i$ in the TA plane is always $r_i$'s current waiting time. 
For any real server, its time-coordinate is fixed at its arrival time.

For each request $r_i$, our algorithm maintains two $M^{OFF}$-augmenting paths, 
a \textit{real} augmenting path $P_i$ 
and a \textit{virtual} augmenting path $\widetilde{P}_i$. 
$P_i$ is such that minimizes the $\gamma$-net-cost among all augmenting paths from $r_i$ 
to a real free server in the TA plane,  
and $\widetilde{P}_i$ is such that minimizes the $\gamma$-net-cost among all augmenting paths from 
$r_i$ to an MV server in the TA plane, with the last edge connecting a request $r_p$ 
(possibly different from $r_i$) to $r_p$'s MV server $\widetilde{s}_p$. 
 
Initially, the virtual minimum $\gamma$-net-cost (i.e., the $\gamma$-net-cost of $\widetilde{P}_i$)  
is zero (since $\widetilde{P}_i$ contains only $r_i$ and $\widetilde{s}_i$ initially) and is thus less than or equal to the real minimum $\gamma$-net-cost (i.e., $P_i$'s $\gamma$-net-cost).
After a server arrives or after $M^{OFF}$ is augmented by another 
request's augmenting path, $P_i$ and $\widetilde{P}_i$ may change. 
Moreover, because the distance between $r_p$ and $\widetilde{s}_p$ in the TA plane increases over time, 
the virtual minimum $\gamma$-net-cost increases over time. 
When the virtual minimum $\gamma$-net-cost is greater than or equal to 
the real minimum $\gamma$-net-cost, 
our algorithm matches $r_i$ to the endpoint server of $P_i$ and augments $M^{OFF}$ by $P_i$. 

In Section~\ref{sec: simple}, we show that the algorithm can be greatly simplified: 
we match $r_i$ when its waiting time is
greater than or equal to the real minimum $\gamma$-net-cost divided by $\gamma$. 
Thus, we no longer need MV servers in our algorithm. 
To this end, we prove that the virtual minimum $\gamma$-net-cost is always $\gamma$ times $r_i$'s 
waiting time (Eq.~\eqref{eq: v}).
Nevertheless, MV servers facilitate the analysis of our algorithm in the following senses:
\begin{enumerate}
\item In this paper, we upper bound $r_i$'s delay cost by its real minimum $\gamma$-net-cost. 
Thus, we have to show that the real minimum $\gamma$-net-cost cannot suddenly drop below the 
waiting time, even when a new server arrives. In our proof, we replace the new server 
with an MV server to create a virtual augmenting path $\widetilde{P}$ whose $\gamma$-net-cost is a 
lower bound of the real minimum $\gamma$-net-cost. We then lower bound $\widetilde{P}$'s $\gamma$-net-cost by  
the virtual minimum $\gamma$-net-cost and Eq.~\eqref{eq: v}.

\item When our algorithm matches $r_i$ at time $t$, the optimal solution may match $r_i$ 
to some server $s$ that arrives after time $t$. In our proof, we replace this future server
$s$ with an MV server (which creates a virtual augmenting path) to derive lower bounds for the optimal cost. 
\end{enumerate}
In~\cite{raghvendra2016robust}, it has been shown that the total distance cost 
can be upper bounded by the total $\gamma$-net-cost. Because we further upper bound 
the delay cost by the  $\gamma$-net-cost, our algorithm's total cost
is upper bounded by the total $\gamma$-net-cost. 
To prove Theorem~\ref{thrm: main}, we then use the techniques in~\cite{nayyar2017input} to relate 
the total $\gamma$-net-cost to the optimal cost in the TA plane (recall that the TA plane is two-dimensional).

The main challenge in our analysis is to prove that the real minimum $\gamma$-net-cost 
cannot decrease after $M^{OFF}$ is augmented by another request 
(Lemma~\ref{lemma: AU}), so that we can upper bound 
the delay cost by the real minimum $\gamma$-net-cost. To this end, we partition $P_i$ and 
derive lower bounds for the $\gamma$-net-cost of $P_i$'s subpaths.

\subsection{Other Related Work}
Without considering waiting times (and thus every request must be matched 
immediately upon arrival), there has been a considerable amount of research in 
the literature on how to maximize matching weights or minimize matching costs, 
considering different arrival patterns for vertices or edges. 
Relevant literature on these issues can be found in recent years' studies 
(such as \cite{fahrbach2022edge, kaplan2022online, huang2022power, buchbinder2023lossless, doi:10.1137/1.9781611977912.165}), 
or in the excellent survey by Mehta~\cite{mehta2013online}. 
On the other hand, some studies have explored settings where 
recourse is allowed~\cite{matuschke_et_al:LIPIcs.ICALP.2019.82, gupta2020permutation, megow2020online, angelopoulos2020online, doi:10.1137/1.9781611977912.162}. 
In this subsection, we focus on online problems that allow delays.

\paragraph{Poisson Arrival Processes.}
In~\cite{mari2023online}, Mari et al. considered the MPMD problem and assumed that the request arrival process 
follows a Poisson arrival process. Specifically, for each point $v$ in the 
metric space, the interarrival times of requests at $v$ follow an exponential 
distribution, and arrival processes at different points are independent of each 
other. Mari et al. considered a simple greedy algorithm: when the total waiting time 
of two requests exceeds their distance from each other, 
they are immediately matched. While it has been shown that such an algorithm 
has a competitive ratio of $\Omega(m^{0.58})$ in instances designed by Reingold and 
Tarjan~\cite{reingold1981greedy}, Mari et al. proved that when the request 
arrival process is Poisson, the competitive ratio of this simple greedy 
algorithm is $O(1)$.

\paragraph*{Non-Linear Delay Costs.}
In the MPMD problem, delay cost equals waiting time. 
Other studies have considered different forms of delay costs. 
In~\cite{liu2018impatient}, Liu et al. assumed that delay cost is a convex 
function of waiting time. Specifically, if the waiting time is $t$, 
Liu et al. assumed that the delay cost is $t^{\alpha}$, where $\alpha > 1$. 
Liu et al. considered uniform metric space, where the distance
between any two points is the same. They proposed a deterministic $O(n)$-competitive
algorithm. In~\cite{azar2021min}, 
Azar et al. considered the case where the delay cost is a concave function 
of waiting time. They first considered $n=1$ and proposed an 
$O(1)$-competitive deterministic algorithm. Then they considered $n>1$ 
and designed a randomized $O(\log n)$-competitive algorithm based on 
HST. Azar et al. also considered the bipartite variant 
and proposed an $O(1)$-competitive deterministic algorithm (when $n=1$) 
and a randomized $O(\log n)$-competitive algorithm (when $n>1$). 

In~\cite{deryckere2023online}, Deryckere and Umboh similarly considered 
concave functions and designed a deterministic $O(m)$-competitive 
primal-dual algorithm.
Deryckere and Umboh also utilized set delay functions as delay cost. 
Specifically, at each time $t$, the algorithm incurs delay cost 
as a function of the set of unmatched requests. They proposed a 
deterministic $O(2^m)$-competitive algorithm and a randomized 
$O(m^4)$-competitive algorithm. Their approach is based 
on transforming the MPMD problem into a Metrical Task System 
(MTS) \cite{borodin1992optimal} and solving it using MTS algorithms. 
Deryckere and Umboh also proved that for this problem, 
the competitive ratio of any deterministic algorithm is $\Omega(n)$, 
and the competitive ratio of any randomized algorithm is $\Omega(\log n)$.

\paragraph*{Other Matching Problems With Delays.}
In some games, such as poker or mahjong, 
more than two players are needed. Therefore, 
we need to match more than two requests at once. 
\cite{melnyk2021online, kakimura2023deterministic} considered 
this scenario and designed algorithms based on HST and primal-dual 
transformation. Another way of allowing delay is setting deadlines. 
In~\cite{ashlagi2018maximum}, Ashlagi et al. assumed that 
each request can wait for $\delta$ time units upon arrival and not all 
requests need to be matched. The algorithm aims to find the maximum weight 
matching, and Ashlagi et al. proposed an $O(1)$-competitive algorithm.

\paragraph*{Other Online Problems With Delays.}
Many online problems have variants that allow delays. 
For example, in online network design problems, algorithms need to purchase 
network links upon request arrival. In the case where delay is allowed, 
purchased network links may serve multiple requests simultaneously, 
reducing purchasing costs~\cite{azar2019general, azar2020beyond}. 
Another example is online service problems, in which a server can be moved to serve requests. 
In the case where delay is allowed, we can compute the shortest 
path based on multiple requests, reducing server movement 
distance~\cite{azar2017online, touitou2023improved}. 
For these problems, randomized $O(\text{poly}\log n)$-competitive algorithms are first proposed 
based on HST~\cite{azar2019general, azar2017online}. 
Azar and Touitou then proposed deterministic $O(\text{poly}\log n)$-competitive  
algorithms~\cite{azar2020beyond, touitou2023improved}. 
Recently, Azar. et al. considered the list update with delays problem and 
designed a deterministic $O(1)$-competitive algorithm~\cite{azar2024list}.
 
\section{Preliminaries}
Given two sets $A$ and $B$, a matching $M$ between $A$ and $B$ 
is a set of vertex-disjoint edges between $A$ and $B$.  
An element $v$ is said to be \textit{saturated} 
by $M$ (or $M$ saturates $v$) if $v$ is an endpoint of some edge in $M$. 
A matching $M$ between $A$ and $B$ is said to be \textit{perfect} 
if $M$ saturates all elements in $A \cup B$. 

\paragraph*{Problem Definition.}
In the Minimum cost Bipartite Perfect Matching with Delays (MBPMD) problem, 
there is an underlying metric space $\mathcal{M} = (V, d)$.  
$m$ servers and $m$ requests from $\mathcal{M}$ arrive over time. 
Let $R = \{r_1, r_2, \cdots, r_m\}$ and $S = \{s_1, s_2, \cdots, s_m\}$ 
be the request set and the server set, respectively. 
For any $u \in R \cup S$, $a(u)$ denotes $u$'s arrival time 
and $\ell(u) \in V$ denotes $u$'s location in $\mathcal{M}$. 
When $u$ arrives at time $a(u)$, $\ell(u)$ and $u$'s distances to other 
requests and servers that arrive by time $a(u)$ are revealed to the algorithm. 
In this paper, $r$, $r_i$, and $r_p$ always refer to some request in $R$, 
and $s$, $s_j$, and $s_q$ always refer to some server in $S$.

In the MBPMD problem, an online algorithm computes a perfect matching 
$M$ between $R$ and $S$. After a request $r$ arrives, 
the online algorithm can defer the matching of $r$ by paying the delay cost.
If the algorithm matches $r$ and $s$ at time $t$, 
then the delay cost is $(t-a(r)) + (t-a(s))$.  
After $r$ and $s$ are matched, $(r, s)$ is added to matching $M$ 
and cannot be removed from $M$ afterward. In addition, the algorithm 
pays for the distance cost $d(\ell(r), \ell(s))$. 
For simplicity, for any $u_1, u_2 \in R \cup S$, define $d(u_1, u_2) 
= d(\ell(u_1), \ell(u_2))$. 

In this paper, we assume that $\mathcal{M}$ is a line metric. 
Thus, the location of every server and request is 
a point on the real line, 
and the distance between any two elements in 
$R \cup S$ is their distance on the line. 
Specifically, for any $u \in R \cup S$, 
let $\pos(u) \in \mathbb{R}$ be the location of $u$ on the real line. 
Thus, for any $u_1, u_2 \in R \cup S$, $d(u_1, u_2) = |\pos(u_1)-\pos(u_2)|$.

In summary, an online algorithm for the MBPMD problem has to compute 
a perfect matching $M$ between $R$ and $S$. 
For each edge $(r, s)$ in $M$, let $\mt(r,s)$ be the time when $r$ and $s$ are matched. 
Clearly, $\mt(r,s) \geq a(r)$ and $\mt(r,s) \geq a(s)$.  
Given a perfect matching $M$ and a matching time function $\mt$, 
define 
\[
\cost(M, \mt) = \sum_{(r,s) \in M}{\big(|\pos(r)-\pos(s)| 
+ (\mt(r,s)-a(r)) + (\mt(r,s)-a(s)) \big)}
\]
as the total distance and delay cost of a solution $(M, \mt)$. 
The objective is to minimize $\cost(M, \mt)$. 
 
\paragraph*{Augmenting Paths.}
Given a matching $M$, an $M$-alternating path is a path that alternates betweens edges in $M$ 
and edges not in $M$. An $M$-alternating path $P$ is said to be an $M$-augmenting path if  
both endpoints of $P$ are not saturated by $M$. For any path $P$, we use $E(P)$ 
to denote the set of undirected edges in $P$. For any two sets $A$ and $B$, 
define $A \oplus B = (A \setminus B) \cup (B \setminus A)$ as the symmetric difference between $A$ and $B$. 
Observe that for any $M$-augmenting path $P$, $M \oplus E(P)$ is a matching of size $|M| + 1$. 

While augmenting paths are typically considered as undirected paths, for the sake of convenience, 
we often view an augmenting path as a directed path from an unsaturated request 
$r$ to an unsaturated server $s$. 
We often refer to an augmenting path by the natural sequence of its vertices. 
Specifically, an augmenting path $P$ that originates at $r$ and terminates 
at $s$ can be written in the form of 
$P = r'_1s'_1r'_2s'_2\cdots r'_{\ell}s'_{\ell}$ for some ${\ell} \geq 1$, 
where $r'_1 = r, s'_{\ell} = s$, and  $r'_k \in R, s'_k \in S$ for any $1 \leq k \leq {\ell}$.

Because augmenting paths are directed, edges in augmenting paths can also be viewed as directed edges.
We denote by $\overrightarrow{u,v}$ a directed edge from $u$ to $v$. 
For any augmenting path $P$, we use $\overrightarrow{E}(P)$ to denote the set of directed edges in $P$, 
(i.e., $\overrightarrow{E}(P) = \{\overrightarrow{r'_1,s'_1}, \overrightarrow{s'_1,r'_2}, 
\overrightarrow{r'_2,s'_2}, \cdots, \overrightarrow{r'_{\ell},s'_{\ell}}\}$). 
For any directed edge $\overrightarrow{u,v}$, we say that $\overrightarrow{u,v}$ is in an augmenting path $P$ 
if $\overrightarrow{u,v} \in \overrightarrow{E}(P)$. 
We have the following simple fact. Recall that in this paper, $r$ always refers to a request in $R$ 
and $s$ always refers to a server in $S$.
\begin{fact}\label{fact: s}
Let $M$ be any matching between $R$ and $S$. 
Let $P$ be any $M$-augmenting path that originates at a request and terminates at a server. 
Let $M^{aug} = M \oplus E(P)$. Then the following statements hold: 
\begin{enumerate}
\item For any $\overrightarrow{r, s} \in \overrightarrow{E}(P)$, $(r, s) \notin M$ and $(r, s) \in M^{aug}$. 
\item For any $\overrightarrow{s, r} \in \overrightarrow{E}(P)$, $(r, s) \in M$ and $(r, s) \notin M^{aug}$. 
\end{enumerate}
\end{fact}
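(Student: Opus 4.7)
The plan is to exploit the alternating structure of $P$ together with the hypothesis that the origin of $P$ is unsaturated, and then apply the definition of symmetric difference to obtain the claims about $M^{aug}$. Write $P$ in its canonical form $P = r'_1 s'_1 r'_2 s'_2 \cdots r'_\ell s'_\ell$, so that the forward directed edges of $P$ are exactly $\overrightarrow{r'_k, s'_k}$ for $1 \le k \le \ell$, while the backward directed edges are exactly $\overrightarrow{s'_k, r'_{k+1}}$ for $1 \le k \le \ell-1$. The first step is to establish the $M$-membership pattern: every forward edge is a non-$M$ edge and every backward edge is an $M$ edge.

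I would prove this by induction on the index $k$ as one walks along $P$ from its origin. For the base case, $r'_1$ is unsaturated by $M$ (since $P$ is $M$-augmenting and originates at $r'_1$), so no edge incident to $r'_1$ lies in $M$; in particular $(r'_1, s'_1) \notin M$. For the inductive step, the alternating property of $P$ (edges alternate between being in $M$ and not in $M$) combined with $M$ being a matching forces $(s'_k, r'_{k+1}) \in M$ whenever $(r'_k, s'_k) \notin M$, and then $(r'_{k+1}, s'_{k+1}) \notin M$ whenever $(s'_k, r'_{k+1}) \in M$, because the matching $M$ can contribute at most one edge at $r'_{k+1}$. This pins down the pattern for the entire path.

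The second step is immediate from the definition $M^{aug} = M \oplus E(P)$: every edge of $P$ has its $M$-membership toggled, so forward edges (which are not in $M$) become members of $M^{aug}$, while backward edges (which are in $M$) are removed. Combining this with the first step yields both bullets of the fact. The only subtle point — and the nearest thing to an obstacle — is to line up the forward/backward orientation of the directed edges with the non-$M$/$M$ alternation; the base case, which uses that $r'_1$ is unsaturated, fixes this orientation once and for all.
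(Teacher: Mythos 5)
Your proof is correct. The paper states this as a "simple fact" and omits a proof entirely, and your argument — inducting along the canonical form $r'_1 s'_1 \cdots r'_\ell s'_\ell$, using the unsaturated origin $r'_1$ to anchor the base case, propagating via the alternating property (your appeal to the matching condition at $r'_{k+1}$ is a valid alternative justification for the same step), and then reading off $M^{aug}$-membership directly from the symmetric difference — is exactly the natural argument one would supply.
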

In other words, for any directed edge from a request to a server in $P$, 
it is added to $M^{aug}$; 
for any directed edge from a server to a request in $P$, 
it is not in $M^{aug}$. 

We can then express the $\gamma$-net-cost, which was first introduced by Raghvendra~\cite{raghvendra2016robust}, 
based on the directions of edges.
For any $\gamma > 1$ and any augmenting path $P$, 
the $\gamma$-net-cost of $P$ is defined as 
\[
\gamma \left(  \sum_{\overrightarrow{r,s} \in \overrightarrow{E}(P)}{d(r,s)}     \right)
                   -\sum_{\overrightarrow{s,r} \in \overrightarrow{E}(P)}{d(r,s)}.        
\]

\paragraph*{Time Augmented Plane.} 
Throughout this paper, we use $M^{OPT}$ to denote the optimal matching. 
Observe that for any $(r,s) \in M^{OPT}$, the optimal solution must match $r$ and $s$ 
at time $\max{(a(r), a(s))}$. 
Thus, the optimal cost can be written as
\begin{equation}\label{eq: OPTD}
\sum_{(r,s) \in M^{OPT}}{\big(|\pos(r)-\pos(s)| + |a(r)-a(s)|\big)}.
\end{equation}
The above optimal cost suggests that we can view 
$S \cup R$ as a set of points 
in an $xy$-plane, where each point $v \in S \cup R$ 
has $x$-coordinate $\pos(v)$ and $y$-coordinate $a(v)$. 
The $y$-axis can also be viewed as the time axis. 
We call such an $xy$-plane the \textit{Time Augmented (TA) plane}.
Observe that $|\pos(r)-\pos(s)| + |a(r)-a(s)|$ is the Manhattan distance 
between $r$ and $s$ in the TA plane. TA planes are also used 
in~\cite{bienkowski2017match,azar2020deterministic}.

\paragraph*{Paper Organization.} 
In Section~\ref{sec: algo}, we present our algorithm with MV servers for the MBPMD problem on a line 
and state some basic properties. In Section~\ref{sec: simple}, we remove MV servers from our algorithm and 
prove Theorem~\ref{thrm: main}.
\section{An Online Matching Algorithm with Moving Virtual Servers}\label{sec: algo}
In this section, we describe our algorithm, which introduces Moving Virtual (MV) servers 
into the RM algorithm. We thus call our algorithm the Virtual RM (VRM) algorithm. 
In the next section, we will present a simplified version of the algorithm without MV servers.
\subsection{Moving Virtual Servers}\label{subsec: MV}
In the VRM algorithm, whenever a request $r_i$ arrives, the algorithm creates 
a Moving Virtual (MV) server $\widetilde{s}_i$, and sets $a(\widetilde{s}_i) = a(r_i)$. 
Moreover, $\pos(r_i) = \pos(\widetilde{s}_i)$. Therefore, we can also view 
$\widetilde{s}_i$ as a point in the TA plane. 
$\widetilde{s}_i$ moves upward in the TA plane.
Specifically, at time $t \geq a(r_i)$, the $y$-coordinate of $\widetilde{s}_i$ in the TA plane is $t$. 
A simple property is that the distance between an unmatched request 
and its MV server in the TA plane is always the request's current waiting time. 
To differentiate between servers in $S$ and MV servers, we call servers in $S$ the \textbf{real} servers. 
Our algorithm never matches a request to an MV server.

For any $u_1, u_2 \in R \cup S$, the distance between $u_1$ and $u_2$ in the TA plane, 
denoted by $D(u_1, u_2)$, 
is defined as 
\[
D(u_1,u_2) = |\pos(u_1) - \pos(u_2)| + |a(u_1) - a(u_2)|.
\]
For any $r_i \in R$, the distance between $r_i$ and its MV server $\widetilde{s}_i$ 
in the TA plane at time $t$, denoted by $D_t(r_i, \widetilde{s}_i)$, is defined as 
\[
D_t(r_i, \widetilde{s}_i) = t-a(r_i).
\]
An important observation is that if at time $t$, $r_i$ already arrives but a server $s$ has not 
arrived yet (i.e., $a(r_i) \leq t < a(s)$), then 
\begin{equation}\label{eq: ob}
D(r_i, s) \geq a(s) - a(r_i) > t-a(r_i) = D_t(r_i, \widetilde{s}_i).
\end{equation}

\subsection{The VRM Algorithm}\label{subsec: phi}
Like the RM algorithm, the VRM algorithm maintains an offline matching 
$M^{OFF}$ and an online matching $M^{VRM}$, which is the real output matching. 
Unlike the RM algorithm, the VRM algorithm needs to decide the matching time 
for each edge $(r,s) \in M^{VRM}$, denoted by $\mt^{VRM}(r,s)$.
All the servers in $M^{OFF}$ and $M^{VRM}$ are real, 
and these two matchings saturate the same set of servers and requests. 
A real server or a request is said to be \textbf{free} 
if it has arrived but not yet matched by $M^{OFF}$.  
Initially, both matchings are empty. 

We consider two types of augmenting paths, real and virtual, in the TA plane. 
An augmenting path $P$ is \textbf{real} if all servers in $P$ are real. 
An augmenting path $P$ is \textbf{virtual} if the last directed edge of $P$ is from some request $r_p$ 
to $r_p$'s MV server (i.e., $\overrightarrow{r_p, \widetilde{s}_p}$), 
and all the other servers in $P$ are real. 
For any real $M^{OFF}$-augmenting path $P$, define the $\gamma$-net-cost of $P$ in the TA plane, 
denoted by $\varphi_{\gamma}(P)$, as
\[
\varphi_{\gamma}(P) 
         = \gamma \left(  \sum_{\overrightarrow{r, s} \in \overrightarrow{E}(P)}{D(r,s)} \right)
                   -\sum_{\overrightarrow{s, r} \in \overrightarrow{E}(P)}{D(r,s)}.       
\]
The $\gamma$-net-cost of virtual augmenting paths is defined similarly. 
The only difference is that the distance function of the last directed edge is $D_t$ instead of $D$. 
Specifically, for any virtual $M^{OFF}$-augmenting path $\widetilde{P}$ that terminates
at MV server $\widetilde{s}_p$ and any time $t \geq a(r_p)$, 
define the $\gamma$-net-cost of $\widetilde{P}$ at time $t$ in the TA plane, 
denoted by $\varphi_{\gamma,t}(\widetilde{P})$, as
\[
\varphi_{\gamma,t}(\widetilde{P}) 
 = \gamma \left( D_t(r_p, \widetilde{s}_p) + \sum_{\overrightarrow{r, s} \in \overrightarrow{E}(\widetilde{P})}{D(r,s)} \right)
  - \sum_{\overrightarrow{s, r} \in \overrightarrow{E}(\widetilde{P})}{D(r,s)}.                      
\]
We stress that in the above definition, $s$ is a real server.
In this paper, we assume $\gamma = 3$ and 
drop the subscript $\gamma$ in $\varphi_{\gamma}$ and $\varphi_{\gamma,t}$ if it is clear from the context.  

\paragraph*{Description of the Algorithm.}
After a request $r_i$ arrives, the VRM algorithm maintains a real 
$M^{OFF}$-augmenting path $P_i$
and a virtual $M^{OFF}$-augmenting path $\widetilde{P}_i$. 
Specifically, $P_i$ is such that minimizes the $\gamma$-net-cost 
among all real $M^{OFF}$-augmenting paths that originate at $r_i$. 
On the other hand, for any time $t$, $\widetilde{P}_i$ is such that 
minimizes the $\gamma$-net-cost among all virtual $M^{OFF}$-augmenting paths 
that originate at $r_i$ at time $t$.   
We call $P_i$ (respectively, $\widetilde{P}_i$) 
the \textbf{real minimum augmenting path} (respectively, \textbf{virtual minimum augmenting path}) of $r_i$. 
Note that in the absence of free servers, $P_i$ does not exist. 
If so, we assume $\varphi(P_i) = \infty$. 

Fix an offline matching $M^{OFF}$. 
Because $\varphi_{t}(\widetilde{P}_i)$ increases as $t$ increases, 
$\varphi_{t}(\widetilde{P}_i)$ exceeds $\varphi(P_i)$ eventually. 
A request $r_i$ is said to be \textbf{ready} at time $t$ if it is free and
$\varphi_{t}(\widetilde{P}_i) \geq \varphi(P_i)$. 
For any directed path $P$, denote by $\ori(P)$ and $\ter(P)$ as the first and last vertices in $P$, 
respectively. 
Whenever some request $r_i$ is ready at time $t$, 
we first augment $M^{OFF}$ by setting $M^{OFF} \gets M^{OFF} \oplus E(P_{i})$. 
We then add $(r_i, \ter(P_i))$ to $M^{VRM}$, and set $\mt^{VRM}(r_{i}, \ter(P_i)) = t$. 
Finally, we update all free requests' real and virtual minimum augmenting paths 
(since $M^{OFF}$ is changed).\footnote{If multiple requests become 
ready at the same time, 
we choose one of them arbitrarily and process it as described above. 
Other ready requests' real and virtual minimum augmenting paths  
will be updated according to the new offline matching. As a result, some ready 
requests may not be ready after the update.}

In the following, we first explain the subroutine that computes $P_i$ and $\widetilde{P}_i$ 
at any time $t$ (Section~\ref{subsubsec: algoPi}). We then discuss the timings for 
computing $P_i$ and $\widetilde{P}_i$ (Section~\ref{subsubsec: timing}). 

\subsubsection{Computing Minimum Augmenting Paths}\label{subsubsec: algoPi} 
The subroutine for computing $P_i$ and $\widetilde{P}_i$ basically follows that in~\cite{raghvendra2016robust}. 
Let $\widetilde{S} = \{\widetilde{s}_1, \widetilde{s}_2, \cdots, \widetilde{s}_m\}$.
Our algorithm maintains dual variables $z(\cdot)$ for $R \cup S \cup \widetilde{S}$ 
and the following invariants:
\begin{equation}\label{eq: dualrelaxed}
z(r)+z(s) \leq \gamma D(r,s), \forall r \in R, s \in S. 
\tag{I1}
\end{equation}

\begin{equation}\label{eq: dualrelaxed2}
z(r_p)+z(\widetilde{s}_p) \leq \gamma D_t(r_p, \widetilde{s}_p), 
\forall r_p \in R, t \geq a(r_p).
\tag{I2}
\end{equation}

\begin{equation}\label{eq: dual0}
z(v) = 0, \forall v \in \widetilde{S} \cup \{u | u \in R\cup S, u \text{ is not saturated by } M^{OFF}\}. 
\tag{I3}
\end{equation}

\begin{equation}\label{eq: dualtight}
z(r)+z(s) = D(r,s), \forall (r,s) \in M^{OFF}. 
\tag{I4}
\end{equation}

By Invariant~\eqref{eq: dual0}, all the dual variables are zero initially. 
It is easy to see that all invariants hold initially. 
$z(\cdot)$ is only updated when $M^{OFF}$ is augmented. 
The subroutine for updating $z(\cdot)$ is discussed in Section~\ref{subsubsec: dual}.
Fix a time $t$, a free request $r_i$, an offline matching $M^{OFF}$, and $z(\cdot)$, 
we next explain how to compute $P_i$ and $\widetilde{P}_i$. 

Let $R_{sat}$ be the set of requests that are saturated by $M^{OFF}$, 
and $\widetilde{S}_{sat}$ be the set of MV servers of the requests in $R_{sat}$. 
Let $S_t$ be the set of servers that arrive by time $t$.
We construct an edge-weighted directed bipartite graph $G_{i,t}$ 
with partite sets $R_{sat} \cup \{r_i\}$ and $S_t \cup \widetilde{S}_{sat} \cup \{\widetilde{s}_i\}$. 
The edge weight is the edge's slack with respect to dual variables $z(\cdot)$. 
$G_{i,t}$ is called the \textbf{slack graph} of $r_i$ at time $t$ and is constructed as follows:

\begin{enumerate}
\item For every server $s$ in $S_t$ and every request $r$ in $R_{sat}$, 
if $(s,r) \in M^{OFF}$, 
we add to $G_{i,t}$ a directed edge $\overrightarrow{s,r}$ 
with edge weight $sl(\overrightarrow{s, r}) = 0$.
\item For every server $s$ in $S_t$ and every request $r$ in $R_{sat} \cup \{r_i\}$, 
if $(r,s) \notin M^{OFF}$, 
we add to $G_{i,t}$ a directed edge $\overrightarrow{r,s}$ 
with edge weight $sl(\overrightarrow{r, s}) = \gamma D(r,s) - (z(r)+z(s))$. 
\item For every request $r_p$ in $R_{sat} \cup \{r_i\}$, 
we add to $G_{i,t}$ a directed edge $\overrightarrow{r_p,\widetilde{s}_p}$ 
with edge weight 
$sl(\overrightarrow{r_p,\widetilde{s}_p}) = \gamma(D_t(r_p,\widetilde{s}_p)) - (z(r_p)+z(\widetilde{s}_p))$.
\end{enumerate}

We then set $P_i$ as the shortest path from $r_i$ to the set of real free server in $G_{i,t}$. 
This can be done by first computing all the shortest paths from $r_i$ to each free real server in $G_{i,t}$ 
and then outputting the shortest one among these paths. 
Similarly, we set $\widetilde{P}_i$ as the shortest path from $r_i$ to the set of MV servers in $G_{i,t}$. 
In Appendix~\ref{appendix: correctness}, we prove that this subroutine indeed 
computes the real and virtual minimum augmenting paths. 
In particular, it can be shown that for any augmenting path $P$, $P$'s weight on $G_{i,t}$ 
equals $P$'s $\gamma$-net-cost.
The proof is similar to that in~\cite{raghvendra2016robust}.

\subsubsection{Timings for Updating $P_i$ and $\widetilde{P}_i$}\label{subsubsec: timing}
For each free request $r_i$, we use the subroutine in Section~\ref{subsubsec: algoPi} to 
compute $P_i$ and $\widetilde{P}_i$ when $r_i$ arrives or whenever 
one of the following events occurs:
\begin{description}
\item[Event SA:] A server arrives.
\item[Event AU:] $M^{OFF}$ is augmented by another request's augmenting path.
\end{description}
When Event~SA occurs, we only need to update $P_i$, 
as $\widetilde{P}_i$ cannot change due to the arrival of a new server. 

Observe that all virtual augmenting paths increase the $\gamma$-net-cost by the same speed. 
Specifically, for any virtual augmenting path $\widetilde{P}$ at time $t$ and some future time $t' > t$, 
we have $\varphi_{t'}(\widetilde{P}) = \varphi_{t}(\widetilde{P}) + \gamma (t'-t)$. 
Thus, if Event AU does not occur (and thus $M^{OFF}$ does not change), 
then $r_i$'s virtual minimum augmenting path cannot change. 
Therefore, we only need to update $\widetilde{P}_i$ when Event AU occurs. 

Whenever $P_i$ or $\widetilde{P}_i$ is updated at some time $t$, 
we check whether $r_i$ becomes ready 
(i.e., $\varphi_{t}(\widetilde{P}_i) \geq \varphi(P_i)$). 
If not, we compute the following \textbf{ready timing} $t^{rdy}_i$ for $r_i$: 
\[
t^{rdy}_i = t + \frac{\varphi(P_i) - \varphi_{t}(\widetilde{P}_i)}{\gamma}.
\]
At time $t^{rdy}_i$, $r_i$ becomes ready. Note that $P_i$ and $\widetilde{P}_i$ may change before time $t^{rdy}_i$.   
If so, we update $t^{rdy}_i$ again.

\subsubsection{Updating the Dual Variables} \label{subsubsec: dual}
We update the dual variables $z(\cdot)$ whenever $M^{OFF}$ is augmented by some ready 
request $r_i$'s augmenting path $P_i$. 
Let $t$ be the time when $M^{OFF}$ is augmented by $P_i$.
Let $G^{pre} = G_{i, t}$ be the slack graph of $r_i$ right before $M^{OFF}$ is augmented by $P_i$. 
For each vertex $v$ in $G^{pre}$, define $sl(r_i, v)$ as the shortest distance from $r_i$ to $v$ in 
$G^{pre}$. Let $s^* = \ter(P_i)$. We update $z(\cdot)$ in two steps. 

\begin{enumerate}[leftmargin=20mm, label= \textbf{Step }\arabic*:, ref=\arabic*]
\item
    \begin{itemize}
        \item For every request $r$ in $G^{pre}$, if $sl(r_i, r) < sl(r_i, s^*)$, we then set 
              \[
              z(r) \gets z(r) + (sl(r_i, s^*) - sl(r_i, r)).
              \]
        \item For every real server $s$ in $G^{pre}$, if $sl(r_i, s) < sl(r_i, s^*)$, we then set  
              \[
              z(s) \gets z(s) - (sl(r_i, s^*) - sl(r_i, s)).
              \]
    \end{itemize} \label{step1}
\item
    \begin{itemize} 
        \item For every $\overrightarrow{r,s} \in \overrightarrow{E}(P_i)$, 
        we set $z(r) \gets z(r) - (\gamma-1)D(r,s)$. 
    \end{itemize} \label{step2}
\end{enumerate}
Note that for every directed edge $\overrightarrow{r,s}$ considered 
in Step~\ref{step2}, 
$(r,s)$ is added to $M^{OFF}$ after $r_i$ is matched. 

Recall that all the invariants hold initially. 
In Appendix~\ref{appendix: dual}, we prove that all the invariants hold 
after Step~\ref{step2} is executed. The proof is similar to that in~\cite{raghvendra2016robust}. 
One difference is that to prove Invariant~\eqref{eq: dualrelaxed2}, 
we use the property of the VRM algorithm that when $r_i$ is matched at time 
$t$, $\varphi_{t}(\widetilde{P}_i) \geq \varphi(P_i)$ must hold. 
Thus, for any request $r_p$ in $G^{pre}$, 
we have $sl(r_i, r_p) + sl(\overrightarrow{r_p, \widetilde{s}_p}) 
     = sl(r_i, \widetilde{s}_p) \geq \varphi_{t}(\widetilde{P}_i) 
\geq \varphi(P_i) = sl(r_i, s^*)$. 
Therefore, the increase of $z(r_p)$ due to Step~\ref{step1}
(i.e., $sl(r_i, s^*) - sl(r_i, r_p)$) 
is at most the slack between $r_p$ and its MV server.
Another difference lies in the proof of Invariant~\eqref{eq: dualrelaxed} for  
servers that have not arrived yet. Specifically, let $r_p$ be any request 
in $G^{pre}$ and $s$ be any server that is not in $G^{pre}$. 
Because $z(s) = 0$, we only need to prove $\gamma D(r_p,s) \geq z(r_p)$. 
Because $\gamma D(r_p, s) \geq \gamma D_{t}(r_p, \widetilde{s}_p)$ 
(by Eq.~\eqref{eq: ob}),  
the proof then follows from Invariant~\eqref{eq: dualrelaxed2}.

\subsection{Upper Bounding Distance in the TA Plane 
and Nonnegativity of $\gamma$-Net-Costs}
We first state a lemma that upper bounds $D(M^{VRM})$ 
by the sum of the minimum $\gamma$-net-cost over all requests. 
Specifically, let $P^*_i$ be the real minimum augmenting path of $r_i$ 
when $r_i$ is ready and matched by the VRM algorithm. The proof of the following lemma is 
similar to that in~\cite{raghvendra2016robust}, and can be found in Appendix~\ref{appendix: lemma: DUB}. 
For any matching $M$, define $D(M) = \sum_{(r,s) \in M}{D(r,s)}$. 
\begin{lemma}\label{lemma: DUB}
Let $\gamma > 1$ and $M^{VRM}$ be the final online matching output by the VRM algorithm. 
Then 
\[
D(M^{VRM}) \leq \frac{2}{\gamma-1}\sum_{i=1}^{m}{\varphi(P^*_i)}.
\]
\end{lemma}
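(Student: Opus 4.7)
The plan is to track how the total TA-plane distance of $M^{OFF}$ evolves in lockstep with $M^{VRM}$, since the VRM algorithm grows both matchings via the same sequence of augmenting paths. Enumerate these paths as $P_1^*, \ldots, P_m^*$ in the order they are used by VRM, and for each $i$ set
\[
\alpha_i = \sum_{\overrightarrow{r,s} \in \overrightarrow{E}(P_i^*)} D(r,s), \qquad \beta_i = \sum_{\overrightarrow{s,r} \in \overrightarrow{E}(P_i^*)} D(r,s),
\]
so that $\varphi(P_i^*) = \gamma \alpha_i - \beta_i$ straight from the definition in Section~\ref{subsec: phi}.

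First I would invoke Fact~\ref{fact: s}: augmenting $M^{OFF}$ by $P_i^*$ inserts exactly the edges counted in $\alpha_i$ and deletes exactly those counted in $\beta_i$, so $D(M^{OFF})$ changes by $\alpha_i - \beta_i$ at step $i$. Starting from the empty matching and telescoping,
\[
0 \;\leq\; D(M^{OFF}_{\mathrm{final}}) \;=\; \sum_{i=1}^m (\alpha_i - \beta_i),
\]
which yields the aggregate inequality $\sum_{i=1}^m \beta_i \leq \sum_{i=1}^m \alpha_i$. I am using here that $M^{OFF}$ and $M^{VRM}$ saturate the same set throughout, so after all $m$ augmentations $M^{OFF}$ is perfect with nonnegative edge weights.

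Next I would apply the triangle inequality in the TA plane, where $D(\cdot,\cdot)$ is Manhattan distance and therefore a metric. Walking along $P_i^*$ from $r_i$ to $\ter(P_i^*)$ gives
\[
D(r_i, \ter(P_i^*)) \;\leq\; \alpha_i + \beta_i.
\]
Since VRM adds precisely the edge $(r_i, \ter(P_i^*))$ to $M^{VRM}$ at step $i$, summing over $i$ and using $\sum_i \beta_i \leq \sum_i \alpha_i$ gives
\[
D(M^{VRM}) \;\leq\; \sum_{i=1}^m (\alpha_i + \beta_i) \;\leq\; 2 \sum_{i=1}^m \alpha_i.
\]
To close the loop, I would sum the identity $\varphi(P_i^*) = \gamma\alpha_i - \beta_i$ and again appeal to $\sum_i \beta_i \leq \sum_i \alpha_i$, obtaining
\[
\sum_{i=1}^m \varphi(P_i^*) \;\geq\; (\gamma - 1)\sum_{i=1}^m \alpha_i,
\]
and combining with the previous display delivers the claimed bound $D(M^{VRM}) \leq \frac{2}{\gamma - 1}\sum_i \varphi(P_i^*)$.

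I do not foresee a genuine obstacle; the proof is essentially an accounting argument lifted from Raghvendra's static MBPM analysis to the TA plane. The one place that warrants care is verifying that the $\alpha_i/\beta_i$ decomposition of $P_i^*$ cleanly describes the edge-by-edge change in $M^{OFF}$ under symmetric difference. This is immediate from Fact~\ref{fact: s}, provided the endpoints of $P_i^*$ are indeed free so that no previously matched edge interferes, which is precisely how the augmenting paths maintained by VRM are defined.
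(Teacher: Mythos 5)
Your proof is correct and follows essentially the same route as the paper: the identical triangle-inequality step $D(M^{VRM}) \leq \sum_{i}D(P^*_i)$, followed by an accounting of added ($\alpha_i$) versus removed ($\beta_i$) edge lengths across the augmentations. The only difference is how the key inequality $\sum_i\beta_i \leq \sum_i\alpha_i$ is justified: the paper uses a per-edge parity amortization (each edge's odd-numbered appearances add it to $M^{OFF}$, even-numbered ones remove it, so its amortized contribution to $\sum_i\varphi(P^*_i)$ is at least $\tfrac{\gamma-1}{2}D(r,s)$ per occurrence), whereas you derive it by telescoping $D(M^{OFF})$ from $0$ to $D(M^{OFF}_{\mathrm{final}})\geq 0$ --- logically equivalent, and arguably the cleaner formulation.
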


The following lemmas hold throughout the execution of the VRM algorithm. 
The proofs are similar to those in~\cite{nayyar2017input}, and can be found in 
Appendix~\ref{appendix: properties}.  
We stress that in Lemma~\ref{lemma: realgeq0}, 
$\ter(P)$ can be any server in $S$ regardless of its arrival time. 
\begin{lemma}\label{lemma: realgeq0}
For any real $M^{OFF}$-augmenting path $P$, $\varphi(P) \geq 0$.
\end{lemma}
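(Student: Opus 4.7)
The plan is to use the dual variables $z(\cdot)$ maintained by the VRM algorithm together with the structural invariants~\eqref{eq: dualrelaxed}, \eqref{eq: dual0}, and \eqref{eq: dualtight}, which hold throughout the execution by the analysis in Appendix~\ref{appendix: dual}. The proof is a standard primal-dual telescoping argument along the alternating path. Concretely, I would write $P = r'_1 s'_1 r'_2 s'_2 \cdots r'_\ell s'_\ell$, where $r'_1$ is an unsaturated request and $s'_\ell$ is an unsaturated real server. By the definition of an augmenting path, the forward directed edges $\overrightarrow{r'_k, s'_k}$ are not in $M^{OFF}$, while the reverse directed edges $\overrightarrow{s'_k, r'_{k+1}}$ are in $M^{OFF}$.

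Applying Invariant~\eqref{eq: dualrelaxed} to each forward edge gives $\gamma D(r'_k, s'_k) \geq z(r'_k) + z(s'_k)$, and applying the tightness relation~\eqref{eq: dualtight} to each reverse edge gives $D(r'_{k+1}, s'_k) = z(r'_{k+1}) + z(s'_k)$. Substituting these into the definition of $\varphi(P)$ I would obtain
\[
\varphi(P) \;\geq\; \sum_{k=1}^{\ell} \bigl(z(r'_k) + z(s'_k)\bigr) \;-\; \sum_{k=1}^{\ell-1} \bigl(z(r'_{k+1}) + z(s'_k)\bigr) \;=\; z(r'_1) + z(s'_\ell),
\]
since all internal dual variables cancel telescopically. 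Finally, since $r'_1$ and $s'_\ell$ are the endpoints of an augmenting path, they are both unsaturated by $M^{OFF}$, so Invariant~\eqref{eq: dual0} forces $z(r'_1) = z(s'_\ell) = 0$, yielding $\varphi(P) \geq 0$.

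There is essentially no obstacle to overcome here; the argument is mechanical once the invariants are in hand. The only two points deserving attention are: (i) confirming that the invariants really do hold at the particular moment when $\varphi(P)$ is evaluated, which is guaranteed by Appendix~\ref{appendix: dual}; and (ii) noting that the lemma applies to \emph{any} real $M^{OFF}$-augmenting path $P$, not merely the minimum-cost one computed by the subroutine, and in particular $\ter(P)$ need not be a server that has already arrived. Neither issue affects the argument, since the telescoping uses only the alternating structure of $P$ together with the two invariants \eqref{eq: dualrelaxed} and \eqref{eq: dualtight}, both of which bound real edges regardless of arrival times (Invariant~\eqref{eq: dualrelaxed} is stated for every $r \in R$ and $s \in S$).
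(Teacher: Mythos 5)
Your proof is correct and follows essentially the same primal-dual telescoping argument as the paper's proof in Appendix~\ref{appendix: properties}, using Invariants~\eqref{eq: dualrelaxed}, \eqref{eq: dual0}, and \eqref{eq: dualtight} in the same way; the only cosmetic difference is that you apply the inequality~\eqref{eq: dualrelaxed} to the forward edges at the start of the computation whereas the paper defers it to the final step. Your remark that the argument is unaffected by whether $\ter(P)$ has arrived yet also matches the emphasis the paper places immediately before the lemma statement.
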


\begin{lemma}\label{lemma: virtualgeq0}
For any time $t$ and any virtual $M^{OFF}$-augmenting path $\widetilde{P}$ 
at time $t$, $\varphi_{t}(\widetilde{P}) \geq 0$.
\end{lemma}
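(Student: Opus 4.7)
My plan is to adapt the dual-variable lower bounding technique used for Lemma~\ref{lemma: realgeq0} to the virtual setting. The only substantive change is that the final edge of a virtual augmenting path terminates at an MV server, and so it must be lower bounded using invariant~\eqref{eq: dualrelaxed2} rather than~\eqref{eq: dualrelaxed}; everything else is structurally identical.

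Write $\widetilde{P} = r'_1 s'_1 r'_2 s'_2 \cdots r'_\ell s'_\ell$ with $r'_1 = r_i$, $s'_\ell = \widetilde{s}_p$, $r'_\ell = r_p$, and $s'_1, \ldots, s'_{\ell-1}$ real. The definition of $\varphi_t(\widetilde{P})$ splits naturally into three groups of terms: the $\gamma D_t(r_p,\widetilde{s}_p)$ contribution from the final forward edge to the MV server, the $\gamma$-scaled $D$-sum over the $\ell-1$ forward edges incident to real servers, and the $D$-sum over the $\ell-1$ backward edges. Each backward edge $(r'_{k+1}, s'_k)$ lies in $M^{OFF}$, so invariant~\eqref{eq: dualtight} turns $D(r'_{k+1}, s'_k)$ into the exact sum $z(r'_{k+1}) + z(s'_k)$; each real forward edge is lower bounded via~\eqref{eq: dualrelaxed} by $z(r'_k) + z(s'_k)$; and the MV-terminal edge is lower bounded via~\eqref{eq: dualrelaxed2} by $z(r_p) + z(\widetilde{s}_p)$.

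Substituting these three estimates, the real-server duals $z(s'_1),\ldots,z(s'_{\ell-1})$ each appear exactly once in the forward aggregate and once in the backward aggregate and therefore cancel, while the request duals telescope: the forward contribution $z(r'_1) + z(r'_2) + \cdots + z(r'_\ell)$ loses all but the first term when the backward contribution $z(r'_2) + \cdots + z(r'_\ell)$ is subtracted. What remains is
\[
\varphi_t(\widetilde{P}) \;\geq\; z(r'_1) + z(\widetilde{s}_p),
\]
and both terms are zero by invariant~\eqref{eq: dual0}, since $r'_1 = r_i$ is an unsaturated request and $\widetilde{s}_p \in \widetilde{S}$. The degenerate case $\ell = 1$, in which $\widetilde{P}$ consists of the single edge $\overrightarrow{r_i, \widetilde{s}_i}$, follows at once from~\eqref{eq: dualrelaxed2} together with~\eqref{eq: dual0} applied to $r_i$ and $\widetilde{s}_i$, with no telescoping needed. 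I do not foresee a genuine obstacle here: the calculation is a direct parallel of the real case, and invariant~\eqref{eq: dualrelaxed2} was formulated precisely so that the MV-terminal edge of a virtual path can play the same role in the lower bound that an~\eqref{eq: dualrelaxed}-bounded real edge plays for Lemma~\ref{lemma: realgeq0}.
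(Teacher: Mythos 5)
Your proposal is correct and is essentially the paper's own argument: the paper writes $\varphi_t(\widetilde{P})$ in the same decomposition, applies Invariant~\eqref{eq: dualtight} exactly on the backward edges, regroups so that each forward edge contributes its slack (nonnegative by~\eqref{eq: dualrelaxed} for real edges and~\eqref{eq: dualrelaxed2} for the MV-terminal edge), and discharges the boundary terms $z(r'_1)$ and $z(\widetilde{s}_p)$ via~\eqref{eq: dual0}. Your telescoping phrasing is a cosmetic rearrangement of the same cancellation.
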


\section{A Simplified Algorithm and Analysis} \label{sec: simple}
In this section, we present a simplified algorithm that maintains the same $M^{OFF}$ and $M^{VRM}$ 
as the VRM algorithm without MV servers. 
To this end, we prove some properties regarding the change of real and virtual augmenting paths. 
Throughout the execution of the VRM algorithm, 
three types of events may occur: a server arrives (Event SA), a request becomes ready (Event AU), 
and a request arrives (Event RA). 
The VRM algorithm can be described as an event-driven algorithm: 
\begin{enumerate}
\item When Event RA occurs, compute the real and virtual minimum augmenting paths for the new request. 
\item When Event SA occurs, update the real minimum augmenting paths for all free requests.
\item When Event AU occurs, augment $M^{OFF}$ and update $M^{VRM}$ according to the real minimum 
      augmenting path of the request that becomes ready, and then update the real and virtual 
      minimum augmenting paths for all free requests. 
\end{enumerate}
Multiple events may occur simultaneously. 
If so, these events can be processed in any order.

Fix an arbitrary request $r_i$. Assume that starting from time $a(r_i)$ to the time when $r_i$ is matched 
by the VRM algorithm, the algorithm processes events $E_1, E_2, \cdots, E_{\nu}$ in order, 
where $E_1$ is the event that $r_i$ arrives and $E_{\nu}$ is the event that $M^{OFF}$ is augmented 
by $r_i$'s augmenting path (and thus $r_i$ is matched by the VRM algorithm). For any $1 \leq w \leq \nu$, 
let $t_w$ be the time when $E_w$ is processed. 
We introduce the following notations to distinguish the states of the VRM algorithm 
right before an event occurs and right after the event is processed. 
\begin{itemize}
\item Let $P_{i,w}^{pre}$ and $P_{i,w}^{post}$ be $r_i$'s real minimum augmenting paths 
right before $E_w$ occurs and right after $E_w$ is processed, respectively. 
\item Let $\widetilde{P}_{i,w}^{pre}$ and $\widetilde{P}_{i,w}^{post}$ be $r_i$'s virtual minimum 
augmenting paths right before $E_w$ occurs and right after $E_w$ is processed, respectively. 
\end{itemize}
Define $\varphi^{post}_{i,w} = \varphi(P^{post}_{i,w})$ 
and $\varphi^{pre}_{i,w} = \varphi(P^{pre}_{i,w})$. 
Similarly, define 
$\widetilde{\varphi}^{post}_{i,w} = \varphi_{t_w}(\widetilde{P}^{post}_{i,w})$ 
and $\widetilde{\varphi}^{pre}_{i,w} = \varphi_{t_w}(\widetilde{P}^{pre}_{i,w})$.

In this paper, we prove that the following two inequalities  
hold for any $1 \leq w \leq \nu-1$ (Lemma~\ref{lemma: vlerev}).
The first one states that after an event is processed, 
the $\gamma$-net-cost of $\widetilde{P}_i$ is at most that of $P_i$. 
\begin{equation}\label{eq: vlere}
\widetilde{\varphi}^{post}_{i,w} \leq \varphi^{post}_{i,w}.
\end{equation}
The next one states that 
the simplest virtual augmenting path, $r_i\widetilde{s}_i$, 
is always the the best one. As a result, the virtual minimum $\gamma$-net-cost is 
always $\gamma$ times $r_i$'s waiting time.
\begin{equation}\label{eq: v}
\widetilde{\varphi}^{post}_{i,w} = \gamma(t_w-a(r_i)).
\end{equation}

\begin{lemma}\label{lemma: vlerev}
Eq.~\eqref{eq: vlere} and Eq.~\eqref{eq: v} hold for any $1 \leq w \leq \nu-1$.
\end{lemma}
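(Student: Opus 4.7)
The plan is to prove Eq.~\eqref{eq: v} and Eq.~\eqref{eq: vlere} jointly by induction on $w$. For the base case $w = 1$, $r_i$ has just arrived, so the only virtual $M^{OFF}$-augmenting path from $r_i$ is $r_i\widetilde{s}_i$ itself, which has $\gamma$-net-cost $0 = \gamma(t_1 - a(r_i))$; this establishes Eq.~\eqref{eq: v}, while Eq.~\eqref{eq: vlere} follows from Lemma~\ref{lemma: realgeq0} (treating $\varphi^{post}_{i,1}$ as $\infty$ when no free real server has arrived).

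For the inductive step, I first observe that between $E_{w-1}$ and $E_w$ no events take place, so $P_i$ and $\widetilde{P}_i$ retain their structure and only $\varphi_t(\widetilde{P}_i)$ grows, at rate $\gamma$. Together with the inductive hypothesis this yields $\widetilde{\varphi}^{pre}_{i,w} = \gamma(t_w - a(r_i))$, and since $r_i$ cannot have been matched strictly before $t_\nu$ we also have $\widetilde{\varphi}^{pre}_{i,w} \leq \varphi^{pre}_{i,w}$. I then split on the type of $E_w$. If $E_w$ is the arrival of a different request, that request is free and therefore absent from $r_i$'s slack graph, so both equations carry over unchanged. If $E_w$ is the arrival of a server $s$, then $\widetilde{P}_i$ cannot change because virtual augmenting paths traverse only matched servers internally and terminate at MV servers, so Eq.~\eqref{eq: v} at post is immediate; for Eq.~\eqref{eq: vlere} at post, if $P^{post}_{i,w}$ avoids $s$ it was valid pre and the bound is transitive, while if $P^{post}_{i,w}$ terminates at $s$ I would replace its last edge $\overrightarrow{r',s}$ with $\overrightarrow{r',\widetilde{s}_{r'}}$; by Eq.~\eqref{eq: ob}, $D_{t_w}(r',\widetilde{s}_{r'}) \leq D(r',s)$ since $s$ has just arrived, so the resulting virtual augmenting path $\widetilde{P}'$ (valid pre) satisfies $\varphi_{t_w}(\widetilde{P}') \leq \varphi^{post}_{i,w}$, and the bound $\widetilde{\varphi}^{pre}_{i,w} \leq \varphi_{t_w}(\widetilde{P}')$ closes the argument.

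The critical case is $E_w$ = AU, triggered by a different request $r_q$ whose augmenting path $P_q$ is applied to $M^{OFF}$. Eq.~\eqref{eq: vlere} at post will follow once Eq.~\eqref{eq: v} at post is in hand, by invoking Lemma~\ref{lemma: AU} (the real minimum $\gamma$-net-cost does not decrease across AU): combining $\varphi^{post}_{i,w} \geq \varphi^{pre}_{i,w} \geq \widetilde{\varphi}^{pre}_{i,w}$ with $\widetilde{\varphi}^{post}_{i,w} = \widetilde{\varphi}^{pre}_{i,w}$ (from Eq.~\eqref{eq: v} at both pre and post) gives the claim. For Eq.~\eqref{eq: v} at post, the $\leq$ direction is immediate because $r_i\widetilde{s}_i$ remains a valid virtual augmenting path of $\gamma$-net-cost $\gamma(t_w - a(r_i))$. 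For the $\geq$ direction I would recast the statement as the slack-graph inequality $sl^{post}(r_i, r_p) + sl^{post}(\overrightarrow{r_p,\widetilde{s}_p}) \geq \gamma(t_w - a(r_i))$ for every saturated request $r_p$ after AU, and propagate it from the pre-AU graph using the algebra of the Step~\ref{step1} and Step~\ref{step2} dual updates (in particular the identity $sl^{post}(\overrightarrow{r,s}) - sl^{pre}(\overrightarrow{r,s}) = sl^{pre}(r_q,r) - sl^{pre}(r_q,s)$ induced by Step~\ref{step1}); the newly saturated vertex $r_p = r_q$ is then treated separately, using $r_q$'s own ready condition $\varphi_{t_w}(\widetilde{P}_q) \geq \varphi(P_q)$ together with Eq.~\eqref{eq: v} applied to $r_q$. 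This last step is the main obstacle: unlike SA, which admits a local edge substitution, AU simultaneously rewires $M^{OFF}$, inserts the new vertices $r_q$ and $\widetilde{s}_q$, and shifts dual variables across an entire subgraph, so the inequality must be shown to survive a global change, with the crucial arithmetic leverage for $r_q$ itself coming precisely from the ready condition that triggered the AU in the first place.
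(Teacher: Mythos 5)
Your handling of the RA and SA cases matches the paper's: in the SA subcase where $P^{post}_{i,w}$ terminates at the newly-arrived server $s$, you swap $\overrightarrow{r',s}$ for $\overrightarrow{r',\widetilde{s}_{r'}}$ and apply Eq.~\eqref{eq: ob}, which is exactly the paper's move. One small correction to the base case: $r_i\widetilde{s}_i$ is generally \emph{not} the only virtual $M^{OFF}$-augmenting path at arrival time, since $M^{OFF}$ may already be nonempty and longer virtual paths through saturated requests exist. The correct argument, as in the paper, is that $\varphi_{t_1}(r_i\widetilde{s}_i)=0$ and Lemma~\ref{lemma: virtualgeq0} guarantees every virtual augmenting path has nonnegative $\gamma$-net-cost, so the minimum is exactly $0$.

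The genuine gap is in the AU case. You cite Lemma~\ref{lemma: AU} only for the claim that the real minimum $\gamma$-net-cost does not decrease, and then set out to establish $\widetilde{\varphi}^{post}_{i,w} \geq \gamma(t_w - a(r_i))$ on your own via a slack-graph / dual-update calculation. But Lemma~\ref{lemma: AU} as stated already delivers \emph{both} conclusions ($\varphi^{post}\geq\varphi^{pre}$ and $\widetilde{\varphi}^{post}=\widetilde{\varphi}^{pre}$) under precisely the inductive hypothesis you have in hand, so the paper's Case~3 is a one-liner given that lemma plus the two ``pre'' relations you already derived. If instead your intent was to reprove the virtual half of Lemma~\ref{lemma: AU} without the path-partition machinery, then you have skipped the technical core of the argument: the paper devotes Appendix~\ref{appendix: lemma: AU} to constructing $P'$ and $P''$ by cutting the post-AU augmenting path at its intersections with $P^*$ into head, backs, wings, and tail, and lower-bounding the $\gamma$-net-cost of each piece (Claims~\ref{claim: backlb}--\ref{claim: bwlb}, Lemma~\ref{lemma: 230}), before chaining those bounds with the ready condition for $r^*$. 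Your proposed dual-variable route is only sketched, you explicitly flag it as ``the main obstacle,'' and the identity you lean on, $sl^{post}(\overrightarrow{r,s})-sl^{pre}(\overrightarrow{r,s})=sl^{pre}(r_q,r)-sl^{pre}(r_q,s)$, holds only when both $r$ and $s$ lie within distance $sl(r_q,s^*)$ of $r_q$ in the pre-AU slack graph, and is further perturbed by the Step~\ref{step2} decrement along $P^*$ and by the insertion of $r_q,\widetilde{s}_q$ into the saturated set. As written, that part of the proof is not complete and the route you outline is not shown to close.
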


\subsection{Implications of Lemma~\ref{lemma: vlerev}}
\paragraph*{Implication 1: A simplified algorithm.} 
Recall that $r_i$ becomes ready at time $t$ if $\varphi_{t}(\widetilde{P}_i) \geq \varphi(P_i)$, 
and this is the only reason that we need to compute $\widetilde{P}_i$. 
By Eq.~\eqref{eq: v}, to determine whether $r_i$ is ready, 
we only need to compare $\varphi(P_i)$ and $\gamma(t-a(r_i))$. 
Specifically, whenever $P_i$ is updated at some time $t$, 
we check whether $r_i$ becomes ready 
(i.e., $\gamma(t-a(r_i)) \geq \varphi(P_i)$). 
If not, we compute the following \textit{ready timing} 
$t^{rdy}_i$ for $r_i$: 
\[
t^{rdy}_i = t + \frac{\varphi(P_i) - \gamma(t-a(r_i))}{\gamma}.
\]
As a result, in the simplified algorithm, we do not need to compute $\widetilde{P}_i$ explicitly. 

\paragraph*{Implication 2: An upper bound for $\cost(M^{VRM}, \mt^{VRM})$.} 
Observe that for any request $r$ and server $s$ that are 
matched together at time $\mt(r,s)$, 
by the triangle inequality, 
we have 
\[
\mt(r,s)-a(s) \leq \mt(r,s)-a(r) + |a(r)-a(s)|.
\] 
Thus, the distance and delay cost of $(r,s)$ can be upper bounded as follows: 
\begin{align*}
&|\pos(r)-\pos(s)| + \mt(r,s)-a(r) + \mt(r,s)-a(s) \\
\leq &|\pos(r)-\pos(s)| + 2(\mt(r,s)-a(r)) + |a(r)-a(s)|\\
= &D(r,s) + 2(\mt(r,s)-a(r)).
\end{align*}
Thus,
\begin{equation}\label{eq: imp2}
\cost(M^{VRM}, \mt^{VRM}) \leq D(M^{VRM}) 
+ 2\sum_{(r_i, s_j) \in M^{VRM}}{\big(\mt^{VRM}(r_i, s_j)-a(r_i)\big)}.
\end{equation}

By Lemma~\ref{lemma: vlerev} (Eq.~\eqref{eq: vlere}), after $E_{\nu-1}$ is processed, 
$\varphi_{t_{\nu-1}}(\widetilde{P}_i) \leq \varphi(P_i)$. 
Because no event is processed between $E_{\nu-1}$ and $E_{\nu}$, 
when $r_i$ becomes ready and matched by the VRM algorithm at time $t_{\nu}$, we must have  
\[\varphi(P_i) = \varphi_{t_{\nu}}(\widetilde{P}_i) = \widetilde{\varphi}^{pre}_{i,\nu}.\] 
Moreover, we have $\widetilde{\varphi}^{pre}_{i,\nu} 
= \widetilde{\varphi}^{post}_{i,\nu-1} + \gamma(t_{\nu} - t_{\nu-1})
= \gamma(t_{\nu}-a(r_i))$, where the last equality holds due to
Lemma~\ref{lemma: vlerev} (Eq.~\eqref{eq: v}).   
Thus, for any $(r_i, s_j) \in M^{VRM}$, we have  
$\varphi(P^*_i) = \gamma(t_{\nu}-a(r_i)) = \gamma(\mt^{VRM}(r_i, s_j)-a(r_i))$, or equivalently,  
$\mt^{VRM}(r_i, s_j)-a(r_i) = \frac{1}{\gamma} \varphi(P^*_i)$. 
Combining with Eq.~\eqref{eq: imp2}, Lemma~\ref{lemma: DUB}, and $\gamma = 3$, we then have 
\begin{equation}\label{eq: costUB}
\cost(M^{VRM}, \mt^{VRM}) = O(1)\sum_{i=1}^{m}{\varphi(P^*_i)}.
\end{equation}

\subsection{Proof of Theorem~\ref{thrm: main}} 
Recall that by Eq.~\eqref{eq: OPTD}, the optimal cost is $D(M^{OPT})$. 
By Eq.~\eqref{eq: costUB}, to prove Theorem~\ref{thrm: main}, it suffices to show
\begin{equation}\label{eq: maingoal1}
\sum_{i=1}^{m}{\varphi(P^*_i)} = O(\sqrt{m} \log^2 m) D(M^{OPT}).
\end{equation}

The proof is similar to that in~\cite{nayyar2017input}, and we give the proof in 
Appendix~\ref{appendix: eq: maingoal1}. 
The main difference is that in the MBPMD problem, servers arrive over time. 
However, in~\cite{nayyar2017input}, all servers arrive in the beginning.  
In our proof, we replace servers that arrive in the future 
with proper MV servers. 
For example, to relate $\varphi(P^*_i)$ to $D(M^{OPT})$, 
\cite{nayyar2017input} considered $r_i$'s augmenting path $P$ 
in $M^{OFF} \oplus M^{OPT}$, where $M^{OFF}$ is the offline matching 
right before it is augmented by $r_i$'s minimum augmenting path.  
Let $r_p$ be the last request in $P$. 
If $\ter(P)$ arrives in the future, 
we construct a virtual augmenting path $\widetilde{P}$ of $r_i$ 
by replacing $\ter(P)$ with $\widetilde{s}_p$ in $P$. 
Because $\ter(P)$ arrives in the future, 
$\varphi_{t}(\widetilde{P}) \leq \varphi(P)$, where $t$ is the time when 
$r_i$ is matched by the VRM algorithm. 
By the design of the VRM algorithm, $r_i$ becomes ready at time $t$, 
which implies $\varphi(P^*_i) \leq \varphi_{t}(\widetilde{P}) 
\leq \varphi(P) \leq \gamma D(M^{OPT})$.

Another difference is that in~\cite{nayyar2017input}, 
the competitive ratio is input sensitive in the sense that 
the competitive ratio is a function of the server locations. 
However, because servers are not known in advance in the MBPMD problem, 
we do not consider input sensitive analysis, and thus  
our proof is simpler. 

\subsection{Proof of Lemma~\ref{lemma: vlerev}}
We prove Lemma~\ref{lemma: vlerev} by induction on $w$. 
When $w = 1$, $E_w$ is the event that $r_i$ arrives and thus $t_1 = a(r_i)$. 
Clearly, $\varphi_{t_1}(r_i\widetilde{s}_i) = 0$. 
Combining with Lemma~\ref{lemma: virtualgeq0}, 
we then have $\widetilde{\varphi}_{i,1}^{post} = 0$. 
Thus, Eq.~\eqref{eq: v} holds when $w = 1$. 
Moreover, by Lemma~\ref{lemma: realgeq0}, Eq.~\eqref{eq: vlere} also holds when $w = 1$. 

Assume that Eq.~\eqref{eq: vlere} and Eq.~\eqref{eq: v} 
hold for some $w \in \{1,2, \cdots, \nu-2\}$. 
We will prove that Eq.~\eqref{eq: vlere} and Eq.~\eqref{eq: v} hold for $w+1$, which completes the proof. 
Because no event occurs between $E_w$ and $E_{w+1}$, $P_i$ and $\widetilde{P}_i$ 
do not change during $E_w$ and $E_{w+1}$. 
In addition, after processing $E_w$, we must have $t^{rdy}_i \geq t_{w+1}$. 
(Otherwise, if $t^{rdy}_i < t_{w+1}$, then an event (of type AU) 
should occur between $E_w$ and $E_{w+1}$, which leads to a contradiction.) 
Thus, by the induction hypothesis $\widetilde{\varphi}^{post}_{i,w} \leq \varphi^{post}_{i,w}$, 
we then have
\begin{equation}\label{eq: pre}
\widetilde{\varphi}^{pre}_{i,w+1} \leq \varphi^{pre}_{i,w+1}.
\end{equation}
Moreover, by the induction hypothesis $\widetilde{\varphi}^{post}_{i,w} = \gamma(t_w-a(r_i))$, 
we have
\begin{equation}\label{eq: tildephipre}
\widetilde{\varphi}^{pre}_{i,w+1} = \widetilde{\varphi}^{post}_{i,w} + \gamma(t_{w+1}-t_w) 
                                  = \gamma(t_{w+1}-a(r_i)).
\end{equation}

We divide the proof into three cases according to the type of $E_{w+1}$:
\paragraph*{Case 1: Event $E_{w+1}$ is of type RA.} In this case, $r_i$'s real and virtual minimum augmenting paths do not change due to $E_{w+1}$. 
Specifically,
\[
P^{post}_{i,w+1} = P^{pre}_{i,w+1} \text{ and } \varphi^{post}_{i,w+1} = \varphi^{pre}_{i,w+1}
\]
and 
\[
\widetilde{P}^{post}_{i,w+1} = \widetilde{P}^{pre}_{i,w+1} \text{ and } 
\widetilde{\varphi}^{post}_{i,w+1} = \widetilde{\varphi}^{pre}_{i,w+1}
\]
Thus, by Eq.~\eqref{eq: pre} and Eq.~\eqref{eq: tildephipre}, 
Eq.~\eqref{eq: vlere} and Eq.~\eqref{eq: v} hold for $w+1$ in this case. 

\paragraph*{Case 2: Event $E_{w+1}$ is of type SA.} 
Assume that in Event $E_{w+1}$, a server $s$ arrives. 
Because $\widetilde{P}_i$ does not change due to the arrival of a new server, 
we then have 
\begin{equation}\label{eq: case2v}
\widetilde{P}^{post}_{i,w+1} = \widetilde{P}^{pre}_{i,w+1} \text{ and } 
\widetilde{\varphi}^{post}_{i,w+1} = \widetilde{\varphi}^{pre}_{i,w+1}.
\end{equation}
Thus, by Eq.~\eqref{eq: tildephipre}, Eq.~\eqref{eq: v} holds for $w+1$ in this case.
 
Assume that $\overrightarrow{r_p, s_q}$ is the last directed edge in $P^{post}_{i,w+1}$. 
We first consider the subcase where $s_q \neq s$. 
In this subcase, $P_i$ is not affected by the arrival of $s$. 
We then have 
\[
P^{post}_{i,w+1} = P^{pre}_{i,w+1} \text{ and } \varphi^{post}_{i,w+1} = \varphi^{pre}_{i,w+1}.
\]
Thus, by Eq.~\eqref{eq: pre} and Eq.~\eqref{eq: case2v}, Eq.~\eqref{eq: vlere} holds for $w+1$ in this subcase. 

Next, consider the subcase where $s_q = s$. 
Recall that $\overrightarrow{r_p, s_q}$ is the last directed edge in $P^{post}_{i,w+1}$. 
Consider a virtual augmenting path $\widetilde{P}$ 
obtained by replacing $s_q$ with $\widetilde{s}_p$ in $P^{post}_{i,w+1}$. 
Because $D(r_p, s_q) \geq a(s_q) - a(r_p) = t_{w+1} - a(r_p) = D_{t_{w+1}}(r_p, \widetilde{s}_p)$, 
we have 
\[
\varphi^{post}_{i,w+1} \geq \varphi_{t_{w+1}}(\widetilde{P}).
\]
Clearly, $\widetilde{P}$ is a valid virtual augmenting path of $r_i$ after $E_{w+1}$. 
Thus, by the optimality of $\widetilde{\varphi}^{post}_{i,w+1}$, 
we then have 
$\varphi_{t_{w+1}}(\widetilde{P}) \geq \widetilde{\varphi}^{post}_{i,w+1}$. 
Therefore, Eq.~\eqref{eq: vlere} also holds for $w+1$ in this subcase.

\paragraph*{Case 3: Event $E_{w+1}$ is of type AU.} 
The proof of this case follows immediately from  
Eq.~\eqref{eq: pre}, Eq.~\eqref{eq: tildephipre}, and the next lemma. 
\begin{lemma}\label{lemma: AU} 
If Eq.~\eqref{eq: vlere} and Eq.~\eqref{eq: v} hold 
for some $w \in \{1, 2, \cdots, \nu-2\}$, 
and Event $E_{w+1}$ is of type AU, then 
\[
\varphi_{i,w+1}^{post} \geq \varphi_{i,w+1}^{pre} \text{ and }
\widetilde{\varphi}_{i,w+1}^{post} = \widetilde{\varphi}_{i,w+1}^{pre}.
\]
\end{lemma}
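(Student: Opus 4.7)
Let $P_j$ denote the real augmenting path of $r_j$ used in Event~$E_{w+1}$, going to a free server $s^*$ in $M^{OFF}_{\mathrm{old}}$ (the offline matching just before $E_{w+1}$, so that $M^{OFF}_{\mathrm{new}}=M^{OFF}_{\mathrm{old}}\oplus E(P_j)$). Since $r_j$ becomes ready at time $t_{w+1}$ and $P_j$ is its real minimum, we have $\varphi(P_j)=\gamma(t_{w+1}-a(r_j))$ by combining the triggering condition with Eq.~\eqref{eq: v} applied to $r_j$ (available by the outer chronological induction on events over all requests). The plan is: given any post-augmentation $M^{OFF}_{\mathrm{new}}$-augmenting path $P$ from $r_i$ (real for the first claim, virtual for the second), form $F:=E(P)\oplus E(P_j)$ and extract from $F$ an $M^{OFF}_{\mathrm{old}}$-augmenting path $P^*$ from $r_i$ with $\gamma$-net-cost at most that of $P$; the conclusions then follow from the minimality of $\varphi^{pre}_{i,w+1}$ and $\widetilde{\varphi}^{pre}_{i,w+1}$ over $M^{OFF}_{\mathrm{old}}$-augmenting paths.

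The key structural observation is that any edge shared by $E(P)$ and $E(P_j)$ must point in opposite directions in the two paths, because ``forward in $P$'' means not in $M^{OFF}_{\mathrm{new}}$ while ``forward in $P_j$'' means not in $M^{OFF}_{\mathrm{old}}$, and for an edge of $P_j$ these statuses are flipped. As a consequence every vertex has degree $\leq 2$ in $F$ (a degree-$4$ vertex would require no shared edge at a vertex internal in both paths, which I will show is impossible using the matching constraints), so $F$ decomposes into vertex-disjoint $M^{OFF}_{\mathrm{old}}$-alternating paths and cycles. Its odd-degree vertices are exactly $\{r_i,\ter(P),r_j,s^*\}$---all free in $M^{OFF}_{\mathrm{old}}$---and since $M^{OFF}_{\mathrm{old}}\oplus F$ is a matching of size $|M^{OFF}_{\mathrm{old}}|+2$, both extracted paths are $M^{OFF}_{\mathrm{old}}$-augmenting, paired as either (A) $\{r_i,\ter(P)\},\{r_j,s^*\}$ or (B) $\{r_i,s^*\},\{r_j,\ter(P)\}$ (the cross pairing is ruled out by a parity check). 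Letting $P^*$ and $Q$ be the components containing $r_i$ and $r_j$, accounting for shared edges (which contribute $(\gamma-1)D_e$ to $\varphi(P)+\varphi(P_j)$ but cancel in $F$) yields
\[
\varphi(P^*)+\varphi(Q)+\sum_k\varphi(C_k)=\varphi(P)+\varphi(P_j)-(\gamma-1)\sum_{e\text{ shared}}D_e,
\]
and each cycle $C_k$ contributes non-negatively because its $\gamma$-net-cost equals the sum of forward-edge slacks once the $M^{OFF}_{\mathrm{old}}$-duals telescope around the cycle (backward-edge slacks vanish by the tight invariant~\eqref{eq: dualtight}, forward ones are non-negative by~\eqref{eq: dualrelaxed}).

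For the real claim, take $P=P^{post}_{i,w+1}$: then $Q$ is always a real $M^{OFF}_{\mathrm{old}}$-augmenting path from $r_j$, so by the minimality of $P_j$ we have $\varphi(Q)\geq\varphi(P_j)$, yielding $\varphi(P^*)\leq\varphi(P)$ and hence $\varphi^{pre}_{i,w+1}\leq\varphi^{post}_{i,w+1}$. For the virtual claim, the inequality $\widetilde{\varphi}^{post}_{i,w+1}\leq\widetilde{\varphi}^{pre}_{i,w+1}$ is immediate from the trivial path $r_i\widetilde{s}_i$, whose cost equals $\gamma(t_{w+1}-a(r_i))=\widetilde{\varphi}^{pre}_{i,w+1}$ by Eq.~\eqref{eq: tildephipre}. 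For the matching lower bound, take $P=\widetilde{P}^{post}_{i,w+1}$: Case~A delivers a virtual $P^*$, and minimality over $M^{OFF}_{\mathrm{old}}$-virtual paths gives $\widetilde{\varphi}^{pre}_{i,w+1}\leq\varphi_{t_{w+1}}(P^*)\leq\widetilde{\varphi}^{post}_{i,w+1}$; Case~B delivers only a real $P^*$, yielding $\varphi^{pre}_{i,w+1}\leq\varphi_{t_{w+1}}(\widetilde{P}^{post}_{i,w+1})$, which I upgrade to the desired bound using the scheduling observation that $r_i$ cannot be strictly ready at $t_{w+1}$ (otherwise it would have been processed in place of $r_j$), forcing $\widetilde{\varphi}^{pre}_{i,w+1}=\gamma(t_{w+1}-a(r_i))\leq\varphi^{pre}_{i,w+1}$.

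The main obstacle I expect is the virtual Case~B, where the decomposition naturally hands back a real augmenting path $P^*$ instead of a virtual one; bridging that gap requires the scheduling argument just mentioned, which itself uses the real part of the lemma (applied inductively across requests) to ensure $\varphi^{pre}_{i,w+1}$ did not drop below $\widetilde{\varphi}^{pre}_{i,w+1}$ at any earlier event. A secondary but technically delicate point is establishing the max-degree-$2$ property of $F$ with precisely the four odd-degree vertices $\{r_i,\ter(P),r_j,s^*\}$, which is what keeps the decomposition clean and makes the $\gamma$-net-cost bookkeeping above valid.
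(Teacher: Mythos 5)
Your proof is correct in substance but takes a genuinely different route from the paper's. The paper performs an explicit path surgery: it partitions $P$ (or $\widetilde{P}$) at the intersections with $P_j$ into a head, backs, wings, and tail, constructs $P'$ as the head concatenated with a $P_j$-suffix, and bounds the back/wing costs via a dedicated lemma (the paper's Lemma on the $\sgn$-signed telescoping, plus Claims on entry/exit edges). You instead form $F=E(P)\oplus E(P_j)$, argue that $F$ decomposes into exactly two $M^{pre}_{w+1}$-augmenting paths plus alternating cycles, establish the accounting identity
\[
\varphi(P^*)+\varphi(Q)+\sum_k\varphi(C_k)=\varphi(P)+\varphi(P_j)-(\gamma-1)\sum_{e\text{ shared}}D_e,
\]
and finish using nonnegativity of the cycle $\gamma$-net-costs (a dual-telescoping argument) together with the minimality of $P_j$ (real case, and Case A) or the readiness condition for $r_j$ (Case B). Your cycle-nonnegativity step is in spirit the same as the paper's Claim that a backward wing together with the opposing $P_j$-subpath forms a cycle of nonnegative $\gamma$-net-cost (proved there via Lemma~\ref{lemma: diffM2}), but you deploy it uniformly rather than per-wing. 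Overall the symmetric-difference route is a cleaner, more canonical argument and avoids the paper's rather delicate head/back/wing/tail bookkeeping and the case split on $\sgn(s^X_1,r^X_{cr})$.

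A few spots need to be tightened. First, the max-degree-$\leq 2$ claim: since $P_j$ alternates with respect to both $M^{pre}_{w+1}$ and $M^{post}_{w+1}$, and $P$ alternates with respect to $M^{post}_{w+1}$, any vertex internal to both must have its (unique) $M^{post}_{w+1}$-edge shared by $P$ and $P_j$, which kills degree $4$; a similar argument using the endpoints $r_j,s^*$ and their $M^{post}_{w+1}$-edges being in $E(P_j)$ kills degree $3$. Second, the pairing of $\{r_i,\ter(P),r_j,s^*\}$ is request-to-server because a bipartite alternating path between two $M^{pre}_{w+1}$-unsaturated vertices has odd length, which forces its ends on opposite sides. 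Third, in your Case~B you assert that the decomposition ``yields $\varphi^{pre}_{i,w+1}\leq\varphi_{t_{w+1}}(\widetilde{P}^{post}_{i,w+1})$''; this hidden step requires $\varphi_{t_{w+1}}(Q)\geq\varphi(P_j)$ where $Q$ is the \emph{virtual} $M^{pre}_{w+1}$-augmenting path of $r_j$, which follows because $r_j$ is ready at $t_{w+1}$ (the virtual minimum $\gamma$-net-cost of $r_j$ dominates its real minimum $\gamma(P_j)$, and $Q$ dominates the virtual minimum). Finally, the claim $\varphi(P_j)=\gamma(t_{w+1}-a(r_j))$ and the ``outer chronological induction'' you invoke to get it are not actually used anywhere in the argument; only the (definitional) readiness of $r_j$ is needed, so you can drop that tangent and sidestep any worry about circularity. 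The ``scheduling observation'' you call on in Case~B is exactly the inequality $\widetilde{\varphi}^{pre}_{i,w+1}\leq\varphi^{pre}_{i,w+1}$, which the paper obtains directly from the inductive hypothesis Eq.~\eqref{eq: vlere} for $w$ together with the fact that no event occurs strictly between $E_w$ and $E_{w+1}$; your phrasing and the paper's are two views of the same fact.
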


\begin{proof}
Let $M^{pre}_{w+1}$ be the offline matching right before $E_{w+1}$ occurs 
and $M^{post}_{w+1}$ be the offline matching right after $E_{w+1}$ is processed. 
Assume that in $E_{w+1}$, the VRM algorithm augments $M^{pre}_{w+1}$ 
by request $r^*$'s minimum augmenting path, denoted by $P^*$. 
Thus, $M^{post}_{w+1} = M^{pre}_{w+1} \oplus E(P^*) 
= (M^{pre}_{w+1} \setminus E(P^*)) \cup (E(P^*) \setminus M^{pre}_{w+1})$. 
Note that $P^{post}_{i,w+1}$ and $\widetilde{P}^{post}_{i,w+1}$ are $M^{post}_{w+1}$-augmenting paths.

Observe that if $P^{post}_{i,w+1}$ does not use any edge in $P^*$ 
(i.e., $E(P^{post}_{i,w+1}) \cap E(P^*) = \varnothing$), 
then $P^{post}_{i,w+1}$ is also a real $M^{pre}_{w+1}$-augmenting path of $r_i$\footnote{A path $P$ 
is said to be $r$'s $M$-augmenting path (or an $M$-augmenting path of $r$) if $P$ is 
an $M$-augmenting path originating at $r$.} (in particular, 
$P^{post}_{i,w+1}$ is a real $(M^{pre}_{w+1} \setminus E(P^*))$-augmenting path). 
By the optimality of $\varphi^{pre}_{i,w+1}$, we have 
$\varphi^{post}_{i,w+1} = \varphi(P^{post}_{i,w+1}) \geq \varphi^{pre}_{i,w+1}$ 
as desired. 
Similarly, if $\widetilde{P}^{post}_{i,w+1}$ does not use any edge in $P^*$, 
then $\widetilde{P}^{post}_{i,w+1}$ is also a virtual $M^{pre}_{w+1}$-augmenting path 
of $r_i$. Thus, 
$\widetilde{\varphi}^{post}_{i,w+1} = \varphi_{t_{w+1}}(\widetilde{P}^{post}_{i,w+1}) 
\geq \widetilde{\varphi}^{pre}_{i,w+1}$ as desired. 
Therefore, we assume $E(P^{post}_{i,w+1}) \cap E(P^*) \neq \varnothing$ 
and $E(\widetilde{P}^{post}_{i,w+1}) \cap E(P^*) \neq \varnothing$. 

To prove $\varphi_{i,w+1}^{post} \geq \varphi_{i,w+1}^{pre}$, 
for any $r_i$'s real $M^{post}_{w+1}$-augmenting path $P$ (e.g., $P^{post}_{i,w+1}$), 
we construct a real $M^{pre}_{w+1}$-augmenting path $P'$ of $r_i$ such that 
\[
\varphi(P') \leq \varphi(P).
\]
As a result, when $P = P^{post}_{i,w+1}$, we have 
$\varphi_{i,w+1}^{pre} \leq \varphi(P') \leq \varphi(P) = \varphi^{post}_{i,w+1}$ 
as desired.  

To prove $\widetilde{\varphi}_{i,w+1}^{post} 
= \widetilde{\varphi}_{i,w+1}^{pre}$, 
we first prove 
$\widetilde{\varphi}_{i,w+1}^{post} \geq \widetilde{\varphi}_{i,w+1}^{pre}$. 
For any $r_i$'s virtual $M^{post}$-augmenting path $\widetilde{P}$ 
(e.g., $\widetilde{P}^{post}_{i,w+1}$), 
we construct a real $M^{pre}$-augmenting path $P''$ of $r_i$ such that 
\[
\varphi(P'') \leq \varphi_{t_{w+1}}(\widetilde{P}).
\] 
Because Eq.~\eqref{eq: vlere} holds for $w$, 
we have $\widetilde{\varphi}^{pre}_{i, w+1} \leq \varphi^{pre}_{i,w+1}$. 
As a result, when $\widetilde{P} = \widetilde{P}^{post}_{i,w+1}$, we have 
\[
\widetilde{\varphi}_{i,w+1}^{pre} \leq \varphi^{pre}_{i,w+1} \leq \varphi(P'') 
\leq \varphi_{t_{w+1}}(\widetilde{P})
= \widetilde{\varphi}^{post}_{i,w+1}.
\]
  
Next, we prove $\widetilde{\varphi}^{post}_{i,w+1} \leq \widetilde{\varphi}_{i,w+1}^{pre}$. 
Because $r_i\widetilde{s}_i$ is a valid virtual augmenting path for $r_i$ after $E_{w+1}$ is processed 
and Eq.~\eqref{eq: v} holds for $w$, we then have 
$\widetilde{\varphi}^{post}_{i,w+1}
\leq \varphi_{t_{w+1}}(r_i\widetilde{s}_i) = \gamma(t_{w+1}-a(r_i)) = \widetilde{\varphi}_{i,w+1}^{pre}$.

\paragraph*{Construction of $P'$ and $P''$.} 
We only give a proof sketch here.
The complete proof can be found in Appendix~\ref{appendix: lemma: AU}. 
The constructions of $P'$ and $P''$ are similar. 
Recall that $P$ is any $r_i$'s real $M^{post}_{w+1}$-augmenting path 
(e.g., $P^{post}_{i,w+1}$). 
Assume that $s$ is the only server shared by both $P$ and $P^*$.  
To construct $P'$, we first traverse $P$ until $s$ is reached, 
after which we traverse $P^*$. We can then prove 
$\varphi(P') \leq \varphi(P)$ by the optimality of $\varphi(P^*)$. 
The main difficulty of the proof is to handle the case 
where $P$ and $P^*$ intersect at multiple servers. 
To this end, we partition $P$ into subpaths at the intersections 
of $P$ and $P^*$, and derive lower bounds of these subpaths' $\gamma$-net-costs.  
\end{proof}

\section{Concluding Remarks}
A natural open question regarding Theorem~\ref{thrm: main} is whether the competitive ratio 
is asymptotically tight. For the MBPM problem, RM algorithm's competitive ratio in~\cite{nayyar2017input}
is almost tight (up to a polylogarithmic factor). 
However, because matching can be delayed in the MBPMD problem, 
the lower bound instance and its analysis in~\cite{nayyar2017input} cannot be directly extended to the MBPMD problem.

\bibliographystyle{alpha}
\bibliography{ref}
\clearpage
\appendix
\section{Missing Proofs in Section~\ref{sec: algo}}
\subsection{Correctness of the Minimum Augmenting Path Subroutine}\label{appendix: correctness}
Recall that the subroutine sets $P_i$ as the shortest path from $r_i$ to the set of real free servers 
in $G_{i,t}$, 
and sets $\widetilde{P}_i$ as the shortest path from $r_i$ to the set of MV servers in $G_{i,t}$. 
Due to the construction of $G_{i,t}$, the output $P_i$ (respectively, $\widetilde{P}_i$) 
is a real (respectively, virtual) 
$M^{OFF}$-augmenting path from $r_i$ to some free real (respectively, virtual) servers. 
Moreover, every real or virtual $M^{OFF}$-augmenting path starting from $r_i$ 
has a corresponding path in $G_{i,t}$. Thus, it suffices to show that 
for any path $P$ in $G_{i,t}$, its distance in $G_{i,t}$ 
equals $\varphi(P)$ (or $\varphi_{t}(P)$ if $P$ connects $r_i$ to an 
MV server).  
For any path $P$ in $G_{i,t}$, define 
$sl(P)$ as the total edge weight in $P$.
The correctness of the subroutine is thus a corollary of the following lemmas.

\begin{lemma}\label{lemma: phisl}
Let $P$ be any path from $r_i$ to any free real server in $G_{i,t}$. 
Then $\varphi(P) = sl(P)$.
\end{lemma}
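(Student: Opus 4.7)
The plan is to show $sl(P) = \varphi(P)$ by a direct telescoping argument that exploits Invariants \eqref{eq: dual0} and \eqref{eq: dualtight}, which connect the dual variables to the offline matching $M^{OFF}$. Write the path as $P = r_1's_1'r_2's_2'\cdots r_\ell's_\ell'$ with $r_1' = r_i$, so that $\overrightarrow{E}(P)$ consists of the forward edges $\overrightarrow{r_k',s_k'}$ for $1 \le k \le \ell$ (none of which lie in $M^{OFF}$) and the backward edges $\overrightarrow{s_k',r_{k+1}'}$ for $1 \le k \le \ell-1$ (all of which lie in $M^{OFF}$).

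First I would expand $sl(P)$ by summing the edge weights from the construction of $G_{i,t}$. The $M^{OFF}$-edges $\overrightarrow{s_k', r_{k+1}'}$ contribute $0$ by definition, so
\[
sl(P) = \sum_{k=1}^{\ell}\bigl(\gamma D(r_k', s_k') - z(r_k') - z(s_k')\bigr).
\]
Next I would rewrite $\varphi(P)$ using Invariant \eqref{eq: dualtight}: since every backward edge $(r_{k+1}', s_k')$ belongs to $M^{OFF}$, we have $D(r_{k+1}', s_k') = z(r_{k+1}') + z(s_k')$, which gives
\[
\varphi(P) = \gamma \sum_{k=1}^{\ell} D(r_k', s_k') - \sum_{k=1}^{\ell-1}\bigl(z(r_{k+1}') + z(s_k')\bigr).
\]

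Subtracting, the $\gamma D(r_k', s_k')$ terms cancel, and I would collect coefficients of each dual variable. Every internal request $r_k'$ ($2 \le k \le \ell$) contributes $-z(r_k')$ from the forward edge $\overrightarrow{r_k', s_k'}$ and $+z(r_k')$ from the backward edge $\overrightarrow{s_{k-1}', r_k'}$, hence cancels; symmetrically every internal server $s_k'$ ($1 \le k \le \ell-1$) cancels. What survives is only $-z(r_i) - z(s_\ell')$ at the two endpoints of $P$. Finally, $r_i$ and $s_\ell'$ are both unsaturated by $M^{OFF}$ (the latter because $P$ terminates at a free real server, the former by assumption that $P_i$ starts at a free request), so Invariant \eqref{eq: dual0} forces $z(r_i) = z(s_\ell') = 0$, yielding $sl(P) = \varphi(P)$.

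The argument is essentially a bookkeeping exercise, so there is no genuine obstacle; the only thing that requires care is the telescoping at the endpoints, namely verifying that exactly $r_i$ and $s_\ell'$ (and no intermediate vertex) fail to cancel, and then invoking the correct invariant to zero out their duals. This also makes clear why the same proof will work verbatim for virtual augmenting paths in the companion lemma, with Invariant \eqref{eq: dualrelaxed2} replacing the role of $D$ on the final edge and $z(\widetilde{s}_p) = 0$ coming again from Invariant \eqref{eq: dual0}.
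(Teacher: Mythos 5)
Your proof is correct and follows essentially the same route as the paper's: both expand the $\gamma$-net-cost and the slack using Invariant~\eqref{eq: dualtight} on the matched edges, telescope the dual variables so that only $z(r_i)$ and $z(s'_\ell)$ survive, and then kill those two terms with Invariant~\eqref{eq: dual0}. The only cosmetic difference is that you organize it as $sl(P)-\varphi(P)$ while the paper rewrites $\varphi(P)$ term-by-term into $sl(P)$, and your one-line justification that $r_i$ is free (it is the request currently being processed, hence unsaturated by $M^{OFF}$) is exactly what the paper uses implicitly.
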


\begin{proof}
By the construction of $G_{i,t}$, 
we can write $P$ as $r'_1s'_1r'_2s'_2 \cdots r'_{\ell}s'_{\ell}$, where $r'_1 = r_i$ and $s'_{\ell}$ is a 
real free server. 
Moreover, for all $1 \leq k \leq \ell-1$, $\overrightarrow{s'_k, r'_{k+1}} \in M^{OFF}$. 
Thus, 
\begin{align*}
\varphi(P) 
= &\sum_{k=1}^{\ell}{\gamma D(r'_k,s'_k)}     
                   -\sum_{k=1}^{\ell -1}{D(s'_k, r'_{k+1})}\\   
\stackrel{Invariant~\eqref{eq: dualtight}}{=}        
  &\sum_{k=1}^{\ell}{\gamma D(r'_k,s'_{k})}  
                   -\sum_{k=1}^{\ell - 1}{\big(z(s'_k)+z(r'_{k+1})\big)}\\         
= &\sum_{k=1}^{\ell}{\left(sl(\overrightarrow{r'_k,s'_{k}})+z(r'_k) + z(s'_{k})\right)}  
                   -\sum_{k=1}^{\ell - 1}{\big(z(s'_k)+z(r'_{k+1})\big)}\\                     
= &\sum_{k=1}^{\ell}{sl(\overrightarrow{r'_k,s'_{k}})} + z(r'_1) + z(s'_{\ell})\\
\stackrel{Invariant~\eqref{eq: dual0}}{=} 
   &\sum_{k=1}^{\ell}{sl(\overrightarrow{r'_k,s'_{k}})}\\
= &sl(P), 
\end{align*}
where the last equality holds because by the construction of $G_{i,t}$, 
for all $1 \leq k \leq \ell -1$, $sl(\overrightarrow{s'_k,r'_{k+1}}) = 0$.  
\end{proof}

\begin{lemma}\label{lemma: phisl2}
Let $\widetilde{P}$ be any path from $r_i$ to any MV server in $G_{i,t}$. 
Then $\varphi_{t}(\widetilde{P}) = sl(\widetilde{P})$.
\end{lemma}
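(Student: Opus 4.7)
The plan is to mirror the proof of Lemma~\ref{lemma: phisl} almost verbatim, with the only essential change being the handling of the terminal edge, which now connects a request $r_p$ to its MV server $\widetilde{s}_p$ rather than to a real free server. The key point is that the definition of $\varphi_t$ already treats this final edge specially (using $D_t$ instead of $D$), and the slack assigned to $\overrightarrow{r_p,\widetilde{s}_p}$ in $G_{i,t}$ is defined consistently, so the same telescoping argument should go through.

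Concretely, I would first write $\widetilde{P}$ in the form $r'_1 s'_1 r'_2 s'_2 \cdots r'_{\ell-1} s'_{\ell-1} r'_{\ell} \widetilde{s}_p$ with $r'_1 = r_i$ and $r'_{\ell} = r_p$, noting that by the construction of $G_{i,t}$ every intermediate edge $\overrightarrow{s'_k, r'_{k+1}}$ lies in $M^{OFF}$ and carries weight zero. Expanding $\varphi_t(\widetilde{P})$ from its definition gives a sum of $\gamma D(r'_k, s'_k)$ for $k < \ell$, plus $\gamma D_t(r_p, \widetilde{s}_p)$, minus the sum of $D(s'_k, r'_{k+1})$ for $k < \ell$. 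I would then replace each $D(s'_k, r'_{k+1})$ by $z(s'_k)+z(r'_{k+1})$ using Invariant~\eqref{eq: dualtight}, and each $\gamma D(r'_k, s'_k)$ (respectively $\gamma D_t(r_p, \widetilde{s}_p)$) by $sl(\overrightarrow{r'_k, s'_k}) + z(r'_k)+z(s'_k)$ (respectively $sl(\overrightarrow{r_p, \widetilde{s}_p}) + z(r_p) + z(\widetilde{s}_p)$), using the definition of slack.

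After this substitution the $z(s'_k)$ terms cancel outright, and the $z(r'_k)$ terms telescope, leaving only $z(r'_1) = z(r_i)$ and $z(\widetilde{s}_p)$ as residuals on top of $\sum_{k<\ell} sl(\overrightarrow{r'_k, s'_k}) + sl(\overrightarrow{r_p, \widetilde{s}_p})$. Invariant~\eqref{eq: dual0} then eliminates both residuals: $z(r_i)=0$ because $r_i$ is a free request (not saturated by $M^{OFF}$), and $z(\widetilde{s}_p)=0$ because every MV server has zero dual. Since the omitted intermediate edges $\overrightarrow{s'_k, r'_{k+1}}$ contribute nothing to $sl(\widetilde{P})$ by construction, the remaining sum equals $sl(\widetilde{P})$ exactly, as required.

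I do not anticipate any real obstacle: the telescoping is identical in shape to the one in Lemma~\ref{lemma: phisl}, and the only new ingredient is the special invariant $z(\widetilde{s}_p)=0$ which is built directly into~\eqref{eq: dual0}. If anything merits a line of care, it is checking that $r_i$ being free really does give $z(r_i)=0$ at the moment $G_{i,t}$ is constructed, but this is immediate from the fact that the dual variables are only updated in Step~\ref{step2} when a request is matched and removed from the free pool.
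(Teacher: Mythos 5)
Your proposal is correct and matches the paper's own proof essentially verbatim: the same telescoping computation from the proof of Lemma~\ref{lemma: phisl}, with the terminal edge $\overrightarrow{r_p,\widetilde{s}_p}$ handled via the slack assigned to MV-server edges in $G_{i,t}$ and the residuals $z(r_i)$, $z(\widetilde{s}_p)$ eliminated by Invariant~\eqref{eq: dual0}. No gaps.
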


\begin{proof}
By the construction of $G_{i,t}$, 
we can write $\widetilde{P}$ as $r'_1s'_1r'_2s'_2 \cdots r'_{\ell}s'_{\ell}$, where $r'_1 = r_i$ 
and $s'_{\ell}$ is the MV server of some request $r_p$. Thus, $s'_{\ell} = \widetilde{s}_p$ and 
$r'_{\ell} = r_p$. 
Moreover, for all $1 \leq k \leq \ell-1$, $\overrightarrow{s'_k, r'_{k+1}} \in M^{OFF}$. 
Thus, 
\begin{align*}
\varphi_{t}(\widetilde{P}) 
= &\gamma D_t(r_p, \widetilde{s}_p) + \sum_{k=1}^{\ell-1}{\gamma D(r'_k,s'_{k})}     
                   -\sum_{k=1}^{\ell-1}{D(s'_k, r'_{k+1})}\\   
\stackrel{Invariant~\eqref{eq: dualtight}}{=}        
   &\gamma D_t(r_p, \widetilde{s}_p) + \sum_{k=1}^{\ell-1}{\gamma D(r'_k,s'_{k})}    
                   -\sum_{k=1}^{\ell-1}{\big(z(s'_k)+z(r'_{k+1})\big)}\\         
= &\sum_{k=1}^{\ell}{\left(sl(\overrightarrow{r'_k,s'_{k}})+z(r'_k) + z(s'_k)\right)}  
                   -\sum_{k=1}^{\ell-1}{\big(z(s'_k)+z(r'_{k+1})\big)}\\                     
= &\sum_{k=1}^{\ell}{sl(\overrightarrow{r'_k,s'_{k}})} + z(r'_1) + z(s'_{\ell})\\
\stackrel{Invariant~\eqref{eq: dual0}}{=} 
   &\sum_{k=1}^{\ell}{sl(\overrightarrow{r'_k,s'_{k}})}\\
= &sl(\widetilde{P}), 
\end{align*}
where the last equality holds because by the construction of $G_{i,t}$, 
for all $1 \leq k \leq \ell-1$, $sl(\overrightarrow{s'_k,r'_{k+1}}) = 0$.  
\end{proof}

\subsection{Proof of the Invariants}\label{appendix: dual}
\begin{lemma}
Assume that Invariants~\eqref{eq: dualrelaxed}, \eqref{eq: dualrelaxed2}, and \eqref{eq: dual0} hold 
before Step~\ref{step1} is executed. 
Then after Step~\ref{step1}, these invariants still hold.
\end{lemma}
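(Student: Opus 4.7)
The plan is to handle each of \eqref{eq: dualrelaxed}, \eqref{eq: dualrelaxed2}, and \eqref{eq: dual0} separately, tracking which $z$-values Step~\ref{step1} actually modifies and bounding those changes via shortest-path arguments in the slack graph $G^{pre}$. Step~\ref{step1} only touches $z(r)$ for $r \in R_{sat} \cup \{r_i\}$ with $sl(r_i, r) < sl(r_i, s^*)$, and only $z(s)$ for real servers $s \in S_t$ with $sl(r_i, s) < sl(r_i, s^*)$. MV servers, yet-to-arrive vertices, and free real servers (which all satisfy $sl(r_i, s) \geq sl(r_i, s^*)$ by optimality of $P_i$ ending at $s^*$) are untouched.

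For \eqref{eq: dual0}, after interpreting ``unsaturated'' against the post-augmentation matching (so that the positive $z(r_i)$ produced by Step~\ref{step1} is absorbed by $r_i$ being newly saturated), the only candidates are MV servers (unchanged) and free real servers in $G^{pre}$, whose $z$-values Step~\ref{step1} leaves alone by the observation above. For \eqref{eq: dualrelaxed2}, fix any $r_p$ with $a(r_p) \leq t$; since $z(\widetilde{s}_p) = 0$ is untouched, it suffices to bound $z'(r_p)$. The case $r_p \notin G^{pre}$ is trivial, so assume $r_p \in G^{pre}$. The triangle inequality along the shortest-path tree of $G^{pre}$ rooted at $r_i$ gives $sl(r_i, \widetilde{s}_p) \leq sl(r_i, r_p) + sl(\overrightarrow{r_p, \widetilde{s}_p})$. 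Combining Lemma~\ref{lemma: phisl2}, the optimality of $\widetilde{P}_i$, and the readiness condition $\varphi_{t}(\widetilde{P}_i) \geq \varphi(P_i) = sl(r_i, s^*)$, I obtain $sl(\overrightarrow{r_p, \widetilde{s}_p}) \geq sl(r_i, s^*) - sl(r_i, r_p)$, which bounds the worst-case increase of $z(r_p)$ by exactly the prior slack in $\overrightarrow{r_p, \widetilde{s}_p}$.

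For \eqref{eq: dualrelaxed}, fix $r \in R$ and $s \in S$ and case-split on membership in $G^{pre}$. If $r \notin G^{pre}$, then $z(r) = 0$ is unchanged and $z(s)$ only decreases, so \eqref{eq: dualrelaxed} is trivially preserved. If $r \in G^{pre}$ but $s \notin S_t$, then $z(s) = 0$ is untouched; the already-established \eqref{eq: dualrelaxed2} combined with Eq.~\eqref{eq: ob} yields $z'(r) \leq \gamma D_{t}(r, \widetilde{s}_r) \leq \gamma D(r, s)$. If both $r \in G^{pre}$ and $s \in S_t$, I split further: when $(r,s) \notin M^{OFF}$, the edge $\overrightarrow{r,s}$ exists in $G^{pre}$, and the triangle inequality $sl(r_i, s) \leq sl(r_i, r) + sl(\overrightarrow{r, s})$, together with $sl(r_i, s) \geq sl(r_i, s^*)$ whenever $z(s)$ is actually decreased, shows that $sl(\overrightarrow{r, s})$ absorbs the combined net change; when $(r,s) \in M^{OFF}$, the unique incoming edge to $r$ in $G^{pre}$ is the zero-weight edge $\overrightarrow{s, r}$, forcing $sl(r_i, r) = sl(r_i, s)$, so $z(r)$ and $z(s)$ move by equal and opposite amounts, preserving $z(r) + z(s) = D(r, s) \leq \gamma D(r, s)$.

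The main obstacle is avoiding a circular dependence: the yet-to-arrive-server case of \eqref{eq: dualrelaxed} needs \eqref{eq: dualrelaxed2}, but \eqref{eq: dualrelaxed2} itself relies on the readiness condition and the $G^{pre}$ shortest-path structure. I therefore establish \eqref{eq: dual0} and \eqref{eq: dualrelaxed2} first using only shortest-path arguments, and then plug \eqref{eq: dualrelaxed2} into the open case of \eqref{eq: dualrelaxed}. A secondary subtlety is that $z(r_i)$ becomes strictly positive after Step~\ref{step1} even though $r_i$ is still unsaturated under the old $M^{OFF}$; I resolve this by treating Steps~\ref{step1} and~\ref{step2} as one conceptual dual update that coincides with the $M^{OFF}$ augmentation, so \eqref{eq: dual0} is evaluated against the post-augmentation matching (under which $r_i$ is saturated).
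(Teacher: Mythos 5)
Your proposal is correct and follows essentially the same structure as the paper's proof: prove \eqref{eq: dual0} and \eqref{eq: dualrelaxed2} via the slack-graph shortest-path tree and the readiness condition, then use \eqref{eq: dualrelaxed2} together with Eq.~\eqref{eq: ob} to handle not-yet-arrived servers in \eqref{eq: dualrelaxed}, with the remaining cases of \eqref{eq: dualrelaxed} dispatched by the triangle inequality on slacks (matched edges give equal-and-opposite dual changes; unmatched in-graph edges have their slack absorb the net increase). One small blemish: in the matched-pair case you invoke $z(r)+z(s)=D(r,s)$, which is Invariant~\eqref{eq: dualtight} and not among the lemma's hypotheses; the paper instead observes that the two changes cancel exactly, leaving $z_1(r)+z_1(s)=z^{pre}(r)+z^{pre}(s)\le\gamma D(r,s)$ by~\eqref{eq: dualrelaxed} alone, which is the cleaner route.
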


\begin{proof}
Assume that the algorithm augments the offline matching by $P_i$ at time $t$. 
In the following proof, $M^{pre}$ and $M^{post}$ refer to the offline matching 
right before and right after it is augmented by $P_i$, respectively.
Moreover, $z^{pre}(\cdot)$ and $z_1(\cdot)$ refer to the dual variables 
right before and right after Step~\ref{step1} is executed, respectively. 
Thus, by the assumption of the lemma, Invariants~\eqref{eq: dualrelaxed}, \eqref{eq: dualrelaxed2}, 
and \eqref{eq: dual0} hold with respect to 
$z^{pre}$ and $M^{pre}$. 
Let $G^{pre} = G_{i, t}$ be the slack graph of $r_i$ right before the offline matching is augmented by $P_i$. 
Thus, $G^{pre}$ is constructed with respect to $M^{pre}$ and $z^{pre}$. 

\paragraph{Proof of Invariant~\eqref{eq: dual0}.} 
Let $r$ be any request that is not saturated by $M^{post}$. 
Thus, $r$ is not saturated by $M^{pre}$ and $r \neq r_i$. Therefore, 
$r$ is not in $G^{pre}$, and Step~\ref{step1} does not change $z(r)$. 
We then have $z_1(r) = z^{pre}(r) = 0$ as desired.
Let $s$ be any real server that is not saturated by $M^{post}$. 
If $s$ arrives after $r_i$ is matched, then $s$ is not in $G^{pre}$ 
and thus Step~\ref{step1} does not change $z(s)$. 
Thus, $z_1(s) = z^{pre}(s) = 0$ as desired. 
If $s$ arrives by the time $r_i$ is matched, $s$ is in $G^{pre}$. 
Recall that $P_i$ is the shortest path from $r_i$ to the set of free real servers in $G^{pre}$. 
The fact that $s$ is not saturated by $M^{post}$ implies that $sl(r_i, s) \geq sl(r_i, s^*)$ (recall that 
$s^* = \ter(P_i)$). 
Thus, after Step~\ref{step1}, $z_1(s) = z^{pre}(s) = 0$. 
Finally, because Step~\ref{step1} never changes the dual variable of a MV server, 
we have $z_1(\widetilde{s}) = z^{pre}(\widetilde{s}) = 0$ for any MV server $\widetilde{s}$.

\paragraph{Proof of Invariant~\eqref{eq: dualrelaxed2}.} 
Let $r_p \in R$ be a request such that $t \geq a(r_p)$. 
Note that the RHS of Invariant~\eqref{eq: dualrelaxed2} increases as $t$ increases. 
Thus, it suffices to prove
$z_1(r_p)+z_1(\widetilde{s}_p) \leq \gamma(t-a(r_p))$. 
If $sl(r_i, r_p) \geq sl(r_i, s^*)$, then Step~\ref{step1} does not change $z(r_p)$. 
Thus, by Invariant~\eqref{eq: dual0}, $z_1(r_p)+z_1(\widetilde{s}_p) = z^{pre}(r_p)+z^{pre}(\widetilde{s}_p) \leq \gamma(t-a(r_p))$ as desired. 

Next, we consider the case where $sl(r_i, r_p) < sl(r_i, s^*)$. 
In this case, we increase $z(r_p)$ by $sl(r_i, s^*) - sl(r_i, r_p)$. 
We only need to show that the increase of $z(r_p)$ 
is at most the slack between $r_p$ and $\widetilde{s}_p$. 
In other words, it suffices to show $sl(r_i, s^*) - sl(r_i, r_p) 
\leq sl(\overrightarrow{r_p, \widetilde{s}_p})$, 
or equivalently, $sl(r_i, s^*) \leq sl(r_i, r_p) 
                 +sl(\overrightarrow{r_p, \widetilde{s}_p})$. 
Observe that by the construction of $G^{pre}$, the RHS of this inequality is $sl(r_i, \widetilde{s}_p)$.

Let $\widetilde{s} = \ter(\widetilde{P}_i)$. Because $r_i$ is ready at time $t$, we have 
$sl(r_i, \widetilde{s}) = \varphi_{t}(\widetilde{P}_i) \geq \varphi(P_i) = sl(r_i, s^*)$. 
Because $\widetilde{P}_i$ is $r_i$'s virtual minimum augmenting path at time $t$, 
we have $sl(r_i, \widetilde{s}_p) \geq sl(r_i, \widetilde{s})$. 
Combining the above two inequalities, we then have 
$sl(r_i, \widetilde{s}_p) \geq sl(r_i, s^*)$ as desired.

\paragraph{Proof of Invariant~\eqref{eq: dualrelaxed}.} 
For any $r \in R$ and $s \in S$, we will prove 
$z_1(r) + z_1(s) \leq \gamma D(r,s)$.
Let $R'$ be the set of requests $r$ such that $r$ is not in $G^{pre}$ or $sl(r_i, r) \geq sl(r_i, s^*)$. 
Thus, if a request $r$ is not in $R'$, then $r$ is in $G^{pre}$ and $sl(r_i, r) < sl(r_i, s^*)$.
We divide the proof into cases. 
\begin{description}
\item[Case 1: $r \in R'$.] 
In this case, Step~\ref{step1} does not change $z(r)$ and thus $z_1(r) = z^{pre}(r)$. 
Because Step~\ref{step1} cannot increase the dual variable of any server, we have $z_1(s) \leq z^{pre}(s)$.
Thus, $z_1(r) + z_1(s) \leq z^{pre}(r) + z^{pre}(s) \leq \gamma D(r,s)$. 

\item[Case 2: $r \notin R'$ and $(r,s) \in M^{pre}$.] 
In this case, there is a directed edge $\overrightarrow{s,r}$ in $G^{pre}$, 
and the only way to reach $r$ from $r_i$ in $G^{pre}$ is via $s$. 
Because $sl(\overrightarrow{s,r}) = 0$, we have $sl(r_i, s) = sl(r_i, r) < sl(r_i, s^*)$. 
Thus, after Step~\ref{step1}, 
$z_1(r) + z_1(s) = z^{pre}(r) + (sl(r_i, s^*) - sl(r_i, r)) + z^{pre}(s) - (sl(r_i, s^*) - sl(r_i, s)) 
= z^{pre}(r) + z^{pre}(s) \leq \gamma D(r,s)$. 

\item[Case 3: $r \notin R'$, $(r,s) \notin M^{pre}$, and $s$ is in $G^{pre}$.]
In this case, there is a directed edge $\overrightarrow{r,s}$ in $G^{pre}$.  
Thus, $sl(r_i, s) \leq sl(r_i, r) + sl(\overrightarrow{r,s})$, or equivalently, 
$sl(r_i, s) - sl(r_i, r) \leq sl(\overrightarrow{r,s})$.
\begin{description}
\item[Case 3A: $sl(r_i, s) < sl(r_i, s^*)$.] 
In this subcase, after Step~\ref{step1}, we have
\begin{align*}
&z_1(s) + z_1(r) \\
=    &z^{pre}(s) - (sl(r_i, s^*) - sl(r_i, s)) + z^{pre}(r) + (sl(r_i, s^*) - sl(r_i, r))\\
\leq &z^{pre}(s) + z^{pre}(r) + sl(\overrightarrow{r,s})\\
=    &z^{pre}(s) + z^{pre}(r) + \gamma D(r,s) - (z^{pre}(r)+z^{pre}(s))\\
=    &\gamma D(r,s). 
\end{align*}
\clearpage
\item[Case 3B: $sl(r_i, s^*) \leq sl(r_i, s)$.] 
Because $sl(r_i, s) \leq sl(r_i, r) + sl(\overrightarrow{r,s})$, 
in this subcase, we have 
$sl(r_i, s^*) \leq sl(r_i, r) + sl(\overrightarrow{r,s})$, 
or equivalently, $sl(r_i, s^*) - sl(r_i, r) \leq sl(\overrightarrow{r,s})$. 
Thus, after Step~\ref{step1},
\begin{align*}
&  z_1(s) + z_1(r) \\
=    &z^{pre}(s) + z^{pre}(r) + (sl(r_i, s^*) - sl(r_i, r))\\
\leq &z^{pre}(s) + z^{pre}(r) + sl(\overrightarrow{r,s})\\
=    &z^{pre}(s) + z^{pre}(r) + \gamma D(r,s) - (z^{pre}(r)+z^{pre}(s))\\
=    &\gamma D(r,s). 
\end{align*}
\end{description}
\item[Case 4: $r \notin R'$, $(r,s) \notin M^{pre}$, and $s$ is not in $G^{pre}$.] 
In this case, because $z_1(s) = 0$ (by Invariant~\eqref{eq: dual0}), 
it suffices to show $z_1(r) \leq \gamma D(r,s)$. 
Observe that because $s$ is not in $G^{pre}$, $a(s) \geq t$. 
Thus, $D(r,s) \geq a(s) - a(r) \geq t - a(r)$. 
Let $\widetilde{s}$ be the MV server of $r$. 
By Invariant~\eqref{eq: dualrelaxed2}, $z_1(r)+z_1(\widetilde{s}) \leq \gamma(t - a(r))$. 
Because $z_1(\widetilde{s}) = 0$, we then have $z_1(r) \leq \gamma(t - a(r)) \leq \gamma D(r,s)$.
\end{description}
\end{proof}

Because Step~\ref{step2} only decreases dual variables for requests in $P_i$, 
Invariants~\eqref{eq: dualrelaxed}, 
\eqref{eq: dualrelaxed2}, and \eqref{eq: dual0} still hold 
after Step~\ref{step2}. 
Thus, combining the above lemma, we have the following lemma.
\begin{lemma} \label{lemma: invariants}
Assume that Invariants~\eqref{eq: dualrelaxed}, \eqref{eq: dualrelaxed2}, \eqref{eq: dual0} hold 
before Step~\ref{step1} is executed. 
Then after Step~\ref{step2}, these invariants still hold.
\end{lemma}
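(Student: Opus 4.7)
The plan is to follow the outline in the proof sketch, treating the two ``easy'' directions first and then focusing on the construction of the auxiliary paths $P'$ and $P''$ for the harder directions.

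First I dispose of the cases where no shared edge exists: if $E(P^{post}_{i,w+1}) \cap E(P^*) = \varnothing$, then $P^{post}_{i,w+1}$ remains a real $M^{pre}_{w+1}$-augmenting path of $r_i$, and the optimality of $P^{pre}_{i,w+1}$ immediately gives $\varphi^{post}_{i,w+1} \geq \varphi^{pre}_{i,w+1}$; the analogous argument handles $\widetilde{P}^{post}_{i,w+1}$. For the inequality $\widetilde{\varphi}^{post}_{i,w+1} \leq \widetilde{\varphi}^{pre}_{i,w+1}$, I use the fact that the trivial virtual augmenting path $r_i\widetilde{s}_i$ remains valid after $E_{w+1}$ and has $\gamma$-net-cost $\gamma(t_{w+1}-a(r_i))$, which by the induction hypothesis~\eqref{eq: v} equals $\widetilde{\varphi}^{pre}_{i,w+1}$. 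After these observations, only the constructions of $P'$ and $P''$ remain.

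For $P'$, I start with the single-intersection case. Given a real $M^{post}_{w+1}$-augmenting path $P$ of $r_i$ and a unique shared server $s$ with $P^*$, I define $P'$ as the concatenation of the prefix of $P$ up to $s$ and the suffix of $P^*$ from $s$. Since the prefix of $P$ uses only edges outside $E(P^*)$, it behaves identically in $M^{pre}_{w+1}$ and $M^{post}_{w+1}$, while the suffix of $P^*$ is $M^{pre}_{w+1}$-alternating and ends at a free real server in $M^{pre}_{w+1}$; a short check of edge directions at $s$ confirms that $P'$ is a valid real $M^{pre}_{w+1}$-augmenting path of $r_i$. To compare $\gamma$-net-costs, the reverse swap $P^{**} = \mathrm{prefix}_{P^*} \circ \mathrm{suffix}_P$ is a real $M^{pre}_{w+1}$-augmenting path of $r^*$; by optimality of $P^*$, $\varphi(P^*) \leq \varphi(P^{**})$, which after cancelling the common prefix gives $\varphi(\mathrm{suffix}_{P^*}) \leq \varphi(\mathrm{suffix}_P)$ and hence $\varphi(P') \leq \varphi(P)$. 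Taking $P = P^{post}_{i,w+1}$ finishes the real case. The construction of $P''$ from a virtual $\widetilde{P}$ is analogous: split $\widetilde{P}$ at the first shared server $s$ and append the suffix of $P^*$, obtaining a real $M^{pre}_{w+1}$-augmenting path. The key inequality $\varphi(\mathrm{suffix}_{P^*}) \leq \varphi_{t_{w+1}}(\mathrm{suffix}_{\widetilde{P}})$ follows because the corresponding virtual swap of $r^*$ has $\gamma$-net-cost at least $\varphi(P^*)$, using the readiness of $r^*$ together with the optimality of $r^*$'s virtual minimum augmenting path at time $t_{w+1}$.

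The main obstacle, as already flagged in the sketch, is the case of multiple shared servers. Since the $\gamma$-net-cost is additive over any partition of a path at shared vertices, the natural approach is to cut $P$ at every server shared with $P^*$, producing alternating segments of $P$ between consecutive shared vertices paired against the parallel segments of $P^*$ joining the same endpoints. One then applies a local swap argument to each such pair, using the optimality of $P^*$ as $r^*$'s minimum augmenting path to show that the $P^*$-segment has no larger $\gamma$-net-cost than the corresponding $P$-segment; summing over the partition yields $\varphi(P') \leq \varphi(P)$. The delicate part is the combinatorial bookkeeping: one must keep track of edge directions at every cut point so that the reassembled path is still a valid $M^{pre}_{w+1}$-augmenting path, and verify that each local swap preserves the alternating structure. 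This edge-direction accounting at all intersection points is where I expect the proof to be most technically involved.
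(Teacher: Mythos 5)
You have proved the wrong lemma. The stated lemma (\ref{lemma: invariants}) is about the dual-variable update: it asserts that Invariants~\eqref{eq: dualrelaxed}, \eqref{eq: dualrelaxed2}, and \eqref{eq: dual0} are preserved by Steps~\ref{step1} and~\ref{step2} of the algorithm that adjusts $z(\cdot)$ when $M^{OFF}$ is augmented. Everything in your proposal --- the sets $E(P^{post}_{i,w+1}) \cap E(P^*)$, the quantities $\varphi^{post}_{i,w+1}$, $\widetilde{\varphi}^{pre}_{i,w+1}$, the constructions of $P'$ and $P''$, the splitting of $P$ at servers shared with $P^*$ --- belongs to the proof of Lemma~\ref{lemma: AU}, a completely different statement about the monotonicity of minimum $\gamma$-net-costs under Event~AU. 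None of it addresses the dual feasibility conditions at issue here.

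A correct proof of Lemma~\ref{lemma: invariants} has two parts. For Step~\ref{step1}, one must verify that after the adjustments $z(r) \gets z(r) + (sl(r_i,s^*)-sl(r_i,r))$ and $z(s) \gets z(s) - (sl(r_i,s^*)-sl(r_i,s))$ for vertices closer than $s^*$ in the slack graph, each invariant still holds; this requires a case analysis (whether each of $r,s$ lies in $G^{pre}$, whether $(r,s) \in M^{pre}$, whether $sl(r_i,\cdot) < sl(r_i,s^*)$) and uses the shortest-path triangle inequality $sl(r_i,s) \leq sl(r_i,r) + sl(\overrightarrow{r,s})$ together with the readiness condition $sl(r_i,\widetilde{s}_p) \geq sl(r_i,s^*)$ to bound the increase in $z(r_p)$ for Invariant~\eqref{eq: dualrelaxed2}, and the observation from Eq.~\eqref{eq: ob} for servers not yet arrived. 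For Step~\ref{step2}, the argument is one line: it only decreases dual variables of requests on $P_i$, and since~\eqref{eq: dualrelaxed} and~\eqref{eq: dualrelaxed2} are upper-bound constraints on sums of dual variables, decreasing any term preserves them, while~\eqref{eq: dual0} is unaffected because the affected requests become saturated.
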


\begin{lemma} \label{lemma: dualtight}
Assume that Invariant~\eqref{eq: dualtight} holds before Step~\ref{step1} is executed. 
Then after Step~\ref{step2},  Invariant~\eqref{eq: dualtight} still holds.
\end{lemma}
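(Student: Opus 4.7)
The plan is to decompose the edges of the post-augmentation matching $M^{post}$ according to their relationship with the augmenting path $P_i$, and verify the tightness invariant for each type separately. Let $z^{pre}$, $z_1$, $z^{post}$ denote the dual values before Step~\ref{step1}, after Step~\ref{step1}, and after Step~\ref{step2}, respectively, and let $M^{pre}, P_i, s^*$ be as in the preceding lemma. Since $M^{post} = M^{pre} \oplus E(P_i)$, Fact~\ref{fact: s} implies that every $(r,s) \in M^{post}$ falls into exactly one of two classes: (a) $(r,s) \in M^{pre}$ with $(r,s) \notin E(P_i)$, or (b) $\overrightarrow{r,s} \in \overrightarrow{E}(P_i)$.

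For a type (a) edge, the key observation is that the only incoming arc to the matched request $r$ in the slack graph $G^{pre}$ is $\overrightarrow{s,r}$, which carries weight $0$; hence any shortest path from $r_i$ to $r$ passes through $s$, giving $sl(r_i, r) = sl(r_i, s)$. Step~\ref{step1} therefore handles $r$ and $s$ symmetrically: either both are left alone (when $sl(r_i, r) \geq sl(r_i, s^*)$) or $z(r)$ is raised and $z(s)$ lowered by the same amount $sl(r_i, s^*) - sl(r_i, r)$. In both cases $z_1(r) + z_1(s) = z^{pre}(r) + z^{pre}(s) = D(r,s)$ by the hypothesis. Because $r$ does not appear in $P_i$ at all (if it did, its unique incoming arc $\overrightarrow{s,r}$ would force $(r,s) \in E(P_i)$, a contradiction), Step~\ref{step2} changes neither $z(r)$ nor $z(s)$, so $z^{post}(r)+z^{post}(s) = D(r,s)$ as required.

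For a type (b) edge, the arc $\overrightarrow{r,s}$ lies on the shortest $r_i$-to-$s^*$ path, yielding $sl(r_i, s) = sl(r_i, r) + sl(\overrightarrow{r,s})$ and $sl(r_i, r), sl(r_i, s) \leq sl(r_i, s^*)$. Plugging in the Step~\ref{step1} updates separately in the subcases $s = s^*$ (where $z(s^*)$ is left unchanged because $sl(r_i, s^*) \not< sl(r_i, s^*)$, and $z^{pre}(s^*) = 0$ by Invariant~\eqref{eq: dual0}) and $s \neq s^*$ (where $z(s)$ is lowered by $sl(r_i, s^*) - sl(r_i, s)$), one verifies the unified identity
\[
z_1(r) + z_1(s) = z^{pre}(r) + z^{pre}(s) + sl(\overrightarrow{r,s}) = \gamma D(r,s),
\]
where the second equality uses the definition of $sl(\overrightarrow{r,s})$. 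Step~\ref{step2} then subtracts exactly $(\gamma-1)D(r,s)$ from $z(r)$, producing $z^{post}(r) + z^{post}(s) = \gamma D(r,s) - (\gamma-1)D(r,s) = D(r,s)$.

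The main obstacle is the bookkeeping in the type (b) argument: the two subcases $s = s^*$ and $s \neq s^*$ give formally different shifts to $z(s)$ in Step~\ref{step1}, yet must conspire with the slack definition so that $z_1(r) + z_1(s)$ collapses to $\gamma D(r,s)$ in both. The boundary case $r = r_i$ (for which $z^{pre}(r_i) = 0$ by Invariant~\eqref{eq: dual0}) and the degenerate cases where several slack values coincide fit the same telescoping template, so once the identity above is in hand, the invariant follows uniformly.
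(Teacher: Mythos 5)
Your proposal is correct and follows essentially the same approach as the paper's proof: decompose $M^{post}$ into edges not on $P_i$ (inherited from $M^{pre}$, with $sl(r_i,r)=sl(r_i,s)$ forcing the Step~\ref{step1} updates to cancel) and edges on $P_i$ (where the shortest-path identity $sl(r_i,s)=sl(r_i,r)+sl(\overrightarrow{r,s})$ yields $z_1(r)+z_1(s)=\gamma D(r,s)$, which Step~\ref{step2} then reduces to $D(r,s)$). The only cosmetic difference is that you explicitly separate the $s=s^*$ and $s\neq s^*$ subcases where the paper treats them uniformly via the convention that the Step~\ref{step1} conditional update degenerates to a no-op when the slack difference is zero.
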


\begin{proof}
Assume that the algorithm augments the offline matching by $P_i$ at time $t$. 
In the following proof, $M^{pre}$ and $M^{post}$ refer to the offline matching 
right before and right after it is augmented by $P_i$, respectively.
Moreover, $z^{pre}(\cdot)$, $z_1(\cdot)$, and $z_2(\cdot)$ 
refer to the dual variables right before Step~\ref{step1} is executed, 
the dual variables right after Step~\ref{step1} is executed, 
and the dual variables right after Step~\ref{step2} is executed, 
respectively.
It suffices to prove $z_2(r) + z_2(s) = D(r,s)$ holds 
for any $(r,s) \in M^{post}$. 

We first consider the case where $(r,s) \notin E(P_i)$. 
In this case, $(r,s) \in M^{pre}$ and thus $z^{pre}(r) + z^{pre}(s) = D(r,s)$. 
Because $(r,s) \in M^{pre}$, there is a directed edge $\overrightarrow{s,r}$ in $G^{pre}$, 
and the only way to reach $r$ from $r_i$ in $G^{pre}$ is via $s$. 
In addition, $sl(\overrightarrow{s,r}) = 0$. Thus, $sl(r_i, r) = sl(r_i, s)$. 
After Step~\ref{step1}, we then have 
$z_1(r) + z_1(s) = z^{pre}(r) + z^{pre}(s) = D(r,s)$. 
Because $(r, s) \in M^{post}$ and $(r,s) \notin P_i$, 
$r$ is not in $P_i$.
Thus, Step~\ref{step2} does not change $z(r)$, and we have  
$z_2(r) + z_2(s) = z_1(r) + z_1(s) = D(r,s)$. 

Next, we consider the case where $(r,s) \in P_i$. 
Because $(r, s) \in M^{post}$, there is a
directed edge $\overrightarrow{r,s}$ in $\overrightarrow{E}(P_i)$. 
Recall that $P_i$ is the shortest path from $r_i$ to the set of  
real free servers in $G^{pre}$. 
The property of shortest path thus implies that 
\[
sl(r_i, s) = sl(r_i, r) + sl(\overrightarrow{r, s})
\]
and 
\[
sl(r_i, r) \leq sl(r_i, s^*) \text{ and } sl(r_i, s) \leq sl(r_i, s^*).
\]
Thus, after Step~\ref{step1}, we have 
\begin{align*}
  &z_1(s) + z_1(r) \\
= &z^{pre}(s) - (sl(r_i, s^*) - sl(r_i, s)) 
                 + z^{pre}(r) + (sl(r_i, s^*) - sl(r_i, r))\\
=&z^{pre}(s) + z^{pre}(r) + sl(r_i, s) - sl(r_i, r)\\
=&z^{pre}(s) + z^{pre}(r) + sl(\overrightarrow{r, s})\\
=&\gamma D(r,s)
\end{align*}
Because $\overrightarrow{r,s}$ is in $\overrightarrow{E}(P_i)$,
after Step~\ref{step2}, we then have 
\[
z_2(r) + z_2(s) = z_1(r) - (\gamma -1)D(r,s) + z_1(s) = D(r,s).
\]
\end{proof}

Because all invariants hold initially, 
by Lemmas~\ref{lemma: invariants} and \ref{lemma: dualtight} and induction, 
all invariants hold throughout the execution of the VRM algorithm.

\subsection{Proof of Lemma~\ref{lemma: DUB}}\label{appendix: lemma: DUB}
For any path $P$, define $D(P) = \sum_{(u,v)\in E(P)}{D(u,v)}$ as $P$'s distance in the TA plane.
By the triangle inequality, we have 
\[
D(M^{VRM}) = \sum_{i=1}^m{D(\ori(P^*_i), \ter(P^*_i))}
        \leq \sum_{i=1}^m{D(P^*_i)}. 
\]
Next, we argue that 
\begin{equation}\label{eq: DPiub}
\sum_{i=1}^m{\varphi(P^*_i)} \geq \frac{\gamma-1}{2}\sum_{i=1}^m{D(P^*_i)}, 
\end{equation}
which completes the proof. 

Fix any edge $(r, s)$. 
The first time (and every subsequent odd-numbered time) 
$(r, s)$ appears in some request $r_i$'s minimum augmenting path $P^*_i$ 
(i.e., $(r,s) \in E(P^*_i)$), 
$(r,s)$ contributes $\gamma D(r,s)$ to the LHS of Eq.~\eqref{eq: DPiub} 
(because $(r,s)$ is added to $M^{OFF}$). 
The second time (and every subsequent even-numbered time) $(r, s)$ appears 
in some request $r_i$'s minimum augmenting path $P^*_i$, 
$(r,s)$ contributes $-D(r,s)$ to the LHS of Eq.~\eqref{eq: DPiub} 
(because $(r,s)$ is removed from $M^{OFF}$). 
Thus, $(r,s)$'s amortized contribution to the LHS of Eq.~\eqref{eq: DPiub} 
is at least $\frac{\gamma D(r,s) -D(r,s))}{2} = \frac{\gamma-1}{2}D(r,s)$. 
Specifically, we have 
\[
     \sum_{i=1}^m{\varphi(P^*_i)} 
\geq \sum_{i=1}^m{\sum_{(r,s) \in E(P^*_i)}{\frac{\gamma-1}{2}D(r,s)}}
=    \frac{\gamma-1}{2}\sum_{i=1}^m{D(P^*_i)}.
\]

\subsection{Proofs of the Nonnegativity of $\gamma$-Net-Costs}\label{appendix: properties}
\subsubsection{Proof of Lemma~\ref{lemma: realgeq0}}
Write $P$ as $r'_1s'_1r'_2s'_2\cdots r'_{\ell}s'_{\ell}$, 
where $r'_1 \in R$ and $s'_{\ell} \in S$ are not saturated by $M^{OFF}$.
By Invariant~\eqref{eq: dual0}, we have 
\begin{equation}\label{eq: rs0}
z(r'_1) = z(s'_{\ell}) = 0.
\end{equation}
Thus, 
\begin{align*}
\varphi(P) 
= &\sum_{k=1}^{\ell}{\gamma D(r'_k,s'_{k})}     
                   -\sum_{k=1}^{\ell-1}{D(s'_k, r'_{k+1})}\\   
\stackrel{Invariant~\eqref{eq: dualtight}}{=}        
  &\sum_{k=1}^{\ell}{\gamma D(r'_k,s'_{k})}  
                   -\sum_{k=1}^{\ell-1}{\big(z(s'_k)+z(r'_{k+1})\big)}\\         
\stackrel{Eq.~\eqref{eq: rs0}}{=}        
  &\sum_{k=1}^{\ell}{\gamma D(r'_k,s'_{k})}  
                   - \sum_{k=1}^{\ell-1}{\big(z(s'_k)+z(r'_{k+1})\big)} 
                   - z(r'_1)                   
                   - z(s'_{\ell})\\       
=&\sum_{k=1}^{\ell}{\big(\gamma D(r'_k,s'_k) - (z(r'_k)+z(s'_{k}))\big)} \\                     
\stackrel{Invariant~\eqref{eq: dualrelaxed}}{\geq} &0. 
\end{align*}

\subsubsection{Proof of Lemma~\ref{lemma: virtualgeq0}}
Write $\widetilde{P}$ as 
$r'_1s'_1r'_2s'_2\cdots r'_{\ell}, \widetilde{s}_p$, where $r'_1 \in R$ is not 
saturated by $M^{OFF}$ and $t \geq a(\widetilde{r}_p)$.
By Invariant~\eqref{eq: dual0}, we then have 
\begin{equation}\label{eq: rstilde0}
z(r'_1) = z(\widetilde{s}_p) = 0.
\end{equation}
Moreover, by the definition of virtual augmenting path, we have
\begin{equation}\label{eq: vaug}
r'_{\ell} = r_p.
\end{equation}

Thus, 
\begin{align*}
\varphi_{t}(\widetilde{P}) 
= &\gamma D_t(r_p, \widetilde{s}_p) + \sum_{k=1}^{\ell-1}{\gamma D(r'_k,s'_{k})}     
                   -\sum_{k=1}^{\ell-1}{D(s'_k, r'_{k+1})}\\   
\stackrel{Invariant~\eqref{eq: dualtight}}{=}        
  &\gamma D_t(r_p, \widetilde{s}_p) + \sum_{k=1}^{\ell-1}{\gamma D(r'_k,s'_{k})}    
                   -\sum_{k=1}^{\ell-1}{\big(z(s'_k)+z(r'_{k+1})\big)}\\         
\stackrel{Eq.~\eqref{eq: rstilde0}}{=}        
  &\gamma D_t(r_p, \widetilde{s}_p) + \sum_{k=1}^{\ell-1}{\gamma D(r'_k,s'_{k})}    
                   -\sum_{k=1}^{\ell-1}{\big(z(s'_k)+z(r'_{k+1})\big)}
                   -z(r'_1)                   
                   -z(\widetilde{s}_p)\\       
\stackrel{Eq.~\eqref{eq: vaug}}{=} 
  &\gamma D_t(r_p, \widetilde{s}_p) -z(r_p) -z(\widetilde{s}_p)
  + \sum_{k=1}^{\ell-1}{\big( \gamma D(r'_k,s'_{k}) 
                            -(z(r'_k)+z(s'_{k})) \big)}  \\  
\stackrel{Invariants~\eqref{eq: dualrelaxed2}\&\eqref{eq: dualrelaxed}}{\geq} &0. 
\end{align*}

\clearpage
\section{Proof of Eq.~\eqref{eq: maingoal1}}\label{appendix: eq: maingoal1}
In the following proof, define $\Phi_i$ as $\varphi(P^*_i)$.

\subsection{Partition of the Request Set}
We first partition the request set $R$ into $O(\log m)$ groups 
$R_0, R_1, R_2, \cdots$ based on $\Phi_i$. Specifically, 
$R_0$ consists of all requests $r_i$ such that 
$\Phi_i < \frac{D(M^{OPT})}{m}$. 
For $\ell \geq 1$, $R_{\ell}$ consists of all requests $r_i$ such that 
$\frac{2^{\ell-1}D(M^{OPT})}{m} \leq \Phi_i < \frac{2^{\ell}D(M^{OPT})}{m}$. 
We use the following lemma to upper bound the number of groups.
\begin{lemma}\label{lemma: phiUB}
For any $r_i \in R$, 
$\Phi_i \leq \gamma \cdot D(M^{OPT})$.
\end{lemma}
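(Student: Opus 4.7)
The plan is to consider the symmetric difference $M^{OFF} \oplus M^{OPT}$, where $M^{OFF}$ denotes the offline matching at the moment $t$ just before it is augmented by $r_i$'s minimum augmenting path. Since $r_i$ is free in $M^{OFF}$ but matched by the perfect matching $M^{OPT}$, the connected component of $M^{OFF} \oplus M^{OPT}$ containing $r_i$ is an alternating path $P = r_1 s_1 r_2 s_2 \cdots r_k s_k$ with $r_1 = r_i$. A quick parity check (using that $M^{OPT}$ saturates everyone, so any endpoint of the symmetric-difference component must be free in $M^{OFF}$ and the component must alternate between $M^{OPT}$ and $M^{OFF}$ edges starting with an $M^{OPT}$ edge at $r_i$) shows that $P$ terminates at a server $s_k$ free in $M^{OFF}$, with every forward edge $(r_j, s_j)$ lying in $M^{OPT}$ and every backward edge $(s_j, r_{j+1})$ lying in $M^{OFF}$.

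I then split into two cases according to whether $s_k$ has arrived by time $t$. If $s_k$ has arrived, then $P$ is a real $M^{OFF}$-augmenting path from $r_i$ to a free real server, and the optimality of $P^*_i$ yields $\Phi_i = \varphi(P^*_i) \leq \varphi(P)$. Otherwise, following the reduction already previewed after Lemma~\ref{lemma: vlerev}, I form a virtual $M^{OFF}$-augmenting path $\widetilde{P}$ by replacing the terminal $s_k$ with the MV server $\widetilde{s}_k$ of $r_k$. By Eq.~\eqref{eq: ob}, $D_t(r_k, \widetilde{s}_k) < D(r_k, s_k)$, so $\varphi_t(\widetilde{P}) \leq \varphi(P)$. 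Since $r_i$ is ready at time $t$, the readiness inequality combined with the optimality of $\widetilde{P}_i$ gives
\[
\Phi_i = \varphi(P^*_i) \leq \varphi_t(\widetilde{P}_i) \leq \varphi_t(\widetilde{P}) \leq \varphi(P),
\]
so $\Phi_i \leq \varphi(P)$ in this case as well.

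Finally, I bound $\varphi(P)$ by $\gamma D(M^{OPT})$ using the structure already identified. Every forward edge $(r_j, s_j)$ lies in $M^{OPT}$ and contributes $\gamma D(r_j, s_j)$, while every backward edge contributes a nonpositive term, so
\[
\varphi(P) \leq \gamma \sum_{j=1}^{k} D(r_j, s_j) \leq \gamma \sum_{(r,s) \in M^{OPT}} D(r,s) = \gamma \cdot D(M^{OPT}),
\]
which combined with $\Phi_i \leq \varphi(P)$ proves the lemma.

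I do not expect a substantive obstacle, since this argument is essentially the one sketched in the paragraph following the proof of Lemma~\ref{lemma: vlerev}. The only subtle points are the parity argument that forces $P$ to terminate at a server (so that the MV-server substitution is well defined), and the verification that $\widetilde{P}$ is a legal virtual $M^{OFF}$-augmenting path in the future-server case, which is required to invoke the optimality of $\widetilde{P}_i$.
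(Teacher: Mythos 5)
Your proof is correct and follows essentially the same route as the paper's: take the alternating path $P$ in $M^{OFF}\oplus M^{OPT}$ starting at $r_i$, split on whether $\ter(P)$ has arrived by time $t$, use the optimality of $P^*_i$ in the first case and the MV-server substitution plus the readiness inequality $\varphi(P^*_i)\leq\varphi_t(\widetilde{P}_i)\leq\varphi_t(\widetilde{P})$ in the second, and finally bound $\varphi(P)$ by keeping only the nonnegative $M^{OPT}$ terms. The only cosmetic note is that Eq.~\eqref{eq: ob} is stated for $r_i$ and $\widetilde{s}_i$ and gives a strict inequality under $a(s)>t$; you really just need the weak bound $D(r_k,s_k)\geq a(s_k)-a(r_k)\geq t-a(r_k)=D_t(r_k,\widetilde{s}_k)$ (as the paper writes it), which covers the boundary case $a(s_k)=t$ as well.
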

\begin{proof}
Let $M^{OFF}$ be the offline matching right before it is augmented by $P^*_i$. 
Let $t$ be the time when $r_i$ is matched by the VRM algorithm.
Consider $G' = (S \cup R, M^{OFF} \oplus M^{OPT})$. 
In $G'$, there is an $M^{OFF}$-augmenting path $P$ that originates at $r_i$. 
We first consider the case where $\ter(P)$ is a server that already arrives at the system.  
By the optimality of $P^*_i$, 
we then have $\Phi_i = \varphi(P^*_i) \leq \varphi(P)$. 
Because $P$ alternates between edges in $M^{OFF}$ and $M^{OPT}$, 
we then have 
\begin{align}
\nonumber \varphi(P) &= \gamma \sum_{(r, s) \in E(P) \cap M^{OPT}}{D(r, s)} 
- \sum_{(r, s) \in E(P) \cap M^{OFF}}{D(r, s)} \\
\label{eq: ana42}&\leq \gamma \sum_{(r, s) \in E(P) \cap M^{OPT}}{D(r, s)}
\leq  \gamma \cdot D(M^{OPT}).
\end{align}

Next, we consider the case where $\ter(P)$ arrives at the system 
after $r_i$ is matched (and thus $a(\ter(P)) \geq t$). 
Let $r_p$ be the last request in $P$ (and thus $\overrightarrow{r_p, \ter(P)}$ 
is the last directed edge in $P$). Construct a virtual augmenting path 
$\widetilde{P}$ by 
replacing $\ter(P)$ with $\widetilde{s}_p$. 
Thus, $D(r_p, \ter(P)) \geq a(\ter(P)) - a(r_p) \geq t-a(r_p) 
= D_t(r_p, \tilde{s}_p)$, which implies 
$\varphi(P) \geq \varphi_t(\widetilde{P})$. 
By the design of the VRM algorithm, when $r_i$ becomes ready and matched at time 
$t$, it must be the case that 
$\varphi(P^*_i) \leq \varphi_t(\widetilde{P}_i)$.
Thus, we have 
$\varphi(P^*_i) \leq \varphi_t(\widetilde{P}_i) \leq \varphi_t(\widetilde{P})$.  
Because $\varphi_t(\widetilde{P}) \leq \varphi(P)$, we then have
$\Phi_i = \varphi(P^*_i) \leq \varphi(P)$. 
The proof then follows from Eq.~\eqref{eq: ana42}.
\end{proof}

By Lemma~\ref{lemma: phiUB} and $\gamma = O(1)$, 
the number of groups is $O(\log m)$.
Observe that $\sum_{r_i \in R_0} \Phi_i < D(M^{OPT})$. 
In the remainder of the proof, we fix a request group $R_{\ell}$ 
with $\ell \geq 1$. 
Let $\Phi = \frac{2^{\ell-1}D(M^{OPT})}{m}$ and thus for any $r_i \in R_{\ell}$
\begin{equation}\label{eq: phiLUB}
\Phi \leq \Phi_i < 2\Phi.
\end{equation} 
To prove Eq.~\eqref{eq: maingoal1}, it is then sufficient to prove 
\begin{equation}\label{eq: maingoal2}
\sum_{r_i \in R_{\ell}} \Phi = O(\sqrt{m} \log m)D(M^{OPT}).
\end{equation}

\begin{remark}
In the partition in~\cite{nayyar2017input}, there is an additional group consisting of 
requests $r_i$ such that $\Phi_i \leq 16\gamma D(r_i, opt(r_i))$, where $opt(r_i)$ is the 
server that matches to $r_i$ in $M^{OPT}$. This group is used to facilitate 
input sensitive analysis, and thus is omitted in our proof.  
\end{remark}

We then partition $R_{\ell}$ into $K$ clusters $C_1, C_2, \cdots, C_K$ 
such that each cluster $C_k$ has a \textbf{center request} 
$c_k \in C_k$ satisfying the following properties:
\begin{description}
\item[P1:] For any two distinct clusters $C_k$ and $C_{k'}$, 
$D(c_k, c_{k'}) \geq \frac{\Phi}{2\gamma}$.
\item[P2:] For any cluster $C_k$ and any request $r$ in $C_k$, 
$D(c_k, r) < \frac{\Phi}{2\gamma}$.
\item[P3:] For any cluster $C_k$, the center request $c_k$ is the first matched 
request in $C_k$ under the VRM algorithm.
\end{description}
These clusters can be constructed as follows:
Whenever a request $r$ is matched by the VRM algorithm, 
we find the center request $c_k$ that is closest to $r$. 
If $D(r, c_k) < \frac{\Phi}{2\gamma}$, we add $r$ to $C_k$. 
Otherwise, we create a new cluster and set $r$ as its center request. 

\subsection{Optimal Paths}
For each center request $c_k$, let $M_k$ be the offline matching right before 
$c_k$ is matched. Observe that by P3, all requests in $C_k$ are not saturated by 
$M_k$. Thus, for each request $r_i \in C_k$, $M^{OPT} \oplus M_k$ has 
an $M_k$-augmenting path, denoted by $P^{OPT}_i$, that originates at $r_i$. 
$P^{OPT}_i$ is called the \textbf{optimal path} for $r_i$. We have the following 
results. Recall that for any path $P$, 
$D(P) = \sum_{(u,v) \in E(P)}{D(u,v)}$. 
Moreover, for any matching $M$, define $M \cap P = M \cap E(P)$, 
$M \setminus P = M \setminus E(P)$, and $P \setminus M = E(P) \setminus M$.
\begin{fact}\label{fact: vdisjoint}
For any two requests in the same cluster, 
these requests' optimal paths are vertex-disjoint.
\end{fact}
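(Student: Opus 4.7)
The plan is to exploit the standard structural property of symmetric differences of two matchings. Since $M^{OPT}$ and $M_k$ are both matchings in the bipartite graph on $R \cup S$, every vertex has degree at most $2$ in $M^{OPT} \oplus M_k$ (at most one edge from each matching). Hence $M^{OPT} \oplus M_k$ decomposes into a collection of vertex-disjoint simple paths and even cycles. I will show that distinct requests in $C_k$ lie in distinct path components, from which vertex-disjointness of their optimal paths is immediate.

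First I will pin down where the requests of $C_k$ sit in this decomposition. By property P3, the center $c_k$ is the first member of $C_k$ to be matched by the VRM algorithm, so at the moment $M_k$ is defined (just before $c_k$ is matched), every $r_i \in C_k$ is unsaturated by $M_k$. Because $M^{OPT}$ is a perfect matching, each such $r_i$ is incident to exactly one edge of $M^{OPT} \oplus M_k$, namely its $M^{OPT}$-edge. Thus every $r_i \in C_k$ has degree $1$ in the symmetric difference and must be an endpoint of some path component; moreover, the unique simple path in $M^{OPT} \oplus M_k$ emanating from $r_i$ is precisely $P^{OPT}_i$.

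The remaining step, and the key point, is a parity argument showing that the other endpoint of $P^{OPT}_i$ is a server, not another request in $C_k$. Starting at $r_i$, the first edge of $P^{OPT}_i$ lies in $M^{OPT}$ (since $r_i$ is unsaturated by $M_k$), and the edges thereafter alternate between $M_k$ and $M^{OPT}$. The path terminates at a vertex unsaturated by $M_k$, which forces the final edge to be an $M^{OPT}$-edge as well. Since the underlying graph is bipartite and each edge switches sides, a path with an odd number of edges has endpoints on opposite sides of the bipartition. Therefore $\ter(P^{OPT}_i) \in S$, and no path component of $M^{OPT} \oplus M_k$ has two requests as its endpoints.

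Consequently, for distinct $r_i, r_j \in C_k$, the paths $P^{OPT}_i$ and $P^{OPT}_j$ are distinct components of $M^{OPT} \oplus M_k$, and hence vertex-disjoint. I do not foresee a genuine obstacle here: the subtlety, such as it is, lies only in verifying via the bipartite parity argument that the second endpoint of each $P^{OPT}_i$ is indeed a server, which rules out the scenario where $r_i$ and $r_j$ are the two endpoints of a common component.
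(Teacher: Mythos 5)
Your proof is correct, and it is the standard argument the paper implicitly relies on when stating this as a Fact without proof: because $M^{OPT}$ is a perfect matching and every $r_i \in C_k$ is unsaturated by $M_k$ (by P3), each such $r_i$ has degree exactly $1$ in $M^{OPT}\oplus M_k$ and is therefore an endpoint of its own path component; your bipartite parity argument (first and last edges both in $M^{OPT}$, so the path has odd length and its far endpoint is a server) correctly rules out the scenario where two requests of $C_k$ are the two endpoints of one common component. The only thing left slightly implicit is the observation that the alternation starting and ending with $M^{OPT}$-edges forces an odd edge count, but that is immediate and not a gap.
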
 

\begin{lemma}\label{lemma: OPTLB1}
Let $C_k$ be any cluster and $r_i \in C_k$. 
Then $D(M^{OPT} \cap P^{OPT}_i) \geq \frac{1}{\gamma+1}D(P^{OPT}_i)$.
\end{lemma}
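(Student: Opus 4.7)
The plan is to apply Lemma~\ref{lemma: realgeq0} to $P^{OPT}_i$ viewed as a real $M_k$-augmenting path and then rearrange the resulting inequality.

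First, I would verify that $P^{OPT}_i$ is indeed a valid real $M_k$-augmenting path. By property P3 of the cluster construction, $c_k$ is the first request in $C_k$ matched by the VRM algorithm, so every request $r_i \in C_k$ (including $r_i$ itself) is unsaturated by $M_k$. Since $P^{OPT}_i$ is defined as the $M_k$-augmenting path in $M^{OPT} \oplus M_k$ originating at $r_i$, both its endpoints are unsaturated by $M_k$, and it alternates between edges in $M^{OPT} \setminus M_k$ and edges in $M_k \setminus M^{OPT}$. Because every edge of $M^{OPT}$ connects a request to a real server, $P^{OPT}_i$ is a real augmenting path in the sense of Section~\ref{sec: algo}. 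As emphasized after the statement of Lemma~\ref{lemma: realgeq0}, the terminal server of $P^{OPT}_i$ may be any server in $S$, regardless of arrival time, so the lemma applies here.

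Next I would translate Fact~\ref{fact: s} into our notation. Every directed edge $\overrightarrow{r,s} \in \overrightarrow{E}(P^{OPT}_i)$ is an edge added to $M_k$ by augmentation, so it lies in $M^{OPT} \cap P^{OPT}_i$; every directed edge $\overrightarrow{s,r} \in \overrightarrow{E}(P^{OPT}_i)$ lies in $M_k \cap P^{OPT}_i$. Consequently,
\[
\varphi(P^{OPT}_i) = \gamma \, D(M^{OPT} \cap P^{OPT}_i) \;-\; D(M_k \cap P^{OPT}_i).
\]
Applying Lemma~\ref{lemma: realgeq0} gives $\varphi(P^{OPT}_i) \geq 0$, i.e., $D(M_k \cap P^{OPT}_i) \leq \gamma \, D(M^{OPT} \cap P^{OPT}_i)$. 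Adding $D(M^{OPT} \cap P^{OPT}_i)$ to both sides yields $D(P^{OPT}_i) \leq (\gamma+1) \, D(M^{OPT} \cap P^{OPT}_i)$, which is the desired inequality.

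No real obstacle is anticipated: everything reduces to checking that the non-negativity lemma is applicable, which requires only that $r_i$ is unsaturated by $M_k$ (ensured by P3) and that the terminal vertex of $P^{OPT}_i$ is a server in $S$ (which holds because $M^{OPT}$ is a perfect matching between $R$ and $S$). The mild subtlety worth a single sentence in the write-up is that the terminal server of $P^{OPT}_i$ need not have arrived by the time $c_k$ is matched, but the explicit remark following Lemma~\ref{lemma: realgeq0} handles exactly this case.
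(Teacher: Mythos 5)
Your proof is correct and follows essentially the same route as the paper: both observe that $P^{OPT}_i$ is a real $M_k$-augmenting path, invoke Lemma~\ref{lemma: realgeq0} to get $\varphi(P^{OPT}_i) \geq 0$, and then rearrange, noting that the positive (request-to-server) edges are exactly $M^{OPT} \cap P^{OPT}_i$ and the negative (server-to-request) edges are exactly $P^{OPT}_i \setminus M^{OPT} = M_k \cap P^{OPT}_i$. Your extra verification that P3 guarantees $r_i$ is unsaturated by $M_k$, and that Lemma~\ref{lemma: realgeq0} applies regardless of the terminal server's arrival time, is a useful explicit check that the paper leaves implicit.
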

\begin{proof}
Observe that $P^{OPT}_i$ is a real $M_k$-augmenting path. 
By Lemma~\ref{lemma: realgeq0}, $\varphi(P^{OPT}_i) \geq 0$ and thus
\begin{align*}
\nonumber \gamma \cdot D(M^{OPT} \cap P^{OPT}_i) 
- D(P^{OPT}_i \setminus M^{OPT}) &\geq 0\\
\nonumber
(\gamma+1) \cdot D(M^{OPT} \cap P^{OPT}_i) 
&\geq D(M^{OPT} \cap P^{OPT}_i) + D(P^{OPT}_i \setminus M^{OPT})\\ 
(\gamma+1) \cdot D(M^{OPT} \cap P^{OPT}_i) &\geq D(P^{OPT}_i).
\end{align*}
\end{proof}

For every request $r_i \in C_k$, define 
$\delta_i = \max_{v \in P^{OPT}_i}D(c_k, v)$. 
Next, we give upper and lower bounds of $\delta_i$. 
\begin{lemma}\label{lemma: deltaLB}
Let $C_k$ be any cluster and $r_i \in C_k$. 
Then $\delta_i \geq \frac{\Phi}{\gamma}$.
\end{lemma}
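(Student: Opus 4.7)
The plan is to lower-bound $\delta_i$ by focusing on a single well-chosen vertex of $P^{OPT}_i$, namely its terminal server $s_{\mathrm{end}} := \ter(P^{OPT}_i)$. Since $s_{\mathrm{end}}\in P^{OPT}_i$, the definition of $\delta_i$ immediately gives $\delta_i \geq D(c_k, s_{\mathrm{end}})$, so it suffices to prove $D(c_k, s_{\mathrm{end}}) \geq \Phi/\gamma$. Let $t_{c_k}$ denote the time at which the VRM algorithm matches $c_k$. By property P3, $c_k$ is the first request of $C_k$ to be matched, and $s_{\mathrm{end}}$ is the free-server endpoint of an $M_k$-augmenting path, so both $c_k$ and $s_{\mathrm{end}}$ are unsaturated by $M_k$; moreover Eq.~\eqref{eq: phiLUB} gives $\Phi_{c_k} \geq \Phi$.

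From here, I would split the argument on whether $s_{\mathrm{end}}$ has arrived by time $t_{c_k}$. If it has, then the single-edge path $c_k\,s_{\mathrm{end}}$ is itself a real $M_k$-augmenting path originating at $c_k$; by the optimality of $P^*_{c_k}$ at the moment $c_k$ is matched, $\gamma D(c_k, s_{\mathrm{end}}) \geq \varphi(P^*_{c_k}) = \Phi_{c_k} \geq \Phi$, which yields the bound. If $s_{\mathrm{end}}$ has not yet arrived, I would combine the trivial TA-plane inequality
\[
D(c_k, s_{\mathrm{end}}) \geq a(s_{\mathrm{end}}) - a(c_k) > t_{c_k} - a(c_k)
\]
with the VRM readiness condition $\gamma(t_{c_k} - a(c_k)) \geq \Phi_{c_k}$ that must hold at the match time of $c_k$; this again gives $D(c_k, s_{\mathrm{end}}) > \Phi_{c_k}/\gamma \geq \Phi/\gamma$.

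The only real subtlety I expect is the ``future server'' possibility itself: because $M^{OPT} \oplus M_k$ mixes $M^{OPT}$ (which contains every server) with $M_k$ (which records only arrivals by time $t_{c_k}$), the path $P^{OPT}_i$ can legitimately terminate at a server that has not yet appeared. This is exactly the scenario the MV-server framework was designed to handle, and it is dispatched cleanly by the second case above without needing to construct an explicit virtual augmenting path for $c_k$; all that is used is the design invariant that VRM matches $c_k$ at its ready time.
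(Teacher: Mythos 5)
Your proof is correct and takes essentially the same route as the paper: you pick the same witness vertex (the terminal free server of $P^{OPT}_i$, which is the only server in $P^{OPT}_i$ unsaturated by $M_k$), split on whether that server has arrived by $c_k$'s match time, and in each case bound $\gamma D(c_k,\cdot)$ by $\Phi_{c_k}\ge\Phi$ using, respectively, the optimality of $P^*_{c_k}$ and the readiness condition (the paper phrases the latter through the MV server $\widetilde{s}_{c_k}$ and $D_t(c_k,\widetilde{s}_{c_k})=t-a(c_k)$, which is exactly your $t_{c_k}-a(c_k)$).
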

\begin{proof}
Observe that $P^{OPT}_i$ must contain a server that is not 
saturated by $M_k$. 
Let $s$ be any such a server. 
It suffices to prove $D(c_k, s) \geq \frac{\Phi}{\gamma}$. 
We first consider the case where $s$ arrives before $c_k$ is matched 
by the VRM algorithm. 
In this case, the path consisting of only $\overrightarrow{c_k, s}$ 
is a valid $M_k$-augmenting path of $c_k$. 
By the optimality of $c_k$'s minimum augmenting path and Eq.~\eqref{eq: phiLUB}, 
we have $\Phi \leq \gamma \cdot D(c_k, s)$ as desired. 

Next, we consider the case where $s$ arrives after $c_k$ is matched by the 
VRM algorithm. Let $t$ be the time when $c_k$ is matched by the VRM algorithm. 
Let $\tilde{s}$ be the MV server of $c_k$. 
Thus, $D(c_k, s) \geq D_t(c_k, \tilde{s})$. 
By the design of the VRM algorithm, at time $t$, 
$c_k$ becomes ready, which combined with Eq.~\eqref{eq: phiLUB} 
implies $\Phi \leq \gamma \cdot D_t(c_k, \tilde{s}) 
\leq \gamma \cdot D(c_k, s)$ 
as desired. 
\end{proof}

To upper bound $\delta_i$, we need the following property about $M^{OFF}$,  
which holds throughout the execution of the VRM algorithm. 

\begin{lemma}\label{lemma: diffM}
Let $M$ be any matching that saturates all the vertices in $M^{OFF}$. 
Then $D(M^{OFF}) \leq \gamma D(M)$.
\end{lemma}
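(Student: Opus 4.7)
The plan is to prove this via an LP-duality style argument using the dual variables $z(\cdot)$ maintained by the VRM algorithm, exploiting Invariants~\eqref{eq: dualrelaxed}, \eqref{eq: dual0}, and \eqref{eq: dualtight}. The starting observation is that by Invariant~\eqref{eq: dualtight},
\[
D(M^{OFF}) = \sum_{(r,s) \in M^{OFF}} D(r,s) = \sum_{(r,s) \in M^{OFF}} \bigl(z(r) + z(s)\bigr) = \sum_{v \text{ saturated by } M^{OFF}} z(v),
\]
where the last equality uses the fact that $M^{OFF}$ is a matching, so each saturated vertex contributes exactly once.

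Next, I would bring $M$ into the picture. Because $M$ saturates all vertices saturated by $M^{OFF}$, and because every vertex that is saturated by $M$ but \emph{not} by $M^{OFF}$ satisfies $z(v) = 0$ by Invariant~\eqref{eq: dual0}, the above sum is unchanged when we pass from $M^{OFF}$ to $M$:
\[
\sum_{v \text{ saturated by } M^{OFF}} z(v) = \sum_{v \text{ saturated by } M} z(v) = \sum_{(r,s) \in M} \bigl(z(r) + z(s)\bigr),
\]
where again the last equality uses that $M$ is a matching between $R$ and $S$, so each $r$ and each $s$ appears in at most one edge.

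Finally, I would apply Invariant~\eqref{eq: dualrelaxed} edge by edge to the rightmost sum:
\[
\sum_{(r,s) \in M} \bigl(z(r) + z(s)\bigr) \leq \sum_{(r,s) \in M} \gamma D(r,s) = \gamma D(M).
\]
Chaining the three displays yields $D(M^{OFF}) \leq \gamma D(M)$, as desired. There is no real obstacle here; the only point to double-check is that the ``extra'' vertices saturated by $M$ but not by $M^{OFF}$ contribute zero, which is exactly what Invariant~\eqref{eq: dual0} guarantees (it applies to MV servers and to all non-$M^{OFF}$-saturated real servers and requests). Consequently the proof is essentially a three-line chain of inequalities once the invariants are invoked in the right order.
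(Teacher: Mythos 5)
Your proof is correct and follows essentially the same route as the paper's: both apply Invariant~\eqref{eq: dualtight} to rewrite $D(M^{OFF})$ as a sum of dual variables, use Invariant~\eqref{eq: dual0} together with the saturation hypothesis to move from summing over $M^{OFF}$-edges to summing over $M$-edges, and finish with Invariant~\eqref{eq: dualrelaxed}; the only difference is that you chain the three steps starting from $D(M^{OFF})$ whereas the paper starts from $\gamma D(M)$ and works backward.
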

\begin{proof}
By Invariant~\eqref{eq: dualrelaxed}, we have 
\[
\gamma \cdot D(M) = \sum_{(r,s)\in M}{\gamma \cdot D(r,s)} 
               \geq \sum_{(r,s)\in M}{(z(r)+z(s))} 
                  = \sum_{(r,s)\in M^{OFF}}{(z(r)+z(s))},
\]
where the last inequality is due to Invariant~\eqref{eq: dual0} and 
the assumption that $M$ saturates all the vertices in $M^{OFF}$. 
By Invariant~\eqref{eq: dualtight}, 
we have
\[
\sum_{(r,s)\in M^{OFF}}{(z(r)+z(s))} = \sum_{(r,s)\in M^{OFF}}{D(r,s)} = D(M^{OFF}),
\]
which then completes the proof.
\end{proof}

\begin{lemma}\label{lemma: deltaUB}
Let $C_k$ be any cluster and $r_i \in C_k$. 
Then $\delta_i < (\gamma+\frac{3}{2})D(M^{OPT})$.
\end{lemma}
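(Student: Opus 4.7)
The plan is to decompose any vertex $v$ on $P^{OPT}_i$ via the triangle inequality through $r_i$, so that $D(c_k, v) \leq D(c_k, r_i) + D(r_i, v)$, and then bound each piece separately using machinery already established.

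First I would bound $D(c_k, r_i)$ by invoking clustering property~P2 directly: since $r_i \in C_k$, we get the strict inequality $D(c_k, r_i) < \frac{\Phi}{2\gamma}$. Next, since $v$ lies on the path $P^{OPT}_i$ starting at $r_i$, by repeated triangle inequality along the path I would get $D(r_i, v) \leq D(P^{OPT}_i)$. To control $D(P^{OPT}_i)$, I would apply Lemma~\ref{lemma: OPTLB1} to conclude that $D(P^{OPT}_i) \leq (\gamma+1) D(M^{OPT} \cap P^{OPT}_i)$, and then use the trivial bound $D(M^{OPT} \cap P^{OPT}_i) \leq D(M^{OPT})$ to obtain $D(r_i, v) \leq (\gamma+1) D(M^{OPT})$.

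It remains to convert the $\Phi$-dependent term into a $D(M^{OPT})$-dependent term. For this I would chain Eq.~\eqref{eq: phiLUB} (which gives $\Phi \leq \Phi_i$) with Lemma~\ref{lemma: phiUB} (which gives $\Phi_i \leq \gamma D(M^{OPT})$) to conclude $\Phi \leq \gamma D(M^{OPT})$, and therefore $\frac{\Phi}{2\gamma} \leq \frac{1}{2} D(M^{OPT})$. Summing the two bounds yields
\[
D(c_k, v) < \tfrac{1}{2} D(M^{OPT}) + (\gamma+1) D(M^{OPT}) = \bigl(\gamma + \tfrac{3}{2}\bigr) D(M^{OPT}),
\]
and since this holds for every $v \in P^{OPT}_i$, it bounds $\delta_i$ as required.

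Honestly, I do not expect any step here to be a real obstacle: the triangle inequality is free, clustering property~P2 is built into the partition, Lemma~\ref{lemma: OPTLB1} gives the structural bound on the optimal path, and Lemma~\ref{lemma: phiUB} closes the loop. The only subtlety worth noting is that the strict inequality in the conclusion is inherited from the strict inequality in P2 (rather than from any of the subsequent bounds, which are all non-strict), so I would make sure to track strictness carefully in the final combination.
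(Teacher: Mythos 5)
Your proof is correct and follows the same overall decomposition as the paper (triangle inequality through $r_i$, P2 for the $D(c_k,r_i)$ term, and $\Phi \leq \gamma D(M^{OPT})$ from Eq.~\eqref{eq: phiLUB} plus Lemma~\ref{lemma: phiUB}), but the step bounding $D(P^{OPT}_i)$ is justified differently. The paper observes that $E(P^{OPT}_i) \subseteq M_k \cup M^{OPT}$ and invokes Lemma~\ref{lemma: diffM} to get $D(P^{OPT}_i) \leq D(M_k) + D(M^{OPT}) \leq (\gamma+1)D(M^{OPT})$; you instead apply Lemma~\ref{lemma: OPTLB1}, which gives $D(P^{OPT}_i) \leq (\gamma+1)D(M^{OPT}\cap P^{OPT}_i) \leq (\gamma+1)D(M^{OPT})$. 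Both rest ultimately on the dual invariants (Lemma~\ref{lemma: diffM} via \eqref{eq: dualrelaxed} and \eqref{eq: dualtight} directly; Lemma~\ref{lemma: OPTLB1} via the nonnegativity of $\gamma$-net-costs, Lemma~\ref{lemma: realgeq0}). Your route is slightly more local — it only needs the $M^{OPT}$-edges lying on the one augmenting path, rather than comparing the full matchings $M_k$ and $M^{OPT}$ — and it reuses a lemma already established in the same subsection, so it is arguably a bit leaner. The paper's route avoids invoking Lemma~\ref{lemma: OPTLB1} here and instead appeals to Lemma~\ref{lemma: diffM}, which is also needed elsewhere. Either way the constant $(\gamma+1)$ comes out the same, and your care about where the strict inequality enters (from P2 only) is exactly right.
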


\begin{proof}
By the triangle inequality, 
$\delta_i \leq D(c_k,r_i) + D(P^{OPT}_i)  
< \frac{\Phi}{2\gamma} + D(P^{OPT}_i)$, 
where the last inequality is due to P2. 
By Eq.~\eqref{eq: phiLUB} and Lemma~\ref{lemma: phiUB}, 
$\Phi \leq \gamma \cdot D(M^{OPT})$ 
and thus
$\frac{\Phi}{2\gamma} \leq \frac{D(M^{OPT})}{2}$.
On the other hand, we have $D(P^{OPT}_i) \leq D(M_k) + D(M^{OPT}) 
\leq (\gamma+1)D(M^{OPT})$, where the last inequality is due to 
Lemma~\ref{lemma: diffM}. 
As a result, we have 
$\delta_i < (\gamma + \frac{3}{2})D(M^{OPT})$.
\end{proof}

\subsection{Outer Groups and Balls}
We further partition $R_{\ell}$ into \textit{outer groups}
according to $\delta_i$. 
Specifically, $r_i \in R_{\ell}$ is in outer group $R^{\lambda}$ if 
\begin{equation}\label{eq: lambda}
\frac{2^{\lambda-1}D(M^{OPT})}{\gamma m} \leq \delta_i 
< \frac{2^{\lambda}D(M^{OPT})}{\gamma m}. 
\end{equation}
By Lemma~\ref{lemma: deltaLB} and the definition of $\Phi$, 
$\lambda \geq \ell \geq 1$. 
By Lemma~\ref{lemma: deltaUB}, $\lambda = O(\log(\gamma m))$. 
As a result, the number of outer groups is 
$O(\log(\gamma m))$. Because $\gamma = O(1)$, the number of outer groups is 
$O(\log m)$. 

In the remainder of the proof, we fix a outer group $R^{\lambda}$. 
To prove Eq.~\eqref{eq: maingoal2}, 
it suffices to prove  
\begin{equation}\label{eq: maingoal3}
\sum_{r_i \in R^{\lambda}} \Phi = O(\sqrt{m})D(M^{OPT}).
\end{equation}

For each cluster $C_k$, let $C'_k = C_k \cap R^{\lambda}$. 
Moreover, define $B_k$ as a ball centered at $c_k$ with radius 
$\rho = \frac{2^{\lambda}D(M^{OPT})}{\gamma m}$. 
By Eq.~\eqref{eq: lambda}, $\frac{\rho}{2} \leq \delta_i < \rho$. 
We then have the following simple fact.
\begin{fact}\label{fact: outball}
For every $r_i \in C'_k$, 
$B_k$ contains all the vertices in $P^{OPT}_i$. 
\end{fact}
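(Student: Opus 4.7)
The plan is to unpack definitions and observe that the fact is essentially immediate. By the definition of $\delta_i = \max_{v \in P^{OPT}_i} D(c_k, v)$, every vertex $v$ on the optimal path $P^{OPT}_i$ satisfies $D(c_k, v) \leq \delta_i$. So all I need is an upper bound on $\delta_i$ by the radius $\rho$.

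First I would note that $r_i \in C'_k$ implies $r_i \in R^{\lambda}$, so the outer-group inequality Eq.~\eqref{eq: lambda} applies, giving $\delta_i < \frac{2^{\lambda} D(M^{OPT})}{\gamma m} = \rho$. Chaining these bounds yields $D(c_k, v) \leq \delta_i < \rho$ for every vertex $v$ in $P^{OPT}_i$, which is exactly the definition of $v$ lying in the ball $B_k$ centered at $c_k$ with radius $\rho$.

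There is no real obstacle here, since Fact~\ref{fact: outball} is a direct consequence of how $\delta_i$, $R^{\lambda}$, and $B_k$ were just defined in the preceding paragraphs; the work was already done in introducing the ``outer group'' partition with the carefully matched radius $\rho$. The only thing worth being explicit about is that the definition of $\delta_i$ ranges over \emph{all} vertices of $P^{OPT}_i$ (both requests and servers), and that $B_k$ is a ball in the TA-plane distance $D$, which matches the metric used in the definition of $\delta_i$. Thus the proof will be a one- or two-line chain of inequalities with pointers to the relevant definitions.
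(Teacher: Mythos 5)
Your proof is correct and is precisely the intended argument: the paper itself labels this a simple fact and records the needed bound $\frac{\rho}{2} \leq \delta_i < \rho$ immediately before stating it, so the observation that $r_i \in C'_k \subseteq R^{\lambda}$ gives $\delta_i < \rho$ and hence $D(c_k, v) \leq \delta_i < \rho$ for every $v$ in $P^{OPT}_i$ is exactly what is expected. Your explicit note that $\delta_i$ and $B_k$ both use the TA-plane distance $D$ is a sensible sanity check but adds nothing beyond the paper's implicit reasoning.
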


For any set of edges $M$, define $M \cap B_k$ as 
the set of edges in $M$ whose both endpoints are in $B_k$. 
\begin{lemma}\label{lemma: OPTLB2}
For any $k \in \{1,2, \cdots, K\}$, 
$D(M^{OPT} \cap B_k) > \frac{\rho|C'_k|}{4(\gamma+1)}$.
\end{lemma}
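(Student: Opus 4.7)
The plan is to combine Fact~\ref{fact: vdisjoint}, Fact~\ref{fact: outball}, and Lemma~\ref{lemma: OPTLB1}, after first establishing that every optimal path $P^{OPT}_i$ with $r_i \in C'_k$ is nontrivially long, specifically $D(P^{OPT}_i) > \rho/4$.

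First, I would observe the chain of inclusions that immediately follows from the structural lemmas. By Fact~\ref{fact: outball}, every vertex of $P^{OPT}_i$ lies in $B_k$, so every edge of $M^{OPT} \cap P^{OPT}_i$ is an edge of $M^{OPT} \cap B_k$. By Fact~\ref{fact: vdisjoint}, the paths $\{P^{OPT}_i\}_{r_i \in C'_k}$ are vertex-disjoint, and hence their edge sets are pairwise disjoint. Combining these two observations,
\[
D(M^{OPT} \cap B_k) \;\geq\; \sum_{r_i \in C'_k} D(M^{OPT} \cap P^{OPT}_i).
\]
Applying Lemma~\ref{lemma: OPTLB1} to each summand gives
\[
D(M^{OPT} \cap B_k) \;\geq\; \frac{1}{\gamma+1}\sum_{r_i \in C'_k} D(P^{OPT}_i).
\]

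Next, I would lower bound each $D(P^{OPT}_i)$. By definition of $\delta_i$, there exists some vertex $v$ on $P^{OPT}_i$ with $D(c_k, v) = \delta_i \geq \rho/2$ (using Eq.~\eqref{eq: lambda}). Since $r_i$ is an endpoint of $P^{OPT}_i$, the triangle inequality gives $D(r_i, v) \geq D(c_k, v) - D(c_k, r_i)$. Now the key arithmetic step: since $\lambda \geq \ell$, we have
\[
\rho \;=\; \frac{2^{\lambda}D(M^{OPT})}{\gamma m} \;\geq\; \frac{2^{\ell}D(M^{OPT})}{\gamma m} \;=\; \frac{2\Phi}{\gamma},
\]
so $\Phi/(2\gamma) \leq \rho/4$. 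Combining with property P2, which gives $D(c_k, r_i) < \Phi/(2\gamma) \leq \rho/4$ strictly, we obtain
\[
D(P^{OPT}_i) \;\geq\; D(r_i, v) \;>\; \frac{\rho}{2} - \frac{\rho}{4} \;=\; \frac{\rho}{4}.
\]

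Summing this strict bound over $r_i \in C'_k$ and plugging into the previous display yields
\[
D(M^{OPT} \cap B_k) \;>\; \frac{1}{\gamma+1} \cdot \frac{\rho}{4} \cdot |C'_k| \;=\; \frac{\rho|C'_k|}{4(\gamma+1)},
\]
as required. There is no real obstacle here: once one sees that Fact~\ref{fact: outball} and Fact~\ref{fact: vdisjoint} together allow the $M^{OPT} \cap P^{OPT}_i$ contributions to be added disjointly inside $B_k$, the only substantive work is the comparison $\rho \geq 2\Phi/\gamma$, which converts the cluster-radius bound P2 into the sharper statement $D(c_k, r_i) < \rho/4$ needed to exploit $\delta_i \geq \rho/2$.
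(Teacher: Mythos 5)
Your proof is correct and follows essentially the same route as the paper: reduce to lower-bounding each $D(P^{OPT}_i)$ via Facts~\ref{fact: vdisjoint}, \ref{fact: outball} and Lemma~\ref{lemma: OPTLB1}, then use the triangle inequality together with P2 and $\delta_i \geq \rho/2$. The only cosmetic difference is that you bound $D(c_k, r_i) < \Phi/(2\gamma) \leq \rho/4$ via $\lambda \geq \ell$ and the definitions of $\rho$ and $\Phi$, whereas the paper routes the same comparison through Lemma~\ref{lemma: deltaLB} to get $\Phi/(2\gamma) \leq \delta_i/2$ and invokes $\delta_i \geq \rho/2$ at the very end; since $\lambda \geq \ell$ is itself derived from Lemma~\ref{lemma: deltaLB}, the two are interchangeable.
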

\begin{proof}
By Facts~\ref{fact: vdisjoint} and \ref{fact: outball}, 
it suffices to prove that for each $r_i \in C'_k$, 
$D(M^{OPT} \cap P^{OPT}_i) > \frac{\rho}{4(\gamma+1)}$. 
By the triangle inequality, $\delta_i \leq D(c_k, r_i) + D(P^{OPT}_i)$ or 
equivalently, $D(P^{OPT}_i) \geq \delta_i - D(c_k, r_i)$. 
By P2 and Lemma~\ref{lemma: deltaLB},
we have $D(c_k, r_i) < \frac{\Phi}{2\gamma} \leq \frac{\delta_i}{2}$. 
Thus, $D(P^{OPT}_i) > \delta_i - \frac{\delta_i}{2} 
= \frac{\delta_i}{2}$. 
Thus, by Lemma~\ref{lemma: OPTLB1}, 
$D(M^{OPT} \cap P^{OPT}_i) > \frac{1}{\gamma+1}\frac{\delta_i}{2}$. 
The proof then follows from $\delta_i \geq \frac{\rho}{2}$.
\end{proof}

\subsection{Proof of Eq.~\eqref{eq: maingoal3}}
We first state a variant of the Vitali's covering lemma proved in~\cite{nayyar2017input}. 
For any ball $B_k$, let $3B_k$ be the ball centered at $c_k$ with radius $3\rho$. 
Recall that we have $K$ clusters.
\begin{lemma}[\cite{nayyar2017input}]
There exists a set $H \subseteq \{1, 2, \cdots, K\}$ such that:
\begin{description}
\item[C1:] For any two distinct $h_1, h_2 \in H$, $B_{h_1} \cap B_{h_2} = \varnothing$.
\item[C2:] For any $k \in \{1,2, \cdots, K\}$, there exists $head(k) \in H$ satisfying the following properties:
    \begin{description}
        \item[C2A:] $B_k$ is contained in $3B_{head(k)}$.
        \item[C2B:] $|C'_k| \leq |C'_{head(k)}|$.
    \end{description}
\end{description}
\end{lemma}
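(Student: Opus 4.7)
The plan is to prove this as a standard Vitali-type covering argument, adapted to the fact that all the balls $B_1, \ldots, B_K$ share the common radius $\rho$. The greedy selection should prioritize clusters with larger $|C'_k|$, so that C2B comes for free from the greedy order.

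First, I would reorder the indices $\{1, 2, \ldots, K\}$ so that $|C'_1| \geq |C'_2| \geq \cdots \geq |C'_K|$, breaking ties arbitrarily. Then I would build $H$ greedily by scanning in this order: put $k$ into $H$ if and only if $B_k$ is disjoint from $B_h$ for every $h \in H$ already selected. Condition C1 is then immediate from the construction. For each $k \in H$, set $head(k) = k$, for which C2A and C2B hold trivially. For each $k \notin H$, the greedy rule guarantees the existence of at least one $h \in H$ processed strictly before $k$ with $B_h \cap B_k \neq \varnothing$; pick any such $h$ as $head(k)$. Because $h$ was processed earlier, we have $|C'_{head(k)}| = |C'_h| \geq |C'_k|$, giving C2B.

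The only nontrivial step is verifying C2A, which reduces to a triangle-inequality computation in the TA plane. Since $B_h \cap B_k \neq \varnothing$, there is some point $p$ with $D(p, c_h) \leq \rho$ and $D(p, c_k) \leq \rho$, so by the triangle inequality $D(c_h, c_k) \leq 2\rho$. Then for any $v \in B_k$,
\[
D(v, c_h) \leq D(v, c_k) + D(c_k, c_h) \leq \rho + 2\rho = 3\rho,
\]
which means $v \in 3B_{head(k)}$, establishing C2A.

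I do not anticipate a real obstacle here: the argument is the textbook finite Vitali covering lemma with a priority-by-weight twist, and it works verbatim in any metric (in particular in the TA plane under $D$), because it only uses the triangle inequality and the uniform radius of the balls. The one thing to be a bit careful about is that "$head(k)$" need not be uniquely defined, but any valid choice suffices for the downstream arguments in the paper.
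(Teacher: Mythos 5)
Your proof is correct and is the standard finite Vitali covering argument with the priority-by-weight twist; this is exactly the approach taken in~\cite{nayyar2017input}, which the present paper cites without reproducing the proof. The key observations---that all balls in a fixed outer group share the common radius $\rho$, that greedy selection in decreasing order of $|C'_k|$ delivers C2B for free, and that the triangle inequality in the TA plane (Manhattan distance) gives C2A with factor $3$---are all used correctly, and the remark that $head(k)$ need not be unique is a fair caveat that causes no issues downstream.
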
 

For any $h \in H$, define $cover(h)$ as the number of clusters that set head to $h$. 
Specifically, 
\[
cover(h) = |\{k| head(k) = h, 1 \leq k \leq K\}|.
\]
Thus, 
\[
  \sum_{r_i \in R^{\lambda}} \Phi 
= \sum_{k = 1}^K{|C'_k|\Phi} 
\stackrel{C2}{\leq}
  \sum_{h \in H}{|C'_h|cover(h)\Phi}.
\]
On the other hand, because $\gamma = O(1)$, we have 
\[
  D(M^{OPT}) 
\stackrel{C1}{\geq}
  \sum_{h \in H}{D(M^{OPT} \cap B_h)}  
\stackrel{Lemma~\ref{lemma: OPTLB2}}{>}
  \sum_{h \in H}{\frac{\rho|C'_h|}{4(\gamma+1)}}
=
  \Omega\left( \sum_{h \in H}{\rho|C'_h|} \right).
\]
Thus, to prove Eq.~\eqref{eq: maingoal3}, it suffices to prove $\frac{cover(h)\Phi}{\rho} = O(\sqrt{m})$ for 
any $h \in H$. Fix $h \in H$.  
Define $C$ as $\{c_k| head(k) = h, 1 \leq k \leq K\}$.  Thus, $|C| = cover(h)$. 
Let $\TSP(C)$ be the distance of the shortest cycle that visits every center request in $C$ in the TA plane.
Let $\Diam(C)$ be the distance of the farthest pair of center requests in $C$ in the TA plane.
By P1, for any two distinct $c_k, c_{k'} \in C$, 
we have $D(c_k, c_{k'}) \geq \frac{\Phi}{2\gamma}$. 
Thus, 
\[
\TSP(C) \geq |C|\frac{\Phi}{2\gamma} = \frac{cover(h)\Phi}{2\gamma}.\] 
By C2A, we have 
\[
\Diam(C)  \leq 6\rho.
\]
Combining the above two inequalities and $\gamma = O(1)$, we then have $\frac{cover(h)\Phi}{\rho} = 
O\left(\frac{\TSP(C)}{\Diam(C)}\right)$. 
Because TA plane is two-dimensional, 
$\frac{\TSP(C)}{\Diam(C)} = O\left(|C|^{1-\frac{1}{2}}\right) = O\left(\sqrt{m}\right)$~\cite{nayyar2017input},  
which completes the proof. 

\begin{remark} 
To see that $\frac{\TSP(C)}{\Diam(C)} = O\left(|C|^{1-\frac{1}{2}}\right)$, 
let $\Delta = \Diam(C)$. 
Let $A$ be the smallest axis aligned box that contains $C$ in the TA plane.  
Thus, $A$ has side lengths of $O(\Delta)$.
We can divide $A$ into $O(|C|)$ squares of side length $\frac{\Delta}{\sqrt{|C|}}$, 
and construct a cycle $W$ of length $O\left(|C| \cdot \frac{\Delta}{\sqrt{|C|}}\right)$ 
that visits all centers of these squares.  
For any $r \in C$, let $center(r)$ be the center of the square that contains $r$. 
For any $r \in C$, we then add to $W$ two directed edges $\overrightarrow{center(r), r}$  
and $\overrightarrow{r, center(r)}$ at a cost of $O\left(\frac{\Delta}{\sqrt{|C|}}\right)$. 
These edges and $W$ form a closed walk that visits every vertex in $C$, 
and the total distance is 
$O\left(|C| \cdot \frac{\Delta}{\sqrt{|C|}}\right)$. 
As a result, $\TSP(C) = O\left(|C| \cdot \frac{\Delta}{\sqrt{|C|}}\right)$, 
which implies $\frac{\TSP(C)}{\Diam(C)} = O\left(|C|^{1-\frac{1}{2}}\right)$.
\end{remark}

\clearpage
\section{Construction of $P'$ and $P''$}\label{appendix: lemma: AU}
Let $P$ and $\widetilde{P}$ be any real and virtual $M^{post}_{w+1}$-augmenting paths of $r_i$, 
respectively. In this section, our goal is to construct two real $M^{pre}_{w+1}$-augmenting paths of $r_i$, 
$P'$ and $P''$, such that
\begin{equation}\label{eq: P'goal}
\varphi(P') \leq \varphi(P)
\end{equation}
and 
\begin{equation}\label{eq: P''goal}
\varphi(P'') \leq \varphi_{t_{w+1}}(\widetilde{P}).
\end{equation} 

Recall that in $E_{w+1}$, the VRM algorithm augments $M^{pre}_{w+1}$
by request $r^*$'s minimum augmenting path $P^*$. 
To construct $P'$, we partition $P$ at the intersections of $P^*$ and $P$. 
The partition can be applied to any $r_i$'s $M^{post}_{w+1}$-augmenting path, be it real or virtual. 
Thus, in the following proof, we consider any $r_i$'s $M^{post}_{w+1}$-augmenting path $\bar{P}$ 
and discuss the partition of $\bar{P}$. 

\subsection{Partition of $\bar{P}$}
We call a directed edge $\overrightarrow{u,v}$ in $\bar{P}$ an \textbf{entry edge} 
of $\bar{P}$ if $(u,v)$ is not in $E(P^*)$ but $v$ is in $P^*$. 
On the other hand, a directed edge $\overrightarrow{u,v}$ in $\bar{P}$ 
is called an \textbf{exit edge} of $\bar{P}$ if $(u,v)$ is not in $E(P^*)$ 
but $u$ is in $P^*$. 

\begin{claim}\label{claim: entry}
Every entry edge of $\bar{P}$ goes from a request to a server. 
\end{claim}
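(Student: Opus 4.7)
The plan is to argue by contradiction using the bipartite structure of augmenting paths together with Fact~\ref{fact: s}. Since every edge in $\bar P$ connects a request to a server (bipartiteness), an entry edge $\overrightarrow{u,v}$ is either of the form $\overrightarrow{r,s}$ or $\overrightarrow{s,r}$, and I only need to rule out the latter.

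First, I would suppose that an entry edge has the form $\overrightarrow{s,r}$ with $s \in S$ and $r \in R$. Applying Fact~\ref{fact: s} to $\bar P$ (which is an $M^{post}_{w+1}$-augmenting path, real or virtual), the edge $\overrightarrow{s,r}$ being in $\bar P$ forces $(r,s) \in M^{post}_{w+1}$. Next, by the definition of an entry edge, $r = v$ lies on $P^*$. Since $P^*$ is a real $M^{pre}_{w+1}$-augmenting path originating at the request $r^*$ and terminating at a real server, the vertex $r$ is either $r^*$ itself or an internal vertex of $P^*$; in the latter case the two edges of $P^*$ incident to $r$ both go to servers, and in particular $P^*$ contains an outgoing edge $\overrightarrow{r,s'}$ for some server $s'$. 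If $r = r^*$, then the very first edge of $P^*$ is $\overrightarrow{r^*,s'}$ for some $s'$.

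In either case, Fact~\ref{fact: s} applied to $P^*$ (viewed as an $M^{pre}_{w+1}$-augmenting path) gives $(r,s') \in M^{post}_{w+1}$. But matchings are vertex-disjoint sets of edges, so the server matched to $r$ in $M^{post}_{w+1}$ is unique, forcing $s = s'$. Then $(u,v) = (s,r) = (s',r) \in E(P^*)$, which contradicts the requirement $(u,v) \notin E(P^*)$ in the definition of entry edge. Thus the entry edge must be of the form $\overrightarrow{r,s}$, as claimed.

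The only slightly delicate point is checking that Fact~\ref{fact: s} applies to $\bar P$ even when $\bar P$ is virtual: the final edge $\overrightarrow{r_p,\widetilde{s}_p}$ of a virtual path is to an MV server, which is never part of $P^*$, so any entry edge of $\bar P$ is among the internal, ``real'' portion where the alternation with respect to $M^{OFF}$ is the standard one and Fact~\ref{fact: s} carries over verbatim. I expect the rest of the argument to be essentially bookkeeping around the two subcases ($r$ internal to $P^*$ versus $r = r^*$), and I do not anticipate a significant obstacle.
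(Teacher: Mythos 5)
Your proof is correct and is essentially the paper's own argument run by contraposition: the paper directly shows $(u,v)\notin M^{post}_{w+1}$ because $v$ is already matched along an edge of $E(P^*)$, whereas you suppose $\overrightarrow{u,v}=\overrightarrow{s,r}$, deduce $(r,s)\in M^{post}_{w+1}$, and use the uniqueness of $r$'s match together with the edge $\overrightarrow{r,s'}\in\overrightarrow{E}(P^*)$ to force $(u,v)\in E(P^*)$ — the same two ingredients (Fact~\ref{fact: s} and vertex-disjointness of the matching) in the same roles. Your remark about virtual paths is fine and mirrors the paper's implicit use of Fact~\ref{fact: s} for $\bar P$ virtual.
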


\begin{proof}
Let $\overrightarrow{u, v}$ be an entry edge of $\bar{P}$. 
Because $v$ is in $P^*$ and any vertex in $P^*$ is saturated by $M^{post}_{w+1}$, 
$v$ is incident to some edge $e$ in $M^{post}_{w+1} \cap E(P^*)$. 
Thus, $(u, v)$, which is not in $E(P^*)$, cannot be in $M^{post}_{w+1}$ (otherwise, 
two distinct edges $e$ and $(u, v)$ in $M^{post}_{w+1}$ are incident to $v$). 
Because $\bar{P}$ is an $M^{post}_{w+1}$-augmenting path, 
by Fact~\ref{fact: s}, $u$ is a request and $v$ is a server.
\end{proof}

\begin{claim}\label{claim: exist}
Every exist edge of $\bar{P}$ goes from a request to a server. 
\end{claim}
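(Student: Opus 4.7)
The plan is to mirror the argument used for Claim~\ref{claim: entry}, just focusing on the tail rather than the head of the exit edge. The goal is to show that the edge $(u,v)$ of an exit edge $\overrightarrow{u,v}$ cannot lie in $M^{post}_{w+1}$, after which Fact~\ref{fact: s} forces its direction to go from a request to a server.

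First I would establish the following auxiliary observation: every vertex $x$ appearing in $P^*$ is saturated in $M^{post}_{w+1}$ by some edge of $M^{post}_{w+1} \cap E(P^*)$. This holds because $M^{post}_{w+1} = M^{pre}_{w+1} \oplus E(P^*)$, so for each internal vertex of $P^*$ the unique non-matching edge in $P^*$ incident to it is added by the symmetric difference, and for each endpoint of $P^*$ the unique edge of $P^*$ incident to it is not in $M^{pre}_{w+1}$ and hence added to $M^{post}_{w+1}$. In either case $x$ is covered by an edge of $M^{post}_{w+1}$ that lies in $E(P^*)$.

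Now fix an exit edge $\overrightarrow{u,v}$ of $\bar{P}$. By definition $u$ is in $P^*$, so by the observation above $u$ is incident to some edge $e \in M^{post}_{w+1} \cap E(P^*)$. Since $(u,v) \notin E(P^*)$, we have $(u,v) \neq e$, so if $(u,v)$ were also in $M^{post}_{w+1}$ then $u$ would be incident to two distinct matching edges, contradicting that $M^{post}_{w+1}$ is a matching. Hence $(u,v) \notin M^{post}_{w+1}$. Since $\bar{P}$ is an $M^{post}_{w+1}$-augmenting path, Fact~\ref{fact: s} implies that any directed edge of $\bar{P}$ whose undirected form is not in $M^{post}_{w+1}$ must go from a request to a server, so $u \in R$ and $v \in S$ as claimed.

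There is no real obstacle; the proof is the symmetric counterpart of Claim~\ref{claim: entry}, with the role of the head vertex $v$ replaced by the tail vertex $u$. The only subtle point worth double-checking is that endpoints of $P^*$ (which are unsaturated in $M^{pre}_{w+1}$) are still saturated in $M^{post}_{w+1}$ via an edge of $P^*$; this is handled explicitly in the auxiliary observation above.
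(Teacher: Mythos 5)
Your proof is correct and takes essentially the same route as the paper: you observe that $u$ lies in $P^*$ and hence is saturated by some edge $e \in M^{post}_{w+1} \cap E(P^*)$, deduce $(u,v) \notin M^{post}_{w+1}$ since $(u,v) \notin E(P^*)$ and $M^{post}_{w+1}$ is a matching, then invoke Fact~\ref{fact: s}. The only difference is that you spell out explicitly why endpoints of $P^*$ are also saturated by an edge of $E(P^*)$ after augmentation, a point the paper leaves implicit.
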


\begin{proof}
Let $\overrightarrow{u, v}$ be an exit edge of $\bar{P}$. 
Because $u$ is in $P^*$ and any vertex in $P^*$ is saturated by $M^{post}_{w+1}$, 
$u$ is incident to some edge $e$ in $M^{post}_{w+1} \cap E(P^*)$. 
Thus, $(u, v)$, which is not in $E(P^*)$, cannot be in $M^{post}_{w+1}$. 
Because $\bar{P}$ is an $M^{post}_{w+1}$-augmenting path, 
by Fact~\ref{fact: s}, $u$ is a request and $v$ is a server.
\end{proof}

For any directed path $Q$ and any directed edge 
$\overrightarrow{u,v}$ in $\overrightarrow{E}(Q)$, define $next(Q, \overrightarrow{u,v})$ 
as the directed edge originating at $v$ in $\overrightarrow{E}(Q)$.

\begin{claim}\label{claim: entrynext}
Let $\overrightarrow{r, s}$ be any entry edge of $\bar{P}$. 
Let $\overrightarrow{s, r'} = next(\bar{P}, \overrightarrow{r, s})$. 
Then $\overrightarrow{r', s} \in \overrightarrow{E}(P^*)$.
\end{claim}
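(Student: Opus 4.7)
The plan is to identify the unique $M^{post}_{w+1}$-partner of $s$ in two different ways and equate them. First, I would invoke Claim~\ref{claim: entry} to conclude that $s$ is a server. Since $s$ lies on $P^*$, every vertex of $P^*$ is saturated by $M^{post}_{w+1} = M^{pre}_{w+1} \oplus E(P^*)$, so $s$ has a unique matched neighbor under $M^{post}_{w+1}$.

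Next, I would pin down that neighbor via the structure of $P^*$. Writing $P^*$ in canonical form $r'_1 s'_1 r'_2 s'_2 \cdots r'_\ell s'_\ell$, every server on $P^*$ is the head of exactly one request-to-server arc $\overrightarrow{r'', s}$ of $P^*$ (whether $s$ is intermediate or terminal). Applying Fact~\ref{fact: s} to $P^*$, this arc contributes $(r'', s) \in M^{post}_{w+1}$, while any accompanying server-to-request arc $\overrightarrow{s, r'''}$ of $P^*$ contributes $(s, r''') \in M^{pre}_{w+1} \setminus M^{post}_{w+1}$. Therefore the unique $M^{post}_{w+1}$-neighbor of $s$ is precisely the request $r''$ with $\overrightarrow{r'', s} \in \overrightarrow{E}(P^*)$.

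On the other hand, from $\bar{P}$: since $\overrightarrow{s, r'} \in \overrightarrow{E}(\bar{P})$ is a server-to-request arc of the $M^{post}_{w+1}$-augmenting path $\bar{P}$, Fact~\ref{fact: s} gives $(s, r') \in M^{post}_{w+1}$. By uniqueness of the $M^{post}_{w+1}$-partner of $s$, we must have $r' = r''$, and hence $\overrightarrow{r', s} \in \overrightarrow{E}(P^*)$ as claimed.

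I do not anticipate any real obstacle here: the claim is pure bookkeeping about which arcs of $P^*$ contribute to $M^{post}_{w+1}$ and which do not, combined with the fact that a matching has at most one edge at each vertex. The only point requiring a tiny bit of care is the case split in Step~2 on whether $s$ is an intermediate or terminal server of $P^*$, but both cases are immediate from the canonical augmenting-path form.
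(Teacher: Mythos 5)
Your proof is correct and follows essentially the same route as the paper: both arguments pin down the unique $M^{post}_{w+1}$-partner of $s$ from the two paths and equate them. The only cosmetic difference is ordering — you determine the direction $\overrightarrow{r'',s}$ in $P^*$ up front via Fact~\ref{fact: s}, whereas the paper first equates the two matched neighbors $(s,r')$ and $(s,r'')$ and extracts the orientation at the end; and you derive $(s,r')\in M^{post}_{w+1}$ directly from Fact~\ref{fact: s} on the arc $\overrightarrow{s,r'}$, while the paper derives it from alternation after showing $(r,s)\notin M^{post}_{w+1}$.
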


\begin{proof}
Because $\bar{P}$ is an $M^{post}_{w+1}$-augmenting path, by Fact~\ref{fact: s}, 
$(r, s)$ is not in $M^{post}_{w+1}$. 
Thus, $\bar{P}$'s next edge $(s, r')$ is in $M^{post}_{w+1}$. 
Because $\overrightarrow{r, s}$ is an entry edge, $s$ is in $P^*$. 
There is an edge $(s, r'')$ in $E(P^*) \cap M^{post}_{w+1}$. 
Thus, two edges $(s, r')$ and $(s, r'')$ are in $M^{post}_{w+1}$, 
which implies $r' = r''$. Therefore, $(s,r')$ is in $E(P^*)$. 
Finally, because $P^*$ is an $M^{pre}_{w+1}$-augmenting path 
and $(s, r') \in E(P^*)$ is in $M^{post}_{w+1}$, by Fact~\ref{fact: s}, 
we then have $\overrightarrow{r', s} \in \overrightarrow{E}(P^*)$.
\end{proof}

For any entry edge $\overrightarrow{r, s}$ of $\bar{P}$, 
we call $s$ an \textbf{entry server} of $\bar{P}$. 
Informally, Claim~\ref{claim: entrynext} states that once 
$\bar{P}$ reaches an entry server, 
$\bar{P}$ starts to traverse $P^*$ in the reverse direction. 
On the other hand, for any exit edge $\overrightarrow{r, s}$ of $\bar{P}$, 
we call $r$ an \textbf{exist request} of $\bar{P}$. 
Because $\bar{P}$ is a path, every vertex appears at most once in $\bar{P}$. 
Thus, once $\bar{P}$ reaches an entry server, 
$\bar{P}$ starts to traverses $P^*$ in the reverse direction until 
$\bar{P}$ reaches an exist request and leaves $P^*$. 
Note that because $\bar{P}$ is an $M^{post}_{w+1}$-augmenting path, 
$\ter(\bar{P})$ is not in $P^*$. 
Thus, there must exist an exist edge after every entry edge.

Let $s^{X}_k(\bar{P})$ be the $k$th entry server of $\bar{P}$, 
and $r^{X}_k(\bar{P})$ be the $k$th exist request of $\bar{P}$. 
Let $cr(\bar{P})$ be the number of entry servers of $\bar{P}$ (or informally, 
the number of times that $\bar{P}$ crosses $P^*$). 
When the $M^{post}_{w+1}$-augmenting path $\bar{P}$ is clear from the context, 
we may simply write $s^X_k$, $r^X_k$, and $cr$. 
$\bar{P}$ can then be written in the form of \linebreak
$\bar{P} = r_i \cdots s^{X}_1 \cdots r^{X}_1 \cdots s^{X}_2 \cdots r^{X}_2 \cdots 
s^{X}_{cr} \cdots r^{X}_{cr} \cdots \ter(\bar{P})$. 
We can then divide $\bar{P}$ into subpaths at entry servers and 
exist requests. Specifically, $\bar{P}$ is divided into 
$H(\bar{P}), B(\bar{P}, 1), W(\bar{P}, 1), B(\bar{P}, 2), W(\bar{P}, 2), \cdots, 
B(\bar{P}, cr-1), W(\bar{P}, cr-1)$, $B(\bar{P}, cr)$, and $T(\bar{P})$, where 
\begin{enumerate}
\item $H(\bar{P}) = r_i \cdots s^{X}_1$ is called the \textbf{head} of $\bar{P}$,
\item $B(\bar{P}, k) = s^{X}_{k} \cdots r^{X}_{k}$ 
      $(1 \leq k \leq cr)$ is called a \textbf{back} of $\bar{P}$, 
\item $W(\bar{P}, k) = r^{X}_{k} \cdots s^{X}_{k+1}$ 
      $(1\leq k \leq cr - 1)$ is called a \textbf{wing} of $\bar{P}$, and
\item $T(\bar{P}) = r^{X}_{cr} \cdots \ter(\bar{P})$  is called the \textbf{tail} of $\bar{P}$.      
\end{enumerate}

\begin{figure}[t]
\caption{An example of the partition of $P$ and the construction of $P'$ where 
$cr = 3$ and $\sgn(s_1^X, r_{cr}^X) = 1$.}
\label{fig: notation}
\begin{center} 
\includegraphics[width=16cm]{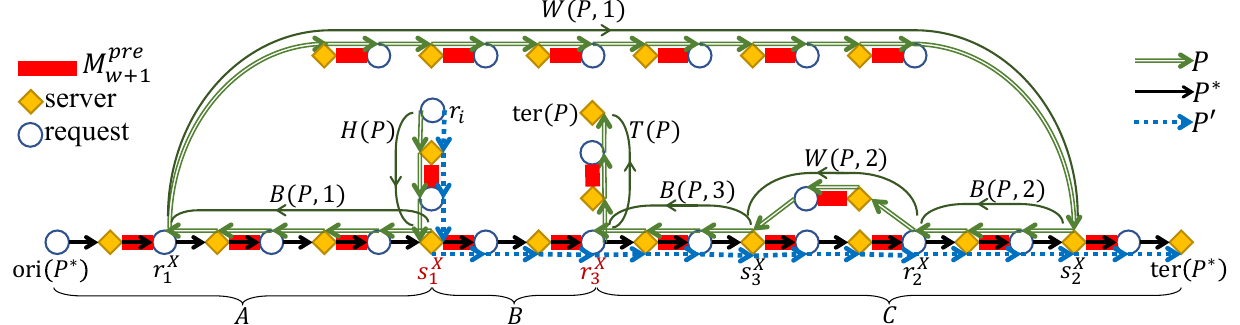} 
\end{center}
\end{figure}

For every path $Q = v_1v_2\cdots v_{\ell}$, 
denote by $\rev(Q)$ the reverse of $Q$, i.e., $\rev(Q) = v_{\ell}v_{\ell-1}\cdots v_1$. 
For every path $Q$, denote by $uQv$ the the subpath of $Q$ that originates  
at $u$ and terminates at $v$. 
Claims~\ref{claim: exist} and \ref{claim: entrynext} imply that 
every back of $\bar{P}$ traverses a subpath of $P^*$ backward. 
Specifically, we have the following simple fact.
\begin{fact}\label{fact: back}
Let $s\bar{P}r$ be any back of $\bar{P}$. 
Then $\rev(s\bar{P}r) = rP^*s$. 
\end{fact}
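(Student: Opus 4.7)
}

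The plan is to prove Fact~\ref{fact: back} by induction along the back. Writing $B(\bar P, k) = v_0 v_1 \cdots v_\ell$ with $v_0 = s^X_k$ and $v_\ell = r^X_k$, and writing the subpath $rP^*s$ of $P^*$ as $u_0 u_1 \cdots u_m$ with $u_0 = r^X_k$ and $u_m = s^X_k$, I will show by induction on $j$ that $v_j = u_{m-j}$ for every $0 \le j \le \ell$. This simultaneously forces $\ell = m$ and yields the vertex-sequence equality $\rev(B(\bar P, k)) = u_0 u_1 \cdots u_m = rP^*s$ that the fact asserts.

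The base cases are $j = 0$ (trivially $v_0 = s^X_k = u_m$) and $j = 1$ (a direct application of Claim~\ref{claim: entrynext} to the entry edge terminating at $s^X_k$, which identifies the next vertex $v_1$ of $\bar P$ as the $P^*$-predecessor of $s^X_k$, namely $u_{m-1}$). For the inductive step with $2 \le j \le \ell$, I will first show that the edge $(v_{j-1}, v_j)$ of $\bar P$ is actually an edge of $P^*$, and then use the $P^*$-neighbors of $v_{j-1} = u_{m-j+1}$, which are exactly $u_{m-j}$ and $u_{m-j+2}$, together with the simplicity of $\bar P$ and the inductive identification $v_{j-2} = u_{m-j+2}$, to conclude $v_j = u_{m-j}$.

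The main obstacle is the first half of the inductive step: showing $(v_{j-1}, v_j) \in E(P^*)$. I plan to split into cases on whether $v_{j-1}$ is a server or a request. If $v_{j-1}$ is a server, then the edge $\overrightarrow{v_{j-1}, v_j}$ goes server-to-request, so Claim~\ref{claim: exist} forbids it from being an exit edge; combining this with the inductive hypothesis that $v_{j-1}$ is a vertex of $P^*$ and the definition of exit edge forces the edge to lie in $E(P^*)$. If $v_{j-1}$ is a request, the key point is that $v_{j-1}$ cannot be an exit request: by construction of the partition, $r^X_k$ is the first exit request of $\bar P$ after $s^X_k$, and since $v_{j-1}$ precedes $v_\ell = r^X_k$ strictly in the back, $v_{j-1}$ is not an exit request of $\bar P$ at all; therefore $\overrightarrow{v_{j-1}, v_j}$ is not an exit edge, and again $v_{j-1}$ being a vertex of $P^*$ forces $(v_{j-1}, v_j) \in E(P^*)$. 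Once this is in hand, the unique unvisited $P^*$-neighbor of $v_{j-1}$ is $u_{m-j}$, the induction closes, and Fact~\ref{fact: back} follows.
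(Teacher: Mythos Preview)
Your plan is correct and is exactly the formalization of the paper's own argument for Fact~\ref{fact: back}. The paper does not give a standalone proof; it only says that Claims~\ref{claim: exist} and~\ref{claim: entrynext} together with the simplicity of $\bar P$ imply that once $\bar P$ reaches an entry server it traverses $P^*$ in reverse until it hits an exit request, and then records this as the ``simple fact''. Your induction on $j$, with Claim~\ref{claim: entrynext} supplying the base step and Claim~\ref{claim: exist} plus the interior-of-the-back argument ruling out exit edges in the inductive step, is precisely that reasoning spelled out.
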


\paragraph*{Construction of $P'$ and $P''$.}
$P'$ is the concatenation of $H(P)$ and the subpath of $P^*$ that originates 
at $s^{X}_1(P)$ and terminates at $\ter(P^*)$, 
and $P''$ is the concatenation of $H(\widetilde{P})$ and the subpath of $P^*$ 
that originates at $s^{X}_1(\widetilde{P})$ and terminates at $\ter(P^*)$.\footnote{If 
$s^{X}_1(P) = \ter(P^*)$, then $P' = H(P)$. 
Similarly, if $s^{X}_1(\widetilde{P}) = \ter(P^*)$, then $P'' = H(\widetilde{P})$.} 
In other words, $P'$ (respectively, $P''$) first traverses $P$ (respectively, 
$\widetilde{P}$) 
until the first entry server $s^{X}_1(P)$ (respectively, $s^X_1(\widetilde{P})$) 
is met, after which $P'$ (respectively, $P''$) traverses $P^*$. 
See Figs.~\ref{fig: notation} and \ref{fig: notation2} for examples. 
Clearly, $P'$ and $P''$ are $r_i$'s $M^{pre}_{w+1}$-augmenting paths.   

\subsection{Lower Bounds for Backs and Wings}
Observe that while $\varphi(Q)$ is originally defined for any augmenting path $Q$, 
the definition can be extended to any (sub)path $uQv$ that alternates between 
requests and servers:
\[
\varphi(uQv) = \sum_{\overrightarrow{r,s} \in \overrightarrow{E}(uQv)}
                           {\gamma D(r,s)}
          - \sum_{\overrightarrow{s,r} \in \overrightarrow{E}(uQv)}
                           {D(r,s)}.
\]
If $Q$ is a virtual augmenting path terminating at $\widetilde{s}_p$, 
for any subpath $uQ\widetilde{s}_p$ and any time $t \geq a(r_p)$, 
we have 
\[
\varphi_t(uQ\widetilde{s}_p) = \gamma D_t(r_p, \widetilde{s}_p) + \sum_{\overrightarrow{r,s} \in 
                                                          \overrightarrow{E}(uQ\widetilde{s}_p)}
                           {\gamma D(r,s)}
          - \sum_{\overrightarrow{s,r} \in \overrightarrow{E}(uQ\widetilde{s}_p)}
                           {D(r,s)}.
\]
Moreover, if a path $Q$ does not contain any edge, 
then $\varphi(Q) = 0$. By the above definition, we then have  
\[
\varphi(P) = \varphi(H(P)) + \varphi(T(P)) 
         + \sum_{k = 1}^{cr(P)}{\varphi(B(P,k))} 
         + \sum_{k = 1}^{cr(P)-1}{\varphi(W(P,k))}
\]
and
\[
\varphi_{t_{w+1}}(\widetilde{P}) = \varphi(H(\widetilde{P})) + \varphi_{t_{w+1}}(T(\widetilde{P})) 
         + \sum_{k = 1}^{cr(\widetilde{P})}{\varphi(B(\widetilde{P},k))} 
         + \sum_{k = 1}^{cr(\widetilde{P})-1}{\varphi(W(\widetilde{P},k))}.
\]

Let $v^*_h$ be the $h$th vertex in $P^*$. 
Let $|P^*|$ be the number of vertices in $P^*$.
Thus, $P^* = v^*_1v^*_2v^*_3\cdots v^*_{|P^*|}$. 
Moreover, for any vertex $v$ in $P^*$, let $\pi(v)$ be $v$'s order in $P^*$. 
In other words, $\pi(v^*_h) = h$ for any $h \in \{1, 2, \cdots, |P^*|\}$.
For any two vertices $u$ and $v$ in $P^*$, 
define
\[
P^*(u, v) = 
\begin{cases}
&uP^*v, \text{ if } \pi(u) < \pi(v)\\
&vP^*u, \text{ if } \pi(u) > \pi(v)
\end{cases}
\] 
and
\[
\sgn(u, v) = 
\begin{cases}
&1, \text{ if } \pi(u) < \pi(v)\\
&-1, \text{ if } \pi(u) > \pi(v).
\end{cases}
\] 
Thus, $P^*(u, v)$ gives the subpath of $P^*$ between $u$ and $v$ 
in the \textit{correct} direction, and $\sgn(u,v)$ outputs 1 if $u$ precedes
$v$ in $P^*$ and outputs $-1$ otherwise. 

The next lemma lower bounds the $\gamma$-net-cost of backs and wings.
\begin{lemma}\label{lemma: 230}
Let $\bar{P}$ be any $M^{post}_{w+1}$-augmenting path of $r_i$. Let $cr = cr(\bar{P})$. Then
\[
\sum_{k = 1}^{cr}{\varphi(B(\bar{P},k))} 
         + \sum_{k = 1}^{cr-1}{\varphi(W(\bar{P},k))} \geq 
         \sgn(s^X_1,r^X_{cr})\varphi(P^*(s^X_1,r^X_{cr})).         
\]
\end{lemma}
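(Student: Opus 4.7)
The plan is to split the target inequality into three independent ingredients. Write $Q_k := r_k^X P^* s_k^X$ and $\sgn_k := \sgn(r_k^X, s_{k+1}^X)$. I will establish (i) a per-back bound $\varphi(B(\bar{P},k)) \geq -\varphi(Q_k)$, (ii) a per-wing bound $\varphi(W(\bar{P},k)) \geq \sgn_k\,\varphi(P^*(r_k^X, s_{k+1}^X))$, and (iii) the purely algebraic identity
\[
\sum_{k=1}^{cr-1} \sgn_k\,\varphi(P^*(r_k^X, s_{k+1}^X)) = \sum_{k=1}^{cr} \varphi(Q_k) + \sgn(s_1^X, r_{cr}^X)\,\varphi(P^*(s_1^X, r_{cr}^X)).
\]
Chaining (i), (ii), and (iii) yields the lemma.

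For (i), Fact~\ref{fact: back} gives $\rev(B(\bar{P},k)) = Q_k$, and reversing a request-server alternating path swaps every $\overrightarrow{r,s}$ with $\overrightarrow{s,r}$, which gives $\varphi(B(\bar{P},k)) + \varphi(Q_k) = (\gamma-1)\,D(Q_k) \geq 0$. For (ii), I split on the relative $P^*$-positions of the wing's endpoints. In the aligned case $\pi(r_k^X) < \pi(s_{k+1}^X)$, I splice the wing into $P^*$ to form $P^{**}_k := \ori(P^*)\,P^*\,r_k^X + W(\bar{P},k) + s_{k+1}^X\,P^*\,\ter(P^*)$. Alternation at both splice points holds because any exit or entry edge lies outside $E(P^*)$, so it inherits its $M^{pre}_{w+1}$-status from its $M^{post}_{w+1}$-status via $M^{post}_{w+1} = M^{pre}_{w+1} \oplus E(P^*)$; combined with Fact~\ref{fact: s} applied to $\bar{P}$, this gives non-$M^{pre}_{w+1}$ membership, while the $P^*$-edge on the other side of the splice is in $M^{pre}_{w+1}$. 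Simplicity is immediate because the wing's interior avoids $P^*$. So $P^{**}_k$ is a real $M^{pre}_{w+1}$-augmenting path of $r^*$, and optimality of $P^*$ yields $\varphi(W(\bar{P},k)) \geq \varphi(r_k^X P^* s_{k+1}^X) = \sgn_k\,\varphi(P^*(r_k^X, s_{k+1}^X))$. In the anti-aligned case $\pi(s_{k+1}^X) < \pi(r_k^X)$, the analogous splicing would place two consecutive non-$M^{pre}_{w+1}$ edges at $s_{k+1}^X$, so I instead form the closed walk $C_k := W(\bar{P},k) \cup (s_{k+1}^X P^* r_k^X)$. The same membership analysis shows $C_k$ is a simple alternating cycle relative to $M^{pre}_{w+1}$, and the dual-variable telescoping from the proof of Lemma~\ref{lemma: realgeq0} (every cycle vertex contributes $z(v)$ once through its matching edge and once through its non-matching edge, with the two sums cancelling under Invariants~\eqref{eq: dualrelaxed} and \eqref{eq: dualtight}) gives $\varphi(C_k) \geq 0$, which rearranges to the same per-wing inequality.

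For (iii), set $F(v) := \varphi(\ori(P^*)\,P^*\,v)$ for each vertex $v$ of $P^*$; additivity of $\varphi$ along $P^*$ gives $\sgn(u,v)\,\varphi(P^*(u,v)) = F(v) - F(u)$ for any two vertices $u, v$ on $P^*$. Substituting, both sides of the identity collapse to $\sum_{k=1}^{cr-1}(F(s_{k+1}^X) - F(r_k^X))$ after the interior $F(s_k^X)$ and $F(r_k^X)$ terms cancel in pairs against the boundary terms $F(s_1^X)$ and $F(r_{cr}^X)$ introduced by $\sgn(s_1^X, r_{cr}^X)\,\varphi(P^*(s_1^X, r_{cr}^X))$. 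The main obstacle I anticipate is the alternation check in (ii), especially verifying that the anti-aligned closed walk is indeed alternating; this requires simultaneously applying Fact~\ref{fact: s} to $P^*$ and $\bar{P}$ and correctly tracking whether each back, exit, and entry edge lies in $M^{pre}_{w+1}$.
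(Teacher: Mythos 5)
Your proof is correct and follows essentially the same approach as the paper: the per-back bound via the reversal identity (paper's Claim~\ref{claim: backlb}), the per-wing bound split into an aligned case using optimality of $P^*$ (Claim~\ref{claim: fwlb}) and an anti-aligned case using an alternating-cycle dual-feasibility argument (Claim~\ref{claim: bwlb} via Lemma~\ref{lemma: diffM2}), and a telescoping sum over $P^*$ (Claim~\ref{claim: tras}). Your potential-function reformulation $F(v) = \varphi(\ori(P^*)P^*v)$ in step (iii) packages the paper's six-case transitivity claim more compactly, but the underlying decomposition and all three key ingredients coincide.
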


\begin{figure}[t]
\caption{An example of the partition of $P$ and the construction of $P'$ 
where $cr = 2$ and $\sgn(s_1^X, r_{cr}^X) = -1$.}
\label{fig: notation2}
\begin{center} 
\includegraphics[width=16cm]{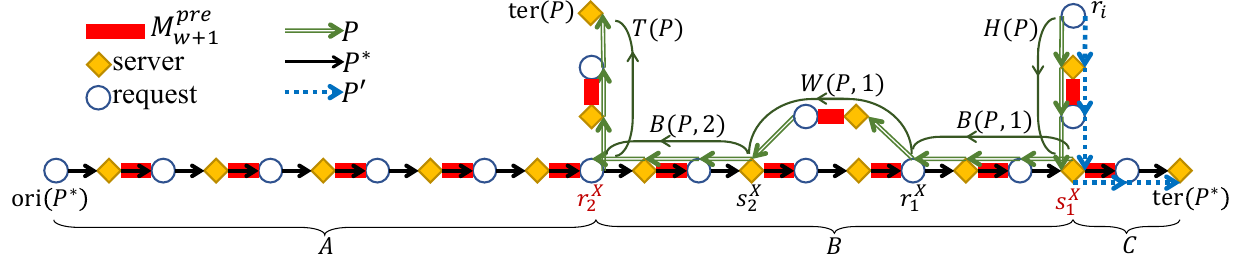} 
\end{center}
\end{figure}

To prove Lemma~\ref{lemma: 230}, we first prove the following 
transitive property. 
\begin{claim}\label{claim: tras}
Let $v_1$, $v_2$, and $v_3$ be any three distinct vertices in $P^*$. 
Then
\[
\sgn(v_1,v_2)\varphi(P^*(v_1,v_2))+\sgn(v_2,v_3)\varphi(P^*(v_2,v_3)) 
= \sgn(v_1,v_3)\varphi(P^*(v_1,v_3)).
\] 
\end{claim}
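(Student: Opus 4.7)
The plan is to exhibit a potential function on the vertices of $P^*$ whose differences exactly compute $\sgn(u,v)\varphi(P^*(u,v))$. Once this is in hand, the claim reduces to a one-line telescoping identity.

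Concretely, I would define a real-valued function $g$ on the vertices of $P^*$ by $g(v^*_1) = 0$ and $g(v^*_h) = \varphi(v^*_1 P^* v^*_h)$ for $h \geq 2$. Because $P^*$ is a directed alternating path with a fixed orientation and $\varphi$ is a signed sum over the directed edges of a (sub)path, we have the additivity
\[
\varphi(v^*_1 P^* v) = \varphi(v^*_1 P^* u) + \varphi(u P^* v) \quad \text{whenever } \pi(u) < \pi(v),
\]
so $\varphi(uP^*v) = g(v) - g(u)$ in that case.

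Next I would verify the clean identity $\sgn(u,v)\varphi(P^*(u,v)) = g(v) - g(u)$ for any two distinct vertices $u,v$ of $P^*$. If $\pi(u) < \pi(v)$, then $\sgn(u,v) = +1$ and $P^*(u,v) = uP^*v$, so this is just the additivity above. If $\pi(u) > \pi(v)$, then $\sgn(u,v) = -1$ and $P^*(u,v) = vP^*u$, whose $\gamma$-net-cost is $g(u) - g(v)$ by the same additivity; multiplying by $-1$ yields $g(v) - g(u)$.

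With the identity in hand, the claim is immediate by telescoping:
\[
\sgn(v_1,v_2)\varphi(P^*(v_1,v_2)) + \sgn(v_2,v_3)\varphi(P^*(v_2,v_3)) = (g(v_2) - g(v_1)) + (g(v_3) - g(v_2)) = g(v_3) - g(v_1),
\]
and the right-hand side equals $\sgn(v_1,v_3)\varphi(P^*(v_1,v_3))$ by one more application of the identity. The only point requiring a moment's care is the additivity of $\varphi$ along concatenated subpaths of $P^*$; but since the directed edges of $P^*(u,v)$ (for $\pi(u) < \pi(v)$) are precisely the edges of $\overrightarrow{E}(P^*)$ lying strictly between $u$ and $v$ with directions inherited from $P^*$, this additivity is essentially definitional. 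I therefore do not expect any substantive obstacle in this claim; its role is purely to package the telescoping needed later in the proof of Lemma~\ref{lemma: 230}.
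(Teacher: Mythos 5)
Your proof is correct, and it takes a cleaner route than the paper. The paper proves this claim by a brute-force enumeration of all six possible orderings of $\pi(v_1), \pi(v_2), \pi(v_3)$, verifying the identity directly in each case using the additivity of $\varphi$ along concatenated subpaths of $P^*$. You instead introduce the potential function $g(v) = \varphi(v^*_1 P^* v)$ (with $g(v^*_1)=0$), observe that $\sgn(u,v)\varphi(P^*(u,v)) = g(v) - g(u)$ uniformly in both sign cases, and then the claimed transitivity is a one-line telescoping $\bigl(g(v_2)-g(v_1)\bigr) + \bigl(g(v_3)-g(v_2)\bigr) = g(v_3)-g(v_1)$. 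Both arguments ultimately rest on the same additivity fact, but your packaging makes the cocycle structure explicit and eliminates the case analysis entirely; the paper's approach, while longer, is perhaps more self-contained in that it exhibits each ordering concretely without introducing an auxiliary function.
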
 
\begin{proof}
We divide the proof into cases according to the ordering of $\pi(v_1)$, $\pi(v_2)$, and $\pi(v_3)$.

\paragraph{Case 1: $\pi(v_1) < \pi(v_2) < \pi(v_3)$:} 
In this case, 
\begin{align*}
&\sgn(v_1,v_2)\varphi(P^*(v_1,v_2))+\sgn(v_2,v_3)\varphi(P^*(v_2,v_3))\\ 
&= \varphi(P^*(v_1,v_2)) + \varphi(P^*(v_2,v_3))
= \varphi(P^*(v_1,v_3)) = \sgn(v_1,v_3)\varphi(P^*(v_1,v_3)).
\end{align*}

\paragraph{Case 2: $\pi(v_1) < \pi(v_3) < \pi(v_2) $:} 
In this case, 
\begin{align*}
&\sgn(v_1,v_2)\varphi(P^*(v_1,v_2))+\sgn(v_2,v_3)\varphi(P^*(v_2,v_3))\\ 
&= \varphi(P^*(v_1,v_2)) - \varphi(P^*(v_2,v_3))
= \varphi(P^*(v_1,v_3)) = \sgn(v_1,v_3)\varphi(P^*(v_1,v_3)).
\end{align*}

\paragraph{Case 3: $\pi(v_3) < \pi(v_1) < \pi(v_2) $:} 
In this case, 
\begin{align*}
&\sgn(v_1,v_2)\varphi(P^*(v_1,v_2))+\sgn(v_2,v_3)\varphi(P^*(v_2,v_3))\\ 
&= \varphi(P^*(v_1,v_2)) - \varphi(P^*(v_2,v_3))
= -\varphi(P^*(v_1,v_3)) = \sgn(v_1,v_3)\varphi(P^*(v_1,v_3)).
\end{align*}

\paragraph{Case 4: $\pi(v_2) < \pi(v_1) < \pi(v_3)$:} 
In this case, 
\begin{align*}
&\sgn(v_1,v_2)\varphi(P^*(v_1,v_2))+\sgn(v_2,v_3)\varphi(P^*(v_2,v_3)) \\
&= -\varphi(P^*(v_1,v_2)) +\varphi(P^*(v_2,v_3))
= \varphi(P^*(v_1,v_3)) = \sgn(v_1,v_3)\varphi(P^*(v_1,v_3)).
\end{align*}

\paragraph{Case 5: $\pi(v_2) < \pi(v_3) < \pi(v_1) $:} 
In this case, 
\begin{align*}
&\sgn(v_1,v_2)\varphi(P^*(v_1,v_2))+\sgn(v_2,v_3)\varphi(P^*(v_2,v_3)) \\
&= -\varphi(P^*(v_1,v_2)) +\varphi(P^*(v_2,v_3))
= -\varphi(P^*(v_1,v_3)) = \sgn(v_1,v_3)\varphi(P^*(v_1,v_3)).
\end{align*}

\paragraph{Case 6: $\pi(v_3) < \pi(v_2) < \pi(v_1) $:} 
In this case, 
\begin{align*}
&\sgn(v_1,v_2)\varphi(P^*(v_1,v_2))+\sgn(v_2,v_3)\varphi(P^*(v_2,v_3)) \\
&= -\varphi(P^*(v_1,v_2)) -\varphi(P^*(v_2,v_3))
= -\varphi(P^*(v_1,v_3)) = \sgn(v_1,v_3)\varphi(P^*(v_1,v_3)).
\end{align*}
\end{proof}

Next, we lower bound $\varphi(B(\bar{P}, k))$. 
\begin{claim} \label{claim: backlb}
Let $s\bar{P}r$ be any back of $\bar{P}$. 
If $\gamma \geq 1$, then $\varphi_{\gamma}(s\bar{P}r) \geq \sgn(s,r)\varphi_{\gamma}(P^*(s,r))$. 
\end{claim}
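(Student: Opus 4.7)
The plan is to exploit the simple observation that $s\bar{P}r$ and $P^*(s,r)$ share exactly the same underlying undirected edges, but traverse them in opposite directions, and then to show that reversing direction changes the $\gamma$-net-cost in a controlled way.

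First I would unpack what Fact~\ref{fact: back} buys us: $\rev(s\bar{P}r) = rP^*s$. In particular, this forces $\pi(r) < \pi(s)$ in $P^*$, so $\sgn(s,r) = -1$ and $P^*(s,r) = rP^*s$. Thus proving the claim is equivalent to showing
\[
\varphi_{\gamma}(s\bar{P}r) \;\geq\; -\,\varphi_{\gamma}\bigl(\rev(s\bar{P}r)\bigr).
\]

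The key lemma I would state and verify is the following reversal identity: for any (sub)path $Q$ that alternates between requests and servers,
\[
\varphi_{\gamma}(Q) + \varphi_{\gamma}(\rev(Q)) \;=\; (\gamma-1)\sum_{(u,v) \in E(Q)} D(u,v).
\]
This holds because reversing the direction of $Q$ turns every directed edge $\overrightarrow{r,s}$ into $\overrightarrow{s,r}$ and vice versa; hence, if we let $A$ (resp.\ $B$) denote the total $D$-length of request-to-server (resp.\ server-to-request) edges in $Q$, then $\varphi_{\gamma}(Q) = \gamma A - B$ and $\varphi_{\gamma}(\rev(Q)) = \gamma B - A$, and the two sum to $(\gamma-1)(A+B)$. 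Since $\gamma \geq 1$, this sum is nonnegative.

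Applying the identity to $Q = s\bar{P}r$ and using $\rev(s\bar{P}r) = P^*(s,r)$ yields
\[
\varphi_{\gamma}(s\bar{P}r) \;\geq\; -\,\varphi_{\gamma}\bigl(P^*(s,r)\bigr) \;=\; \sgn(s,r)\,\varphi_{\gamma}\bigl(P^*(s,r)\bigr),
\]
completing the proof. There is no genuine obstacle here; the only care needed is to confirm, via Fact~\ref{fact: back}, that the direction of $P^*(s,r)$ really is the reverse of the back (so that $\sgn(s,r) = -1$), and to double-check that $s\bar{P}r$ is an alternating path (which it is, being a subpath of the $M^{post}_{w+1}$-augmenting path $\bar{P}$) so that the reversal identity applies.
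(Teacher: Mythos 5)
Your proposal is correct and takes essentially the same route as the paper: both use Fact~\ref{fact: back} to reduce the claim to showing $\varphi_{\gamma}(s\bar{P}r) + \varphi_{\gamma}(\rev(s\bar{P}r)) \geq 0$, and both exploit that reversal swaps the request-to-server and server-to-request edges. Your exact reversal identity $\varphi_{\gamma}(Q) + \varphi_{\gamma}(\rev(Q)) = (\gamma-1)\sum_{(u,v)\in E(Q)} D(u,v)$ is just a slightly more explicit packaging of the termwise inequality $\gamma D \geq D$ that the paper applies.
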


\begin{proof}
By Fact~\ref{fact: back}, it suffices to prove $\varphi_{\gamma}(s\bar{P}r) \geq -\varphi_{\gamma}(rP^*s)$ 
or equivalently, $-\varphi_{\gamma}(s\bar{P}r) \leq \varphi_{\gamma}(rP^*s)$.
\begin{align*}
-\varphi_{\gamma}(s\bar{P}r) 
           &= -\sum_{\overrightarrow{r_p, s_q} \in \overrightarrow{E}(s\bar{P}r)}
                           {\gamma D(r_p,s_q)}
            +\sum_{\overrightarrow{s_q, r_p} \in \overrightarrow{E}(s\bar{P}r)}
                           {D(r_p, s_q)}\\
           &\leq -\sum_{\overrightarrow{r_p, s_q} \in \overrightarrow{E}(s\bar{P}r)}
                           {D(r_p,s_q)}
            +\sum_{\overrightarrow{s_q, r_p} \in \overrightarrow{E}(s\bar{P}r)}
                           {\gamma D(r_p, s_q)}\\
           &= -\sum_{\overrightarrow{s_q, r_p} \in \overrightarrow{E}(\rev(s\bar{P}r))}
                           {D(r_p,s_q)} 
              +\sum_{\overrightarrow{r_p, s_q} \in \overrightarrow{E}(\rev(s\bar{P}r))}
                           {\gamma D(r_p, s_q)} \\          
           &=\varphi_{\gamma}(\rev(s\bar{P}r)) = \varphi_{\gamma}(rP^*s),
\end{align*}
where the last equality is due to Fact~\ref{fact: back}.
\end{proof}

Next, we lower bound $\varphi(W(\bar{P}, k))$. 
\begin{claim}\label{claim: fwlb}
Let $r\bar{P}s$ be any wing of $\bar{P}$ such that $\sgn(r,s) = 1$. 
Then $\varphi(r\bar{P}s) \geq \sgn(r,s)\varphi(P^*(r,s))$. 
\end{claim}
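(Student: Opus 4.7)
The plan is to exploit the optimality of $P^*$ as $r^*$'s minimum real $M^{pre}_{w+1}$-augmenting path, where $r^* := \ori(P^*)$. Since $\sgn(r,s)=1$ means $r$ precedes $s$ on $P^*$, the forward subpath $P^*(r,s) = rP^*s$ is well-defined and shares its endpoints with the wing $A := r\bar{P}s$. I would therefore form the candidate path
\[
P'' \;:=\; \bigl(\ori(P^*)\,P^*\,r\bigr) \cdot A \cdot \bigl(s\,P^*\,\ter(P^*)\bigr),
\]
obtained from $P^*$ by replacing the subpath $rP^*s$ with $A$. The degenerate cases $r = \ori(P^*)$ or $s = \ter(P^*)$, in which one of the two outer pieces is empty, will cause no trouble.

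The bulk of the argument is then to verify that $P''$ is a genuine real $M^{pre}_{w+1}$-augmenting path of $r^*$. The three pieces meet only at the splice vertices $r$ and $s$: the two outer segments are disjoint subpaths of $P^*$ because $\pi(r) < \pi(s)$, and the interior of $A$ avoids $V(P^*)$ because between two consecutive crossings $r^X_k$ and $s^X_{k+1}$ the path $\bar{P}$ contains neither an entry nor an exit edge, so any intermediate vertex of $A$ lying on $P^*$ would produce an extra crossing and contradict the enumeration of entries and exits. Alternation is preserved at both splice points: by Claim~\ref{claim: exist} the first edge of $A$ is an exit edge $\overrightarrow{r,\cdot}$, hence absent from $M^{post}_{w+1}$ and, being outside $P^*$, also from $M^{pre}_{w+1}$, while the $P^*$-edge entering $r$ lies in $M^{pre}_{w+1}$; the symmetric check at $s$ uses Claim~\ref{claim: entry}. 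Finally, because a wing precedes the tail of $\bar{P}$, the subpath $A$ never contains the potential terminal MV-edge of a virtual $\bar{P}$, so every server appearing in $P''$ is real.

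Applying the optimality of $P^*$ among real $M^{pre}_{w+1}$-augmenting paths originating at $r^*$ then yields $\varphi(P^*) \le \varphi(P'')$. Additivity of $\varphi$ along the three pieces cancels the two common outer segments and leaves
\[
\varphi(rP^*s) \;\le\; \varphi(A) \;=\; \varphi(r\bar{P}s),
\]
which, since $\sgn(r,s)=1$ gives $P^*(r,s)=rP^*s$, is exactly the bound $\varphi(r\bar{P}s) \ge \sgn(r,s)\,\varphi(P^*(r,s))$ to be proved. The main obstacle will be the structural bookkeeping around the splice—in particular the disjointness of the interior of $A$ from $V(P^*)$ and the alternation at $r$ and $s$—which relies on a careful use of Claims~\ref{claim: entry}, \ref{claim: exist}, and~\ref{claim: entrynext}; everything else is direct substitution together with the optimality of $P^*$.
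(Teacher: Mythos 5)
Your proposal is correct and takes essentially the same route as the paper: you splice the wing $r\bar{P}s$ into $P^*$ in place of $rP^*s$ to get a new real $M^{pre}_{w+1}$-augmenting path of $\ori(P^*)$, invoke the optimality of $P^*$, and cancel the two common outer segments. You spell out the structural checks (disjointness of the wing's interior from $V(P^*)$, alternation at the splice vertices, absence of the MV edge) in more detail than the paper, which states them more briefly, but the underlying argument is identical.
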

\begin{proof}
Observe that edges in wings are not in $E(P^*)$. 
Thus, $r\bar{P}s$ alternates between edges in $M^{pre}_{w+1}$ 
and edges not in $M^{pre}_{w+1}$ (see $W(P,1)$ in 
Fig.~\ref{fig: notation} for an example). 
Because $\sgn(r,s) = 1$, 
replacing $P^*(r,s)$ with $r\bar{P}s$ in $P^*$ yields 
another $M^{pre}_{w+1}$-augmenting path $Q$ that originates at $\ori(P^*)$. 
Because $P^*$ is $\ori(P^*)$'s minimum $M^{pre}_{w+1}$-augmenting path, we have
$\varphi(Q) \geq \varphi(P^*)$, which implies 
$\varphi(r\bar{P}s) \geq \varphi(P^*(r,s))$. 
\end{proof}

For the case where $\sgn(r,s) = -1$, we need the following lemma, 
whose proof is similar to that of Lemma~\ref{lemma: diffM}.
\begin{lemma}\label{lemma: diffM2}
Let $M_1 \subseteq M^{OFF}$. Let $M_2$ be any matching that saturates the same set of 
vertices as $M_1$. Then $D(M_1) \leq \gamma D(M_2)$.
\end{lemma}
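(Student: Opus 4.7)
The plan is to mirror the proof of Lemma~\ref{lemma: diffM} almost verbatim, the only new wrinkle being that $M_2$ need not saturate all of $M^{OFF}$, only the same vertex set as $M_1$. The key observation is that the two uses of the dual invariants line up cleanly once we restrict attention to the common vertex set $V(M_1)=V(M_2)$.

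First I would apply Invariant~\eqref{eq: dualrelaxed} edgewise on $M_2$: for every $(r,s)\in M_2$, $\gamma D(r,s)\geq z(r)+z(s)$, so summing yields
\[
\gamma\, D(M_2)\;\geq\;\sum_{(r,s)\in M_2}\bigl(z(r)+z(s)\bigr).
\]
Since $M_2$ is a matching, every vertex saturated by $M_2$ is hit by exactly one edge, so the right-hand side equals $\sum_{v\in V(M_2)} z(v)$. By hypothesis $V(M_2)=V(M_1)$, so this in turn equals $\sum_{v\in V(M_1)} z(v)=\sum_{(r,s)\in M_1}\bigl(z(r)+z(s)\bigr)$ by the same matching-counting argument applied to $M_1$.

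Next I would use the tightness invariant on $M_1$. Because $M_1\subseteq M^{OFF}$, Invariant~\eqref{eq: dualtight} gives $z(r)+z(s)=D(r,s)$ for every $(r,s)\in M_1$, so
\[
\sum_{(r,s)\in M_1}\bigl(z(r)+z(s)\bigr)\;=\;\sum_{(r,s)\in M_1} D(r,s)\;=\;D(M_1).
\]
Chaining the three displayed relations yields $\gamma D(M_2)\geq D(M_1)$, which is the claim.

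There is no real obstacle: the proof is a direct adaptation of Lemma~\ref{lemma: diffM}, where the only subtlety is justifying the middle equality $\sum_{(r,s)\in M_2}(z(r)+z(s))=\sum_{(r,s)\in M_1}(z(r)+z(s))$. In Lemma~\ref{lemma: diffM} this step relied on Invariant~\eqref{eq: dual0} to kill contributions from vertices of $M_2$ not in $V(M^{OFF})$; here, because we only require $V(M_1)=V(M_2)$ and make no assumption on how $M_2$ behaves outside $V(M_1)$, the equality is immediate and Invariant~\eqref{eq: dual0} is not even needed. The assumption $M_1\subseteq M^{OFF}$ is exactly what lets us invoke Invariant~\eqref{eq: dualtight} on $M_1$ in the last step.
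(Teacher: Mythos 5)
Your proof is correct and matches the paper's argument step for step: apply Invariant~\eqref{eq: dualrelaxed} to $M_2$, pass through the common vertex set to $M_1$, then apply Invariant~\eqref{eq: dualtight} to $M_1$. Your closing remark about Invariant~\eqref{eq: dual0} being unnecessary here (unlike in Lemma~\ref{lemma: diffM}) is an accurate and worthwhile clarification of the difference between the two lemmas.
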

\begin{proof}
By Invariant~\eqref{eq: dualrelaxed}, we have 
\[
\gamma \cdot D(M_2) = \sum_{(r,s)\in M_2}{\gamma \cdot D(r,s)} 
                    \geq \sum_{(r,s)\in M_2}{(z(r)+z(s))} = \sum_{(r,s)\in M_1}{(z(r)+z(s))},
\]
where the last inequality holds because $M_1$ and $M_2$ saturate the same set of vertices. 
Because $M_1 \subseteq M^{OFF}$ and by Invariant~\eqref{eq: dualtight}, 
we have
\[
\sum_{(r,s)\in M_1}{(z(r)+z(s))} = \sum_{(r,s)\in M_1}{D(r,s)} = D(M_1).
\]
\end{proof}

\begin{claim}\label{claim: bwlb}
Let $r\bar{P}s$ be any wing of $\bar{P}$ such that $\sgn(r,s) = -1$. 
Then $\varphi(r\bar{P}s) \geq \sgn(r,s)\varphi(P^*(r,s))$.
\end{claim}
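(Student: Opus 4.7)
The plan is to reduce the inequality to an application of Lemma~\ref{lemma: diffM2}. Since $\sgn(r,s) = -1$, we have $\pi(s) < \pi(r)$, hence $P^*(r,s) = sP^*r$, and the desired bound becomes $\varphi(r\bar{P}s) + \varphi(sP^*r) \geq 0$. The wing $r\bar{P}s$ meets $P^*$ only at its endpoints $r$ and $s$ (all of its interior vertices lie outside $P^*$ by the very definition of exit requests and entry servers), so concatenating it with $sP^*r$ produces a simple cycle $C$ on which to work.

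I would next partition $E(C)$ according to membership in $M^{pre}_{w+1}$: let $M_1 := M^{pre}_{w+1} \cap E(C)$ and $M_2 := E(C) \setminus M^{pre}_{w+1}$. On the $P^*$-segment, directed server-to-request edges lie in $M^{pre}_{w+1}$ while directed request-to-server edges do not, since $P^*$ is an $M^{pre}_{w+1}$-augmenting path. On the wing, edges avoid $E(P^*)$, so $M^{post}_{w+1} = M^{pre}_{w+1} \oplus E(P^*)$ forces a wing edge to lie in $M^{pre}_{w+1}$ iff it lies in $M^{post}_{w+1}$; combined with the facts that $\bar{P}$ is $M^{post}_{w+1}$-alternating and that both the first (exit) and last (entry) edges of the wing go request-to-server (Claims~\ref{claim: entry} and~\ref{claim: exist}), this shows that wing server-to-request edges lie in $M^{pre}_{w+1}$ and wing request-to-server edges do not. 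A short interlock check at the junctions $r$ and $s$—where the wing contributes a request-to-server edge (not in $M^{pre}_{w+1}$) and the $P^*$-segment contributes a server-to-request edge (in $M^{pre}_{w+1}$)—then confirms that the pattern alternates all the way around $C$, so both $M_1$ and $M_2$ are perfect matchings on $V(C)$.

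Since $M_1 \subseteq M^{pre}_{w+1}$ and $M_2$ saturates the same vertex set as $M_1$, Lemma~\ref{lemma: diffM2}—applied with $M^{OFF} = M^{pre}_{w+1}$ and the dual variables as they stand just before event $E_{w+1}$, when all invariants still hold—yields $D(M_1) \leq \gamma D(M_2)$. Unpacking the definitions of $\varphi$ on the wing and on $sP^*r$ identifies $W_{SR} \cup (M^{pre}_{w+1} \cap E(sP^*r)) = M_1$ and $W_{RS} \cup (M^{post}_{w+1} \cap E(sP^*r)) = M_2$, so
\[
\varphi(r\bar{P}s) + \varphi(sP^*r) = \gamma D(M_2) - D(M_1) \geq 0,
\]
as desired. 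The hard part will be the bookkeeping in the middle step: verifying that the $M^{pre}_{w+1}$-status of edges at the two junctions $r$ and $s$ interlocks with the wing and the $P^*$-segment in just the right way so that $M_1$ and $M_2$ really are perfect matchings on $V(C)$, rather than leaving some vertex doubly matched or unmatched. Once that is in place, Lemma~\ref{lemma: diffM2} supplies the quantitative bound for free, in contrast to Claim~\ref{claim: fwlb} (the $\sgn(r,s)=1$ case), where the optimality of $P^*$ had to be invoked directly.
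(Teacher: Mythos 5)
Your proposal is correct and follows essentially the same route as the paper: form the alternating cycle from the wing $r\bar{P}s$ and $P^*(r,s)=sP^*r$, split its edges into the $M^{pre}_{w+1}$-part $M_1$ and its complement $M_2$ (both perfect on the cycle's vertex set), apply Lemma~\ref{lemma: diffM2} to get $D(M_1)\le\gamma D(M_2)$, and read off $\varphi(r\bar{P}s)+\varphi(sP^*r)=\gamma D(M_2)-D(M_1)\ge 0$. The extra care you take at the two junction vertices and in pinning down the $M^{pre}_{w+1}$-status of wing edges via $M^{post}_{w+1}=M^{pre}_{w+1}\oplus E(P^*)$ is exactly what the paper compresses into ``form an alternating cycle,'' so the two arguments coincide.
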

\begin{proof}
We prove $\varphi(r\bar{P}s) + \varphi(P^*(r,s)) \geq 0$. 
Let 
\[
M_1 = 
\{(s_q,r_p) | 
              \overrightarrow{s_q,r_p} \in 
                  \overrightarrow{E}(r\bar{P}s)
                    \cup
                  \overrightarrow{E}(P^*(r,s))   
        \}
\]
and
\[
M_2 = 
\{(r_p,s_q) | 
              \overrightarrow{r_p,s_q} \in 
                  \overrightarrow{E}(r\bar{P}s)
                    \cup
                  \overrightarrow{E}(P^*(r,s))   
        \}.
\]
Observe that $M_1 \subseteq M^{pre}_{w+1}$. 
In addition, $M_1$ and $M_2$ saturate the same set of vertices because $r\bar{P}s$ and $P^*(r,s)$ 
form an alternating cycle (see $W(P,2)$ in Fig.~\ref{fig: notation} for an example). 
By Lemma~\ref{lemma: diffM2}, 
$D(M_1) \leq \gamma D(M_2)$.
Because $r\bar{P}s$ and $P^*(r,s)$ do not share edges, 
\[
\varphi(r\bar{P}s) + \varphi(P^*(r,s)) 
= \sum_{(r_p, s_q) \in M_2}{\gamma D(r_p, s_q)} 
 -\sum_{(s_q, r_p) \in M_1}{D(r_p, s_q)}
=\gamma D(M_2) - D(M_1) \geq 0.
\]
\end{proof}

\paragraph*{Proof of Lemma~\ref{lemma: 230}.}
By Claims~\ref{claim: backlb}, \ref{claim: fwlb}, and \ref{claim: bwlb}, we have 
\begin{align*}
    &\sum_{k = 1}^{cr}{\varphi(B(\bar{P},k))} 
   + \sum_{k = 1}^{cr-1}{\varphi(W(\bar{P},k))} 
=   \sum_{k = 1}^{cr}{\varphi(s^X_k\bar{P}r^X_k)} 
   + \sum_{k = 1}^{cr-1}{\varphi(r^X_k\bar{P}s^X_{k+1})} \\
\geq 
    &\sum_{k = 1}^{cr}{\sgn(s^X_k, r^X_k)\varphi(P^*(s^X_k, r^X_k))} 
   + \sum_{k = 1}^{cr-1}{\sgn(r^X_k, s^X_{k+1})\varphi(P^*(r^X_k, s^X_{k+1}))}
\stackrel{\text{Claim~\ref{claim: tras}}}{=}     \sgn(s^X_1, r^X_{cr})\varphi(P^*(s^X_1, r^X_{cr})).
\end{align*}
  
\subsection{Proof of Eq.~\eqref{eq: P'goal}}
In the following proof, $s^X_k, r^X_k$, and $cr$ refer to 
$s^X_k(P), r^X_k(P)$, and $cr(P)$, respectively. 
We first consider the case where $\sgn(s^X_1, r^X_{cr}) = 1$. 
Divide $P^*$ into $A, B$, and $C$ such that
$A = \ori(P^*)P^*s^X_{1}$, 
$B = s^X_{1}P^*r^X_{cr}$, 
and
$C = r^X_{cr}P^*\ter(P^*)$.
Observe that $\varphi(P^*) = \varphi(A)+\varphi(B)+\varphi(C)$ 
and $P'$ is the concatenation of $H(P)$, $B$, and $C$ 
(see Fig.~\ref{fig: notation} for an example). 
Moreover, the concatenation of $A$, $B$, and $T(P)$ is an $M^{pre}_{w+1}$-augmenting path that originates 
at $\ori(P^*)$. Because $P^*$ is $\ori(P^*)$'s real minimum $M^{pre}_{w+1}$-augmenting path, we then have 
\begin{align*}
\varphi(P^*) = \varphi(A)+\varphi(B)+\varphi(C) 
            & \leq \varphi(A)+\varphi(B)+\varphi(T(P))\\
\varphi(H(P))+\varphi(B)+\varphi(C) & \leq \varphi(H(P))+\varphi(B)+\varphi(T(P))\\
                        \varphi(P') &\leq \varphi(H(P))+\varphi(s^X_{1}P^*r^X_{cr})+\varphi(T(P))\\
                        \varphi(P') &\stackrel{\text{Lemma~\ref{lemma: 230}}}{\leq}    
                                      \varphi(H(P))+\sum_{k = 1}^{cr}{\varphi(B(P,k))} 
                                      + \sum_{k = 1}^{cr-1}{\varphi(W(P,k))}+\varphi(T(P))  \\
                        \varphi(P') &\leq \varphi(P).             
\end{align*}

Next, consider the case where $\sgn(s^X_1, r^X_{cr}) = -1$. 
Divide $P^*$ into $A, B$, and $C$ such that
$A = \ori(P^*)P^*r^X_{cr}$, 
$B = r^X_{cr}P^*s^X_{1}$, 
and
$C = s^X_{1}P^*\ter(P^*)$.
Observe that $\varphi(P^*) = \varphi(A)+\varphi(B)+\varphi(C)$ 
and $P'$ is the concatenation of $H(P)$ and $C$ 
(see Fig.~\ref{fig: notation2} for an example). 
Moreover, the concatenation of $A$ and $T(P)$ is an $M^{pre}_{w+1}$-augmenting path that originates 
at $\ori(P^*)$. Because $P^*$ is $\ori(P^*)$'s real minimum $M^{pre}_{w+1}$-augmenting path, we then have 
\begin{align*}
\varphi(P^*) = 
\varphi(A)   +\varphi(B)+\varphi(C) & \leq \varphi(A)+\varphi(T(P))\\
\varphi(H(P))+\varphi(B)+\varphi(C) & \leq \varphi(H(P))+\varphi(T(P))\\
\varphi(H(P))+\varphi(C) &\leq \varphi(H(P))-\varphi(B) +\varphi(T(P))\\
                  \varphi(P') &\leq \varphi(H(P))-\varphi(r^X_{cr}P^*s^X_{1}) +\varphi(T(P))\\
                  \varphi(P') &\stackrel{\text{Lemma~\ref{lemma: 230}}}{\leq}    
                                      \varphi(H(P))+\sum_{k = 1}^{cr}{\varphi(B(P,k))} 
                                      + \sum_{k = 1}^{cr-1}{\varphi(W(P,k))}+\varphi(T(P))  \\
                        \varphi(P') &\leq \varphi(P).             
\end{align*}

\subsection{Proof of Eq.~\eqref{eq: P''goal}}
In the following proof, $s^X_k, r^X_k$, and $cr$ refer to 
$s^X_k(\widetilde{P}), r^X_k(\widetilde{P})$, and $cr(\widetilde{P})$, respectively. 
The proof is similar to that of Eq.~\eqref{eq: P'goal}. 
The main difference is that we use the design of the VRM algorithm 
that when a request (e.g. $\ori(P^*) = r^*$) 
becomes ready, its virtual minimum $\gamma$-net-cost 
is at least its real minimum $\gamma$-net-cost.

We first consider the case where $\sgn(s^X_1, r^X_{cr}) = 1$. 
Divide $P^*$ into $A, B$, and $C$ such that
$A = \ori(P^*)P^*s^X_{1}$, 
$B = s^X_{1}P^*r^X_{cr}$, 
and
$C = r^X_{cr}P^*\ter(P^*)$.
Observe that $\varphi(P^*) = \varphi(A)+\varphi(B)+\varphi(C)$ 
and $P''$ is the concatenation of $H(P)$, $B$, and $C$. 
Let $\widetilde{Q}$ be the concatenation of $A$, $B$, and $T(\widetilde{P})$.  
Then $\widetilde{Q}$ is a virtual $M^{pre}_{w+1}$-augmenting path that originates at $\ori(P^*)$. 
Because $P^*$ is $\ori(P^*)$'s real minimum $M^{pre}_{w+1}$-augmenting path and 
$r^*$ becomes ready at time $t_{w+1}$, we then have $\varphi(P^*) \leq \varphi_{t_{w+1}}(\widetilde{Q})$ 
and thus 
\begin{align*}
\varphi(A)+\varphi(B)+\varphi(C) &= \varphi(P^*) \leq \varphi_{t_{w+1}}(\widetilde{Q})   
            = \varphi(A)+\varphi(B)+\varphi_{t_{w+1}}(T(\widetilde{P}))\\
\varphi(H(\widetilde{P}))+\varphi(B)+\varphi(C) & 
                    \leq \varphi(H(\widetilde{P}))+\varphi(B)+\varphi_{t_{w+1}}(T(\widetilde{P}))\\
      \varphi(P'') &\leq \varphi(H(\widetilde{P}))+\varphi(s^X_{1}P^*r^X_{cr})+\varphi_{t_{w+1}}(T(\widetilde{P}))\\
      \varphi(P'') &\stackrel{\text{Lemma~\ref{lemma: 230}}}{\leq}    
                                      \varphi(H(\widetilde{P}))+\sum_{k = 1}^{cr}{\varphi(B(\widetilde{P},k))} 
                                      + \sum_{k = 1}^{cr-1}{\varphi(W(\widetilde{P},k))}+\varphi_{t_{w+1}}(T(\widetilde{P}))  \\
      \varphi(P'') &\leq \varphi_{t_{w+1}}(\widetilde{P}).             
\end{align*}

Next, consider the case where $\sgn(s^X_1, r^X_{cr}) = -1$. 
Divide $P^*$ into $A, B$, and $C$ such that
$A = \ori(P^*)P^*r^X_{cr}$, 
$B = r^X_{cr}P^*s^X_{1}$, 
and
$C = s^X_{1}P^*\ter(P^*)$.
Observe that $\varphi(P^*) = \varphi(A)+\varphi(B)+\varphi(C)$ 
and $P''$ is the concatenation of $H(\widetilde{P})$ and $C$. 
Let $\widetilde{Q}$ be the the concatenation of $A$ and $T(\widetilde{P})$.
Then $\widetilde{Q}$ is an $M^{pre}_{w+1}$-augmenting path that originates 
at $\ori(P^*)$. Because $P^*$ is $\ori(P^*)$'s real minimum $M^{pre}_{w+1}$-augmenting path and $r^*$ becomes 
ready at time $t_{w+1}$, we then have $\varphi(P^*) \leq \varphi_{t_{w+1}}(\widetilde{Q})$ 
and thus 
\begin{align*}
\varphi(A)   +\varphi(B)+\varphi(C) & = \varphi(P^*) 
  \leq \varphi_{t_{w+1}}(\widetilde{Q}) = \varphi(A)+\varphi_{t_{w+1}}(T(\widetilde{P}))\\
\varphi(H(\widetilde{P}))+\varphi(B)+\varphi(C) 
& \leq \varphi(H(\widetilde{P}))+\varphi_{t_{w+1}}(T(\widetilde{P}))\\
\varphi(H(\widetilde{P}))+\varphi(C) 
&\leq \varphi(H(\widetilde{P}))-\varphi(B) +\varphi_{t_{w+1}}(T(\widetilde{P}))\\
\varphi(P'') &\leq \varphi(H(\widetilde{P}))-\varphi(r^X_{cr}P^*s^X_{1})+\varphi_{t_{w+1}}(T(\widetilde{P}))\\
\varphi(P'') &\stackrel{\text{Lemma~\ref{lemma: 230}}}{\leq}    
                                      \varphi(H(\widetilde{P}))+\sum_{k = 1}^{cr}{\varphi(B(\widetilde{P},k))} 
                                      + \sum_{k = 1}^{cr-1}{\varphi(W(\widetilde{P},k))}+\varphi_{t_{w+1}}(T(\widetilde{P}))  \\
                        \varphi(P'') &\leq \varphi_{t_{w+1}}(\widetilde{P}).             
\end{align*}

\end{document}